\newenvironment{proofof}[1]{\begin{proof}[Proof of Theorem {#1}]}{\end{proof}}
\begin{document}
\mainmatter
\title{Network Cournot Competition
}

\author{Melika Abolhassani\inst{1} \and MohammadHossein Bateni\inst{2} \and
MohammadTaghi Hajiaghayi \inst{1}\and Hamid Mahini \inst{1} \and Anshul Sawant \inst{1}}

\institute{University of Maryland, College Park, USA,\\
\and
Google Research, New York, USA}

\titlerunning{Network Cournot Competition}
\authorrunning{Abolhassani et al.}

\maketitle
\begin{abstract}
Cournot competition, introduced in 1838 by Antoine Augustin Cournot, 
is a fundamental economic model that represents firms competing in a single market of a homogeneous good.
Each firm tries to maximize its utility---naturally a function of the production cost as well as market price of the product---by deciding
on the amount of production.
This problem has been studied comprehensively in Economics and Game Theory; 
however, in today's dynamic and diverse economy, many firms often compete 
in more than one market simultaneously, i.e., each market might be shared 
among a subset of these firms. 
In this situation, a bipartite graph models the access restriction 
where firms are on one side, markets are on the other side, and edges 
demonstrate whether a firm has access to a market or not.
We call this game \emph{Network Cournot Competition} (NCC). 
Computation of equilibrium, taking into account a network of markets and 
firms and the different forms of cost and price functions, makes 
challenging and interesting new problems.
  
In this paper, we propose algorithms for finding pure Nash equilibria of NCC games 
in different situations.
First, we carefully design a potential function for NCC, when the price functions 
for markets are linear functions of the production in that market. This result 
lets us leverage optimization techniques for a single function rather than 
multiple utility functions of many firms. 
However, for nonlinear price functions, this approach is not feasible---there is
indeed no single potential function that captures the utilities of all firms for the case of nonlinear price functions. We model 
the problem as a nonlinear complementarity problem in this case, and design a 
polynomial-time algorithm that finds an equilibrium of the game for strongly 
convex cost functions and strongly monotone revenue functions. We also explore 
the class of price functions that ensures strong monotonicity of the revenue 
function, and show it consists of a broad class of functions. Moreover, we discuss the uniqueness of equilibria in both of these cases which means our algorithms find the unique equilibria of the games.
Last but not least, when the cost of production in one market is independent from the 
cost of production in other markets for all firms, the problem can be separated 
into several independent classical \emph{Cournot Oligopoly} problems in which 
the firms compete over a single market. We give the first combinatorial algorithm 
for this widely studied problem. Interestingly, our algorithm is much simpler 
and faster than previous optimization-based approaches.
\end{abstract}

\section{Introduction}

In this paper we study selling a utility with a distribution network, e.g., natural gas, water and electricity, in several  markets when the clearing price of each market is determined by its supply and demand. The distribution network fragments the market into different regional markets with their own prices. Therefore, the relations between suppliers and submarkets form a complex network~\cite{neuhoff2005network, jing1999spatial, pow02, pow05, D05, gas05, EU06}. For example, a market with access to only one supplier suffers a monopolistic price, while a market having access to multiple suppliers enjoys a lower price as a result of the price competition.

Antoine Augustin Cournot introduced the first model for studying the 
duopoly competition in 1838. He proposed a model where two individuals own different 
springs of water, and sell it independently. Each individual decides on the amount of water 
to supply, and then the aggregate water supply determines the market price through 
an inverse demand function. Cournot characterizes the unique equilibrium outcome of the market 
when both suppliers have the same marginal costs of production, and the inverse demand function 
is linear. He argued that in the unique equilibrium outcome, the market price is above the 
marginal cost.

Joseph  Bertrand~\citeyear{Bertrand1883} criticized the Cournot model,
where the strategy of each player is the quantity to supply, 
and in turn suggested to consider prices, rather than quantities, as strategies. 
In the Bertrand model each firm chooses a price for a homogeneous good, 
and the firm announcing the lowest price gets all the market share. 
Since the firm with the lowest price receives all the demand, each firm has incentive 
to price below the current market price unless the market price matches its cost. 
Therefore, in an equilibrium outcome of the Bertrand model, assuming all marginal costs 
are the same and there are at least two competitors in the market, the market price 
will be equal to the marginal cost.

The Cournot and Bertrand models are two basic tools for investigating the competitive market price, 
and have attracted much interest for modeling real markets; see, e.g., \cite{pow02,pow05,D05,gas05}. 
While these are two extreme models for analyzing the price competition, 
it is hard to say which one is essentially better than the other. 
In particular, the predictive power of each strongly depends on the nature of the market, 
and varies from application to application. 
For example, the Bertrand model explains the situation where firms literally set prices, 
e.g., the cellphone market,  the laptop market, and   the TV market.
On the other hand, Cournot's approach would be suitable for modeling markets
like those of crude oil, natural gas, and  electricity, where firms 
decide about quantities rather than prices.

There are several attempts to find equilibrium outcomes of the Cournot or Bertrand competitions 
in the oligopolistic setting, where a small number of firms compete in only one market; 
see, e.g., \cite{KS83,SV84,OP86,H90,H00,V01}. 
Nevertheless, it is not entirely clear what equilibrium outcomes of these games are 
when firms compete over more than one market.
%
%
In this paper, we investigate the problem of finding equilibrium outcomes 
of the Cournot competition in a network setting where there are several markets 
for a homogeneous good and each market is accessible to a subset of firms. 
%
\subsection{Example}
We start with the following warm-up example.
This is a basic example for the Cournot competition in the network setting. 
It consists of three scenarios. We assume firm $i \in \{A,B\}$ 
produces quantity $q_{ij}$ of the good in market $j\in\{1,2\}$. 
Let $\mathbf{q}$ be the vector of all quantities. The detailed computations are in the Appendix.
\begin{description}
\item[Scenario 1]
Consider the Cournot competition in an oligopolistic setting with two firms and one market 
(see Figure~\ref{fig1}). Let $p(\mathbf{q}) = 1 - q_{A1}-q_{B1}$ be the market price 
(the inverse demand function), and $c_i(\mathbf{q})= \frac{1}{2}q_{i1}^2$ be the cost of production 
for firm $i \in \{A,B\}$. 
The profit of a firm is what it gets by selling all the quantities of good it produces 
in all markets minus its cost of production.
Therefore, the profit of firm $i$ denoted by $\pi_i(\mathbf{q})$ is 
$q_{i1} (1-q_{A1}-q_{B1}) - \frac{1}{2}q_{i1}^2$. In a Nash equilibrium of the game, 
each firm maximizes its profit assuming its opponent does not change its strategy. 
Hence, the unique Nash equilibrium of the game can be found by solving the set of 
equations $\frac{\partial \pi_A}{\partial q_{A1}}=\frac{\partial \pi_B}{\partial q_{B1}}=0$. So $q_{A1}=q_{B1}=\frac{1}{4}$ is the unique Nash equilibrium 
where $p(\mathbf{q})= \frac{1}{2}$, and $\pi_A(\mathbf{q})=\pi_B(\mathbf{q}) = 0.9375$.
\begin{figure}
\centering
\tikzstyle{H-node}=[rectangle,draw=black,fill=white!30,inner sep=1.3mm]
\tikzstyle{B-node}=[circle,draw=blue,fill=blue!20,inner sep=2.5mm]
\tikzstyle{G-node}=[circle,draw=green,fill=green!30,inner sep=1.3mm]
\tikzstyle{R-node}=[rectangle,draw=red,fill=red!20,inner sep=2.6mm]
\tikzstyle{W-node}=[rectangle,draw=white,fill=white!30,inner sep=0.2mm]
\tikzstyle{test-node}=[circle,draw=black,fill=black,inner sep=.2mm]

\tikzstyle{bl0} = [draw=black, thick, dashed]   
\tikzstyle{b9} = [draw=black, thick]   
\tikzstyle{bl1} = [->, draw=black]   
\tikzstyle{bl2} = [draw=black!70,thick]   
\tikzstyle{bl3} = [draw=black,thick, dotted]   

\tikzstyle{br0} = [draw=brown, dashed]   
\tikzstyle{br1} = [->, draw=brown]   
\tikzstyle{br2} = [->, draw=brown,thick]   

\tikzstyle{red0} = [draw=red, thick, dashed]   
\tikzstyle{red1} = [draw=red]   
\tikzstyle{red2} = [draw=red,thick]   

\tikzstyle{gr0} = [draw=green, thick, dashed]   
\tikzstyle{gr1} = [draw=green]   
\tikzstyle{gr2} = [draw=green,thick]   
\tikzstyle{gr4} = [draw=green,semithick,rounded corners]   

\begin{tikzpicture}[scale=0.5][domain=0:8]

\draw (1,5) node[B-node,label=center:$A$,label=above:$\pi_A$\text{$=$}$0.0938$] (b_1) {};
\draw (5,5) node[B-node,label=center:$B$,label=above:$\pi_B$\text{$=$}$0.0938$] (b_2) {};
\draw (3,0) node[R-node,label=center:$1$, label=below:$p_1$\text{$=$}$\frac{1}{2}$] (h_1) {};

\draw[b9] (b_1)  to node [label=left:$q_{A1}$] {} (h_1) ;
\draw[b9] (b_2)  to node [label=right:$q_{B1}$] {} (h_1) ;
\draw (3, 8) node[W-node,label=center:First Scenario] (s1) {};

\draw (9,5) node[B-node,label=center:$A$,label=above:$\pi_A$\text{$=$}$0.0938$] (bb_1) {};
\draw (13,5) node[B-node,label=center:$B$,label=above:$\pi_B$\text{$=$}$0.0938$] (bb_2) {};
\draw (9,0) node[R-node,label=center:$1$, label=below:$p_1$\text{$=$}$\frac{1}{2}$] (hh_1) {};
\draw (13,0) node[R-node,label=center:$2$, label=below:$p_2$\text{$=$}$\frac{1}{2}$] (hh_2) {};

\draw[b9] (bb_1) to node [label=left:$q_{A1}$] {} (hh_1) ;
\draw[b9] (bb_1) -- (hh_2) ;
\draw[b9] (bb_2) -- (hh_1) ;
\draw[b9] (bb_2) to node [label=right:$q_{B2}$] {} (hh_2) ;
\draw (11, 8) node[W-node,label=center:Second Scenario] (s2) {};

\draw (9.8, 3) node[W-node,label=center:$q_{A2}$] (lab2) {};
\draw (12.2, 3) node[W-node,label=center:$q_{B1}$] (lab3) {};

\draw (17,5) node[B-node,label=center:$A$,label=above:$\pi_A$\text{$=$}$0.124$] (bbb_1) {};
\draw (21,5) node[B-node,label=center:$B$,label=above:$\pi_B$\text{$=$}$0.064$] (bbb_2) {};
\draw (17,0) node[R-node,label=center:$1$, label=below:$p_1$\text{$=$}$0.64$] (hhh_1) {};
\draw (21,0) node[R-node,label=center:$2$, label=below:$p_2$\text{$=$}$0.48$] (hhh_2) {};

\draw[b9] (bbb_1)  to node [label=left:$q_{A1}$] {} (hhh_1) ;
\draw[b9] (bbb_1)  to node [label=left:$q_{A2}$] {} (hhh_2) ;
\draw[b9] (bbb_2)  to node [label=right:$q_{B2}$] {} (hhh_2) ;
\draw (19, 8) node[W-node,label=center:Third Scenario] (s3) {};

\end{tikzpicture}
\caption{This figure represents the three scenarios of our example. 
Vector $\mathbf{q}=(\frac{1}{4}, \frac{1}{4})$ represents the unique equilibrium in
the first scenario. Vector $\mathbf{q}=(\frac{1}{8}, \frac{1}{8}, \frac{1}{8}, \frac{1}{8})$ 
is the unique equilibrium of the second scenario. Finally, 
Vector $\mathbf{q}=(0.18, 0.1, 0.16)$ is the unique equilibrium in the third scenario.}
\label{fig1}
\end{figure}
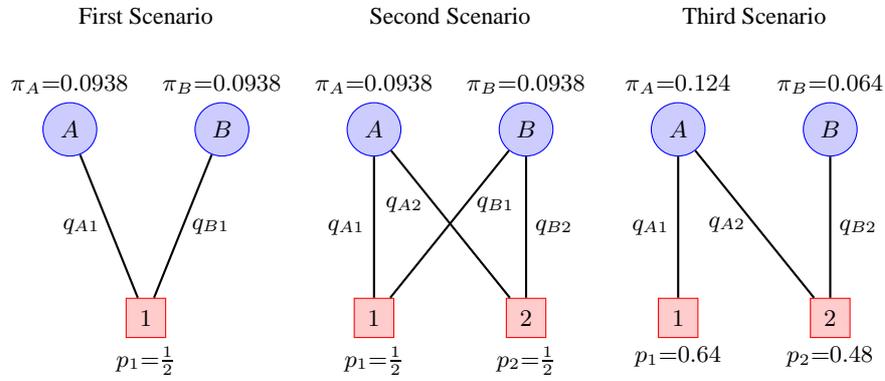

\item[Scenario 2]
We construct the second scenario by splitting the market in the previous scenario into 
two identical markets such that both firms have access to both markets (see Figure \ref{fig1}). 
Since the demand is divided between two identical markets,   
the price for market $j$ would be $p_j(\mathbf{q}) = 1 - 2q_{Aj} - 2q_{Bj}$, i.e., the clearance price of each market is the same as the clearance price of the market in Scenario 1, when the supply is half of the supply of the market in Scenario 1.
In this scenario, the profit of firm $i \in  \{A,B\}$ is 
$\pi_i(\mathbf{q}) = \sum_j q_{ij}(1-2q_{Aj}-2q_{Bj}) - \frac{1}{2}(q_{i1}+q_{i2})^2$. 
Any Nash equilibrium of this game satisfies the set of equations 
$\frac{\partial \pi_A}{\partial q_{A1}}=\frac{\partial \pi_A}{\partial q_{A2}}=\frac{\partial \pi_B}{\partial q_{B1}}=\frac{\partial \pi_B}{\partial q_{B2}}=0$.
%
~By finding the unique solution to this set of equations, one can verify that $\mathbf{q}=(\frac{1}{8}, \frac{1}{8}, \frac{1}{8}, \frac{1}{8})$ is the 
unique equilibrium of the game where  $p_1(\mathbf{q})=p_2(\mathbf{q})=\frac{1}{2}$, 
and $\pi_A(\mathbf{q})=\pi_B(\mathbf{q}) = 0.09375$. Since we artificially split the market 
into two identical markets, this equilibrium is, not surprisingly, the same as the equilibrium 
in the previous scenario.

\item[Scenario 3]
Consider the previous scenario, and suppose firm $2$ has no access to the first market 
(see Figure~\ref{fig1}). Let the demand functions and the cost functions be the same as the 
previous scenario. The profits of firms $1$ and $2$ can be written as follows:
\begin{eqnarray*}
\pi_A(\mathbf{q}) & = & q_{A1}(1-2q_{A1}) + q_{A2}(1-2q_{A2}-2q_{B2})- \frac{1}{2} (q_{A1}+q_{A2})^2, \\
\pi_B(\mathbf{q}) & = & q_{B2}(1-2q_{A2}-2q_{B2})- \frac{1}{2} q_{B2}^2. 
\end{eqnarray*}
The unique equilibrium outcome of the game is found by solving the set of equations 
$\frac{\partial \pi_A}{\partial q_{A1}}=\frac{\partial \pi_A}{\partial q_{A2}}=\frac{\partial \pi_B}{\partial q_{B2}}=0$. 
One can verify that vector $\mathbf{q}=(q_{A1}, q_{A2}, q_{B2})=(0.18, 0.1, 0.16)$ is the unique 
equilibrium outcome of the game where $p_1(\mathbf{q})=0.64$, $p_2(\mathbf{q})=0.48$, 
$\pi_A(\mathbf{q})=0.124$, and $\pi_B(\mathbf{q}) = 0.064$. 
The following are a few observations worth mentioning.
\begin{itemize}
\item Firm $A$ has more power in this scenario due to having a captive market\footnote{A captive market is one in which consumers 
have limited options and the seller has a monopoly power.}. 
\item The equilibrium price of market $1$ is higher than the equilibrium price 
in the previous scenarios.
\item The position of firm $B$ affects its profit. Since it has no access to market $1$, 
it is not as powerful as firm $A$.
\item The equilibrium price of market $2$ is smaller than the equilibrium price 
in the previous scenarios.
\end{itemize}
\end{description}
 
\subsection{Related Work}
There are several papers that investigate the Cournot competition 
in an oligopolistic setting (see, e.g., \cite{KS83,SV84,H90,H00,V01}). 
In spite of these works, little is known about the Cournot competition in a network. 
\citet{I09} studies the Cournot competition in a network setting, 
and considers a network of firms and markets where each firm chooses a quantity to supply 
in each accessible market.
He studies the competition when the inverse demand functions are linear and 
the cost functions are quadratic (functions of the total production). 
In this study, we consider the same model when the cost functions and 
the demand functions may have quite general forms. We show the game with 
linear inverse demand functions is a potential game and therefore has 
a unique equilibrium outcome. 
Furthermore, we present two polynomial-time algorithms for finding an equilibrium outcome for a wide range of cost functions and demand functions.
While we investigate the Cournot competition in networks, there is a recent paper which considers the Bertrand competition in network setting \cite{BLN13}, albeit in a much more restricted case of only two firms competing in each market.

The final price of each market in the Cournot competition is a market clearing price; i.e, the final price is set such that the market becomes clear. 
Finding a market clearance equilibrium is a well-established problem, and  there are several papers which propose polynomial-time algorithms for computing equilibriums of markets in which the price of each good is defined as the price in which the market clears. Examples of such markets include Arrow-Debreu market and its special case Fisher market (see related work on these markets \cite{clear7,clear3,clear4,clear2,clear1,clear5,clear6}). 
\citet{clear4} design an approximation scheme which computes the market clearing prices for the Arrow-Debreu market, and \citet{clear6} improve the running time of the algorithm. 
The first polynomial-time algorithm for finding an Arrow-Debreu market equilibrium is proposed by \citet{clear2} for a special case with linear utilities. The Fisher market, a special case of the Arrow-Debreu market, attracted a lot of attention as well. \citet{clear7} present the first polynomial-time algorithm by transferring the problem to a concave cost maximization problem. \citet{clear3} design the first combinatorial algorithm which runs in polynomial time and finds the market clearance equilibrium when the utility functions are linear. This result is later improved by \citet{clear5}.

For the sake of completeness, we refer to recent works in the computer science literature \cite{wine10,soda12}, which investigate the Cournot competition in an oligopolistic setting.
\citet{wine10} study a coalition formation game in a  Cournot oligopoly. In this setting, firms form coalitions, and the utility of each coalition, which is equally divided between its members, is determined by the equilibrium of a Cournot competition between coalitions. They prove the price of anarchy, which is the ratio between the social welfare of the worse stable partition and the social optimum, is $\Theta(n^{2/5})$ where $n$ is the number of firms.
\citet{soda12} consider a Cournot competition where agents may decide to be non-myopic. 
In particular, they define two principal strategies to maximize revenue and profit 
(revenue minus cost) respectively. Note that in the classic Cournot competition all agents 
want to maximize their profit. However, in their study each agent first chooses 
its principal strategy and then acts accordingly. The authors prove this game has a 
pure Nash equilibrium and the best response dynamics will converge to an equilibrium. 
They also show the equilibrium price in this game is lower than the equilibrium price 
in the standard Cournot competition.

\subsection{Results and techniques}
We consider the problem of Cournot competition on a network of markets and firms for different classes of cost and inverse demand functions. Adding these two dimensions to the classical Cournot competition which only involves a single market and basic cost and inverse demand functions yields an engaging but complicated problem which needs advanced techniques for analyzing.
For simplicity of notation we model the competition by a bipartite graph rather than a hypergraph: vertices on one side denote the firms, and vertices on the other side denote the markets. An edge between a firm and a market demonstrates willingness of the firm to compete in that specific market. The complexity of finding the equilibrium, in addition to the number of markets and firms, depends on the classes that inverse demand and production cost functions belong to.

We summarize our results in the following table.
\\[0mm]
\noindent
\begin{tabular}{|p{3.8cm}|p{3cm}|c|p{3cm}|}
\hline
\textbf{Cost functions} & \textbf{Inverse demand functions} & \textbf{Running time} & \textbf{Technique} \\
\hline
Convex&
Linear&
\parbox{1.9cm}{\qquad\\$O(E^3)$}&
Convex optimization, formulation as an ordinal potential game\\
\hline
Convex & 
Strongly monotone marginal revenue function\footnotemark& 
\parbox{1.9cm}{\qquad\\\qquad\\poly($E$)}&
Reduction to a nonlinear complementarity problem

\\ \hline
Convex, separable&
Concave&
$O(n\log^2Q_{\max})$&
Supermodular optimization, nested binary search
\\\hline
\end{tabular}
\footnotetext{Marginal revenue function is the vector function which maps production quantities for an edge to marginal revenue along that edge.}
In the above table, $E$ denotes the number of edges of the bipartite graph, $n$ denotes the number of firms, and $Q_{\max}$ denotes the maximum possible total quantity in the oligopoly network at any equilibrium. In our results we assume the inverse demand functions are nonincreasing functions of total production in the market. This is the basic assumption in the classical Cournot Competition model: As the price in the market increases, it is reasonable to believe that the buyers drop out of the market and demand for the product decreases. The classical Cournot Competition model as well as many previous works on Cournot Competition model assumes linearity of the inverse demand function~\cite{I09,wine10}. In fact there is little work on generalizing the inverse demand function in this model. The second and third row of the above table shows we have developed efficient algorithms for more general inverse demand functions satisfying concavity rather than linearity. This can be accounted as a big achievement. The assumption of monotonicity of the inverse demand function is a standard assumption in Economics~\cite{efficiency2003,amir1996cournot,milgrom1990rationalizability}. We assume cost functions to be convex which is the case in many works related to both Cournot Competition and Bertrand Network \cite{kukushkin1993cournot,weibull2006price}. In a previous work \cite{I09}, the author considered Cournot Competition on a network of firms and markets; however, assumed that inverse demand functions are linear and all the cost functions are quadratic function of the total production by the firm in all markets which is quite restrictive. Most of the results in other related works in Cournot Competition and Bertrand Network require linearity of the cost functions \cite {BLN13,wine10}. A brief summary of our results presented in three sections is given below.

\subsubsection{Linear Inverse Demand Functions}
In case inverse demand functions are linear and production costs
are convex, we present a fast and efficient algorithm to obtain the equilibrium.
This approach works by showing that Network Cournot Competition belongs to a class of games called \emph{ordinal potential games}. In such games, the collective strategy of the independent players
is to maximize a single potential function.
The potential function is carefully designed in such a way
that changes made by one player reflects in the same way in
the potential function as in their own utility function.
We design a potential function for the game, which depends on the network structure, and show how it captures this property. Moreover, in the case where the cost functions are convex, we prove concavity of this designed potential function (Theorem \ref{convex-potential}) concluding convex optimization methods can be employed to
find the optimum and hence, the equilibrium of the original Cournot competition. We also discuss uniqueness of equilibria in case the cost functions are strictly concave. Our result in this section is specifically interesting since we find the unique equilibrium of the game.
We prove the following theorems in Section~\ref{potential}.

\begin{theorem}
The Network Cournot Competition with linear inverse demand functions forms an ordinal potential game.
\end{theorem}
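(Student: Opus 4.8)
The plan is to exhibit a single real-valued function $\Phi(\mathbf{q})$ on the joint strategy space such that, for every firm, improving its own profit is equivalent to improving $\Phi$; this is exactly what is required for an ordinal potential game, and I would in fact aim for the stronger \emph{exact} potential property. First I would fix notation. Writing $S_j$ for the set of firms with access to market $j$ and $p_j(\mathbf{q}) = a_j - b_j \sum_{i \in S_j} q_{ij}$ for the linear inverse demand, each firm's profit is
\begin{equation*}
\pi_i(\mathbf{q}) \;=\; \sum_{j:\, i \in S_j} q_{ij}\, p_j(\mathbf{q}) \;-\; c_i(\mathbf{q}),
\end{equation*}
where the cost $c_i$ depends only on firm $i$'s own quantities $\{q_{ij}\}_j$. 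The first-order effect that concerns player $i$ is therefore
\begin{equation*}
\frac{\partial \pi_i}{\partial q_{ij}} \;=\; a_j - b_j\!\!\sum_{k \in S_j} q_{kj} \;-\; b_j q_{ij} \;-\; \frac{\partial c_i}{\partial q_{ij}} .
\end{equation*}

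The heart of the argument is the design of $\Phi$. The obstacle is that the revenue term couples all firms sharing a market through the common price, so a naive sum of revenues mis-counts these externalities. The fix is to symmetrize the quadratic interaction within each market, counting each unordered pair of competitors once while keeping the diagonal (own-quantity) term at full weight. Concretely, I propose
\begin{equation*}
\Phi(\mathbf{q}) \;=\; \sum_j \Bigl[\, a_j \sum_{i \in S_j} q_{ij} \;-\; b_j\Bigl( \sum_{\{i,k\}\subseteq S_j,\, i<k} q_{ij}q_{kj} \;+\; \sum_{i \in S_j} q_{ij}^2 \Bigr) \Bigr] \;-\; \sum_i c_i(\mathbf{q}).
\end{equation*}

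It then remains to verify that $\Phi$ tracks every firm's incentives. Differentiating $\Phi$ with respect to $q_{ij}$, the cross term contributes $\sum_{k \in S_j,\, k\neq i} q_{kj}$, the squared term contributes $2 q_{ij}$, and only firm $i$'s own cost survives in $\partial(\sum_{i'} c_{i'})/\partial q_{ij}$. Combining the first two gives $\sum_{k\in S_j} q_{kj} + q_{ij}$, so that $\partial \Phi/\partial q_{ij}$ coincides exactly with $\partial \pi_i/\partial q_{ij}$ computed above. Since the gradients of $\Phi$ and of $\pi_i$ in firm $i$'s own coordinates agree for every fixed profile $\mathbf{q}_{-i}$ of the remaining firms, the functions $\pi_i(\cdot,\mathbf{q}_{-i})$ and $\Phi(\cdot,\mathbf{q}_{-i})$ differ only by a constant depending on $\mathbf{q}_{-i}$; hence any unilateral deviation changes $\pi_i$ and $\Phi$ by the same amount, which establishes the exact (and a fortiori ordinal) potential property.

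I expect the only genuinely delicate point to be the coefficient bookkeeping that produces this cancellation, and in particular the reliance on the slope $b_j$ being common to all firms competing in market $j$: if the linear demand had firm-specific slopes, the mixed second partials $\partial^2\pi_i/\partial q_{ij}\partial q_{kj}$ and $\partial^2\pi_k/\partial q_{kj}\partial q_{ij}$ would disagree, the Monderer--Shapley symmetry condition would fail, and no exact potential would exist. An equivalent and perhaps cleaner route to the same conclusion is to skip guessing $\Phi$ and instead verify this symmetry of mixed partials directly, then recover $\Phi$ by integrating the resulting gradient field; I would use this as a sanity check on the construction. The cost coupling is harmless throughout, since each $c_i$ involves only firm $i$'s own variables and thus contributes identically to $\partial\pi_i$ and to $\partial\Phi$.
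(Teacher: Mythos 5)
Your proposal is correct and matches the paper's proof essentially verbatim: your $\Phi$ is exactly the paper's potential $P^*$ (full-weight diagonal terms, once-counted cross terms per market, total costs subtracted, with the paper merely splitting each $c_j$ equally across the firm's markets), and both arguments conclude by checking $\partial\Phi/\partial q_{ij}=\partial\pi_i/\partial q_{ij}$ edge by edge. Your added remarks --- upgrading ordinal to exact via integration of the gradient field, and noting that firm-specific slopes would break the Monderer--Shapley mixed-partials symmetry --- go slightly beyond the paper but do not change the route.
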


\begin{theorem}
Our designed potential function for the Network Cournot Competition with linear inverse demand functions is concave provided that the cost functions are convex. Furthermore, the potential function is strictly concave if the cost functions are strictly convex, and hence the equilibria for the game is unique. In addition, a polynomial-time algorithm finds the optimum of the potential function which describes the market clearance prices.
\end{theorem}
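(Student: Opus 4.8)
The plan is to write the designed potential $\Phi$ of the previous theorem in closed form, split it into a per-market ``revenue'' part and a total-cost part, and establish concavity of each part separately; strict concavity will then be forced by strict convexity of the costs. Concretely, with inverse demand $p_j(\mathbf q)=a_j-b_jQ_j$ (where $Q_j=\sum_i q_{ij}$ and $b_j\ge 0$ since demand is nonincreasing) and total cost $\sum_i c_i$, the potential takes the form
$$\Phi(\mathbf q)=\sum_j\Bigl(a_jQ_j-\tfrac{b_j}{2}\bigl(Q_j^2+\textstyle\sum_i q_{ij}^2\bigr)\Bigr)-\sum_i c_i(\mathbf q),$$
whose gradient in firm $i$'s own variables coincides with $\nabla_i\pi_i$, so that it is a potential in the sense of the previous theorem. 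This potential property is the only feature of $\Phi$ I will use.

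First I would prove concavity. The key structural observation is that each edge variable $q_{ij}$ belongs to exactly one market, so the revenue part $R:=\sum_j R_j$ is separable across markets and it suffices to check each $R_j$. Collecting the market-$j$ variables into a vector $x$ and writing $Q_j=\mathbf 1^\top x$, one has $R_j(x)=a_j\mathbf 1^\top x-\tfrac{b_j}{2}\bigl((\mathbf 1^\top x)^2+x^\top x\bigr)$, whose Hessian is $-b_j(\mathbf 1\mathbf 1^\top+I)$. Since $\mathbf 1\mathbf 1^\top\succeq 0$ and $b_j\ge 0$, this is negative semidefinite, so every $R_j$, and hence $R$, is concave. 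For the cost part, the variables of distinct firms are disjoint, so $\sum_i c_i$ is convex whenever each $c_i$ is, making $-\sum_i c_i$ concave; thus $\Phi=R-\sum_i c_i$ is concave. For the strict statement, note that $-\sum_i c_i$ is strictly concave as soon as each $c_i$ is strictly convex (disjointness of firms' variables again), and adding the concave $R$ preserves strict concavity; hence strictly convex costs yield a strictly concave $\Phi$.

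Next I would convert strict concavity into uniqueness of equilibrium. The strategy space is a product $\prod_i \mathbb{R}^{E_i}_{\ge 0}$ over the firms, since each edge belongs to one firm and the only constraints are $q_{ij}\ge 0$. Each $\pi_i$ is concave in firm $i$'s own variables (the market-$j$ revenue $(a_j-b_j\!\sum_{k\ne i}q_{kj})q_{ij}-b_jq_{ij}^2$ is concave and $c_i$ is convex), so a firm's best response is the maximizer of $\Phi$ over its own block. Because the constraints are separable by firm, stacking the per-firm (block) KKT conditions gives exactly the global KKT conditions for $\max_{\mathbf q\ge 0}\Phi$; as $\Phi$ is concave these are necessary and sufficient for a global maximizer. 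Therefore the Nash equilibria are precisely the maximizers of $\Phi$, and strict concavity makes this maximizer unique. Existence of a maximizer follows from coercivity of $\Phi$ when the $b_j$ are strictly positive, or, in full generality, by optimizing over the compact box $[0,Q_{\max}]^E$ that contains every equilibrium.

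Finally, the algorithmic claim is immediate: maximizing the concave $\Phi$ over this convex feasible set is a convex program, solvable to the required accuracy in time polynomial in $E$ (e.g.\ by interior-point or ellipsoid methods, matching the stated $O(E^3)$ bound), and the optimal quantities $\mathbf q^*$ determine the clearing prices $p_j(\mathbf q^*)=a_j-b_jQ_j^*$. I expect the main obstacle to be not the concavity computation, which is routine linear algebra, but the boundary analysis of the third step: carefully matching the complementary-slackness conditions of the constrained potential maximization with the best-response conditions that define a Nash equilibrium, together with guaranteeing existence of the maximizer over the possibly unbounded nonnegative orthant.
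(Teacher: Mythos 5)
Your proposal is correct, and it follows the paper's overall strategy---decompose $P^*$ market by market into a quadratic revenue piece plus the (telescoped) cost sum $\sum_j c_j(\vec{s_j})$, prove each piece concave, and get strictness from the costs---but your execution differs in two substantive places. For the quadratic piece, the paper verifies positive semidefiniteness of the matrix $M$ representing $\sum_{j} q_{ij}^2 + \sum_{k< j} q_{ij}q_{ik}$ by exhibiting an explicit factorization $M = R^T R$ (a step that, incidentally, contains slips: the off-diagonal entries of the symmetric matrix representing this form should be $\tfrac{1}{2}$, not $\tfrac{1}{\sqrt{2}}$, and the factorization requires $a=\sqrt{1-c^2}$ rather than $\sqrt{1-c}$, though the conclusion survives since the correct $M = \tfrac12 I + \tfrac12 \mathbf{1}\mathbf{1}^T$ is positive semidefinite). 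Your identity $\sum_j q_{ij}^2 + \sum_{k<j} q_{ij}q_{ik} = \tfrac12\bigl(D_i^2 + \sum_j q_{ij}^2\bigr)$ with per-market Hessian $-\beta_i\bigl(\mathbf{1}\mathbf{1}^T + I\bigr)$ reaches the same conclusion in one line, and moreover makes visible that the revenue part is \emph{strictly} concave whenever $\beta_i>0$, so strict convexity of the costs is stronger than needed in that regime---something the paper's factorization obscures. Second, and more importantly, the paper's uniqueness claim rests on citing its Theorem on strictly concave programs, implicitly assuming that the Nash equilibria of the game coincide with the maximizers of $P^*$; for an \emph{ordinal} potential game with constrained (nonnegative) strategies this equivalence is not automatic, and your blockwise argument---firm-separable constraints, matching gradients, concavity of each $\pi_j$ in its own variables making the per-firm KKT conditions sufficient for best response, then stacking them into the global KKT system for $\max_{\mathbf{q}\ge 0} P^*$---supplies exactly the step the paper leaves tacit. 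Your remark on existence (coercivity when all $\beta_i>0$, or restriction to a compact box containing all equilibria) likewise patches the unstated hypothesis in the paper's uniqueness theorem that the feasible region be bounded. The algorithmic claim is handled identically in both: a concave maximization in $E$ variables, solvable in $O(E^3)$, with clearing prices read off from the optimal quantities.
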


\subsubsection{The general case}
Since the above approach does not work for nonlinear inverse demand functions, we design another interesting but more involved algorithm to capture more general forms of inverse demand functions. We show that an equilibrium of the game can be computed in polynomial time
if the production cost functions are convex and the revenue function is monotone. Moreover, we show under strict monotonicity of the revenue function, the solution is unqiue, and therefore our results in this section is structural; i.e. we find the one and only equilibria. For convergence guarantee we also need Lipschitz condition on derivatives of inverse demand and cost functions. We start the section by modeling our problem as a complementarity problem. Then we prove how holding the aforementioned conditions for cost and revenue functions yields satisfying \emph{Scaled Lipschitz Condition} (SLC) and semidefiniteness for matrices of derivatives of the profit function. SLC is a standard condition widely used in convergence analysis for scalar and vector optimization \cite{zhao1999two}.
Finally , we present our algorithm, and show how meeting these new conditions by inverse demand and cost functions helps us to guarantee polynomial running time of our algorithm. We also give examples of classes of inverse demand functions satisfying the above conditions. These include many families of inverse demand functions including quadratic functions, cubic functions and entropy functions.
The following theorem is the main result of Section~\ref{general} which summarizes the performance of our algorithm.

\begin{theorem}\label{thm:result:1}
A solution to the Network Cournot Competition can be found in polynomial number of iterations under the following conditions:
\begin{enumerate}\setlength{\itemindent}{.2in}
\item The cost functions are (strongly) convex.
\item The marginal revenue function is (strongly\footnote{For at least one of the first two conditions, strong version of condition should be satisfied, i.e., either cost functions should be strongly convex or the marginal revenue function should be strongly monotone.}) monotone.
\item Ther first derivative of cost functions and inverse demand functions and the second derivative of inverse demand functions are Lipschitz continuous.
\end{enumerate}

Furthermore, the solution is unique assuming only the first condition. Therefore, our algorithm finds the unique equilibria of NCC.
\end{theorem}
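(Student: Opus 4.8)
The plan is to reformulate the search for a pure Nash equilibrium as a monotone nonlinear complementarity problem, establish the structural properties of the associated map, and then invoke a polynomial-iteration interior-point method once the requisite regularity conditions are verified. First I would write out the first-order characterization of equilibrium. Fixing all other firms' strategies, firm $i$ maximizes its profit $\pi_i(\mathbf{q})=\sum_j q_{ij}\,p_j(Q_j)-c_i(\mathbf{q}_i)$ over $q_{ij}\ge 0$ on its accessible markets, where $Q_j=\sum_k q_{kj}$. Since the inverse demand functions $p_j$ are concave and nonincreasing, the revenue term $q_{ij}p_j(Q_j)$ is concave in firm $i$'s own variable (its own second derivative $2p_j'+q_{ij}p_j''$ is nonpositive), and since $c_i$ is convex, each $\pi_i$ is concave in $\mathbf{q}_i$. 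Hence the KKT conditions are both necessary and sufficient for $i$'s best response, and collecting them over all edges yields exactly the system $\mathbf{q}\ge 0$, $F(\mathbf{q})\ge 0$, $\mathbf{q}^{\top}F(\mathbf{q})=0$, where the edge component is $F_{ij}(\mathbf{q})=\partial c_i/\partial q_{ij}-\big(p_j(Q_j)+q_{ij}p_j'(Q_j)\big)$, i.e. $F=\nabla c-R$ with $R$ the marginal-revenue function.

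Next I would establish monotonicity of $F$, which drives both the algorithm and the uniqueness claim. Writing the Jacobian $\nabla F=\nabla^2 c-\nabla R$, I would note that the cost part is block-diagonal by firm (each $c_i$ depends only on $\mathbf{q}_i$), with each block the Hessian $\nabla^2 c_i$, positive semidefinite under convexity and positive definite under strong convexity; the revenue part $-\nabla R$ is block-diagonal by market, and its symmetric part is positive semidefinite precisely when the marginal-revenue function is monotone (positive definite under strong monotonicity). Adding the two shows $F$ is monotone, and strongly monotone whenever at least one of the two strong hypotheses holds. From here uniqueness is immediate: a strictly monotone $F$ admits at most one solution to $\mathrm{NCP}(F)$, and strong convexity of the cost functions alone already renders $\nabla^2 c$ positive definite, so that together with the (monotone) marginal-revenue contribution $\nabla F$ is positive definite and $F$ strongly monotone; hence the first condition by itself forces uniqueness.

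The technical crux, and the step I expect to be the main obstacle, is verifying the Scaled Lipschitz Condition (SLC) so that a potential-reduction / interior-point method for monotone complementarity problems converges in polynomially many iterations. Here condition 3 enters: Lipschitz continuity of $c_i'$, of $p_j'$, and of $p_j''$ is what controls the scaled Taylor remainder $X\big(F(x+h)-F(x)-\nabla F(x)h\big)$, with $X=\operatorname{diag}(x)$, uniformly over the scaled neighborhood $\|X^{-1}h\|\le\alpha$. Bounding this remainder by the required quadratic form $\beta\,h^{\top}\nabla F(x)h$ is delicate because the entries of $F$ couple own-edge and cross-edge terms through the shared totals $Q_j$, so the $X$ and $X^{-1}$ scalings must be tracked carefully across the market blocks, and the positive-definite lower bound on $\nabla F$ from the previous step is needed to absorb the remainder.

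Finally, once SLC together with the semidefiniteness of the derivative matrices are in hand, I would invoke the polynomial-iteration guarantee for monotone complementarity problems satisfying SLC (as in \cite{zhao1999two}) to conclude that an equilibrium is computed in $\mathrm{poly}(E)$ iterations. Combining this with the uniqueness argument above then yields the full statement: under the three conditions a solution is found in a polynomial number of iterations, and under the strong form of the first condition that solution is the unique equilibrium of NCC.
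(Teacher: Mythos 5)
Your proposal matches the paper's proof essentially step for step: the same complementarity reformulation with the marginal-profit map split into a marginal-cost part and a marginal-revenue part, monotonicity of $\nabla F$ obtained blockwise from convex costs plus a monotone marginal-revenue function (with one strong hypothesis giving strong monotonicity and hence uniqueness), verification of the Scaled Lipschitz Condition from the Lipschitz assumptions on the first derivatives of costs and prices and the second derivatives of prices, and finally the Zhao--Han interior-point guarantee yielding the polynomial iteration count. The only place you stop short of the paper is that you flag the SLC remainder bound as the crux without executing it, but the route you sketch --- Taylor-remainder bounds of order $\Vert h \Vert^2$ from the Lipschitz constants, absorbed by the lower bound $\Omega(\Vert h \Vert^2)$ on $h^{\top}\nabla F(x)\,h$ coming from the strong hypothesis --- is precisely how the paper's Lemmas carry out that step.
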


\subsubsection{Cournot oligopoly}
Another reasonable model for considering cost functions of the firms is the case where the cost of production 
in a market depends only on the quantity produced by the firm in that specific market (and not on quantities produced by this firm in other markets). In other words, the firms have completely independent sections for producing different goods in various markets, and there is no correlation between cost of production in separate markets. Interestingly, in this case the competitions are separable; i.e. equilibrium for Network Cournot Competition can be found by finding the quantities at equilibrium for each market individually. This motivates us for considering Cournot game where the firms compete over a single market. We present a new algorithm for computing equilibrium quantities produced by firms in a Cournot oligopoly, i.e., when the firms compete over a single market. Cournot Oligopoly is a well-known model in Economics, and computation of its Cournot Equilibrium has been subject to a lot of attention. It has been considered in many works including \cite{thorlund1990iterative, kolstad1991computing, okuguchi1985existence, mathiesen1985computation, campos2010solving} to name a few. The earlier attempts for calculating equilibrium for a general class of inverse demand and cost functions are mainly based on solving a Linear Complementarity Problem or a Variational Inequality. These settings can be then turned into  convex optimization problems of size $O(n)$ where $n$ is the number of firms. This means the runtime of the earlier works cannot be better than $O(n^3)$ which is the best performance for convex optimization \cite{boydbook}. We give a novel combinatorial algorithm for this important problem when the quantities produced are integral. We limit our search to integral quantities for two reasons. First, in real-world all commodities and products are traded in integral units. Second, this algorithm can easily be adapted to compute approximate Cournot-Nash equilibrium for the continuous case and since the quantities at equilibrium may not be rational numbers, this is the best we can hope for. Our algorithm runs in time $O(n\log^2 (Q_{\max}))$ where $Q_{\max}$ is an upper bound on total quantity produced at equilibrium. Our approach relies on the fact that profit functions are supermodular when the inverse demand function is nonincreasing and the cost functions are convex. We leverage the supermodularity of inverse demand functions and Topkis' Monotonicity Theorem \cite{topkis1978minimizing}  to design a nested binary search algorithm. The following is the main result of Section \ref{sec-1}.

\begin{theorem}
A polynomial-time algorithm successfully computes the quantities produced by each firm at an equilibrium of 
the Cournot oligopoly if the inverse demand function is non-increasing, 
and the cost functions are convex. In addition, the algorithm runs 
in $O(n \log^2(Q_{\max}))$ where $Q_{\max}$ is the maximum possible 
total quantity in the oligopoly network at any equilibrium.
\end{theorem}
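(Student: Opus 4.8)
The plan is to collapse the $n$-dimensional fixed-point problem of finding a pure equilibrium into a one-dimensional monotone search over the industry total, in the spirit of Novshek's backward-reaction construction, and to make both the aggregation and each firm's reaction computable by binary search. Write firm $i$'s profit as $\pi_i(q_i,Q_{-i}) = q_i\,p(q_i+Q_{-i}) - c_i(q_i)$, where $Q_{-i}=\sum_{j\ne i}q_j$. First I would record the two structural facts that drive everything, both consequences of $p$ being non-increasing (and, as the summary table records, concave) together with $c_i$ convex. (a) For fixed $Q_{-i}$ the discrete marginal profit $\pi_i(q_i+1,Q_{-i})-\pi_i(q_i,Q_{-i})$ is non-increasing in $q_i$, i.e.\ $\pi_i$ is discretely concave in the firm's own output, so the best response is the first output at which marginal profit turns non-positive. (b) $\pi_i$ has decreasing differences in $(q_i,Q_{-i})$ (\emph{strategic substitutes}): raising the rivals' aggregate weakly lowers firm $i$'s marginal profit. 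Reversing the order on $Q_{-i}$ turns (b) into the increasing-differences hypothesis of Topkis' Monotonicity Theorem, so the best response $r_i(Q_{-i})$ is non-increasing in the rivals' aggregate.

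Next I would reduce to one dimension. Define the backward reaction $\beta_i(Q)$ as the output firm $i$ would choose if it believed the industry total were $Q$ (so its residual rivals produce $Q-q_i$); concretely, the largest $q_i$ whose last-unit marginal profit, evaluated against the ambient total $Q$, is still nonnegative. Applying the same Topkis argument to this parametrized objective shows $\beta_i$ is non-increasing in $Q$, so $\Phi(Q):=\sum_i\beta_i(Q)-Q$ is strictly decreasing. A profile $(q_i^*)$ with total $Q^*=\sum_i q_i^*$ is an equilibrium iff $q_i^*=\beta_i(Q^*)$ for every $i$, which forces $\Phi(Q^*)=0$; conversely the unique sign change of $\Phi$ pins down $Q^*$ and the vector $(\beta_i(Q^*))_i$, which I would then verify to be simultaneous best responses. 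This both establishes existence of a pure equilibrium and reduces its computation to locating the sign change of the monotone function $\Phi$.

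The algorithm then has the claimed nested shape. The outer loop binary-searches for $Q^*$ over $\{0,1,\dots,Q_{\max}\}$ using the strict monotonicity of $\Phi$, at a cost of $O(\log Q_{\max})$ evaluations of $\Phi$. Each evaluation requires $\sum_i\beta_i(Q)$, a sum of $n$ backward reactions, and each $\beta_i(Q)$ is found by an inner binary search over $q_i\in\{0,\dots,Q_{\max}\}$, justified by the discrete concavity in fact (a) (the optimum is the unique crossing of the non-increasing marginal-profit sequence). Thus one evaluation of $\Phi$ costs $O(n\log Q_{\max})$ and the whole procedure runs in $O(n\log^2 Q_{\max})$, matching the statement.

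I expect the delicate part to be the integrality rather than the high-level aggregation. Two points need care. The monotone comparative statics in fact (b) and the monotonicity of $\beta_i$ genuinely require \emph{concavity} of the inverse demand, not mere monotonicity: a non-increasing but convex $p$ can make marginal profit rise in the rivals' output and destroy monotonicity of the reactions, so I would make this hypothesis explicit (consistent with the results table). Second, over the integers $\Phi$ need not vanish exactly, so instead of solving $\Phi(Q)=0$ I would locate the integer $Q^*$ at which $\Phi$ changes sign and then argue that the associated profile $(\beta_i(Q^*))_i$ is an exact pure equilibrium, handling ties at the kinks of the marginal-profit sequences and the $q_i=0$ corners. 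Verifying this discrete consistency step---that the firm-by-firm backward reactions at the crossing really are mutual best responses---is the technical heart of the argument.
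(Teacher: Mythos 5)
Your overall architecture---an outer binary search on the industry total $Q$, an inner binary search for each firm's backward reaction against the ambient total, monotonicity of those reactions via Topkis applied to supermodularity in $(q_i,-Q)$, and the explicit observation that \emph{concavity} of $P$, not mere monotonicity, is what makes discrete marginal profit decreasing in the rivals' aggregate---is exactly the paper's (its Section~\ref{sec-1} proof indeed assumes $P$ decreasing \emph{and} concave, matching the results table rather than the looser wording of the headline theorem). The genuine gap is the step you flag as ``the technical heart'' and then leave undone: the discrete consistency at the sign change of $\Phi(Q)=\sum_i\beta_i(Q)-Q$. Because you collapse each firm's equilibrium-consistent quantities to the single selection $\beta_i(Q)$ (the largest best response), your claim that the profile $(\beta_i(Q^*))_i$ is an exact equilibrium at the sign change is false as stated: at the largest $Q^*$ with $\Phi(Q^*)\ge 0$ one generically has $\sum_i\beta_i(Q^*)>Q^*$ strictly, so this profile does not even sum to $Q^*$ and is not a consistent outcome. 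What must be proved instead is that the set of quantities firm $i$ can play at equilibrium given total $Q$ is an integer interval $I_i(Q)=\{l_i(Q),\dots,u_i(Q)\}$ with $u_i=\beta_i$, that an equilibrium with total $Q$ exists iff $\sum_i l_i(Q)\le Q\le\sum_i u_i(Q)$, and that this two-sided condition holds at your sign-change point. The missing bridge is a ``no-gap'' lemma across consecutive totals, e.g.\ $l_i(Q)\le u_i(Q+1)$ (which follows because $l_i(Q)$ is the first nonpositive point of the non-increasing sequence $f_i(\cdot,Q)$ while $u_i(Q+1)-1$ is its last nonnegative point), giving $\sum_i l_i(Q^*)\le\sum_i u_i(Q^*+1)\le Q^*$. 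Without it, your reduction shows only that the sign change exists, not that an equilibrium sits there.

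The paper avoids this by tracking both endpoints from the start: its inner loop binary-searches for $q_i^l$ (via the minimum maximizer of an auxiliary function $F_i$) and for $q_i^u$ (via the maximum maximizer of a second function $G_i$), proves via Topkis that \emph{both} endpoints are non-increasing in $Q$ (Lemmas~\ref{lem:bin:3} and~\ref{lem:bin:3a}), and runs the outer search on the two-sided test $\sum_i q_i^l>Q$ (go up) versus $\sum_i q_i^u<Q$ (go down), with the remaining case yielding a feasible selection $q_i\in I_i$, $\sum_i q_i=Q$, which exists precisely because the $I_i$ are intervals (Lemmas~\ref{lem:bin:4} and~\ref{lem:bin:5}). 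Note also that validating the \emph{inner} searches is itself nontrivial over the integers: your fact~(a) gives discrete concavity for fixed $Q_{-i}$, but the inner search is against a fixed \emph{total}, where producing one more unit moves the total from $Q$ to $Q+1$ and one fewer from $Q$ to $Q-1$; this asymmetry is exactly why the paper needs the two functions $F_i,G_i$ and two distinct endpoints (Lemma~\ref{lem:bin:1}), rather than a single reaction value. Your complexity accounting---$O(\log Q_{\max})$ outer probes at $O(n\log Q_{\max})$ each---matches the paper once these repairs are made, and with the no-gap lemma in place your sign-change argument would in fact deliver the existence claim that the paper's theorem only hedges on.
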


\section{Notations}

Suppose we have a set of $n$ firms denoted by $\mathcal{F}$ and 
a set of $m$ markets denoted by $\mathcal{M}$. 
A single good is produced in each of these markets. 
Each  firm might or might not be able to supply a particular market. 
A bipartite graph is used to demonstrate these relations. 
In this graph, the markets are denoted by the numbers $1,2,\dots,m$ on one side, 
and the firms are denoted by the numbers $1,2,\dots,n$ on the other side. 
For simplicity, throughout the paper we use the notation $i \in \mathcal{M}$ 
meaning the market $i$, and $j \in \mathcal{F}$ meaning firm $j$. 
For firm $j \in \mathcal{F}$ and market $i \in \mathcal{M}$ 
there exists an edge between the corresponding vertices in the bipartite graph 
if and only if firm $j$ is able to produce the good in market $i$. 
This edge will be denoted $(i,j)$. 
The set of edges of the graph is denoted by $\mathcal{E}$, 
and the number of edges in the graph is shown by $E$. 
For each market $i \in \mathcal{M}$, the set of vertices $N_\mathcal{M}(i)$ is 
the set of firms that this market is connected to in the graph. 
Similarly, $N_\mathcal{F}(j)$ denotes the set of neighbors of firms $j$ among markets. 
The edges in $\mathcal{E}$ are sorted and numbered $1,\dots,E$, 
first based on the number of their corresponding market and 
then based on the number of their corresponding firm. 
More formally, edge $(i,j) \in \mathcal{E}$ is ranked above 
edge $(l,k) \in \mathcal{E}$ if $i < l$ or $i = l$ and $j < k$. 
The quantity of the good that firm $j$ produces in market $i$ is denoted by $q_{ij}$. 
The vector $\mathbf{q}$ is an $E \times 1$ vector that contains all the quantities 
produced over the edges of the graph in the same order that the edges are numbered.

The demand for good $i$, denoted $D_{i}$, is the sum of the total quantity of this good 
produced by all firms, i.e., $D_{i} = \sum_{j\in N_\mathcal{M}(i)}{q_{ij}}$. 
The price of good $i$, denoted by the function $P_i(D_i)$, is only a decreasing function of total demand 
for this good and not the individual quantities produced by each firm in this market. For a firm $j$, the vector $\vec{s_j}$ denotes the strategy of firm $j$, 
which is the vector of all quantites produced by this firm in the markets $N_{\mathcal{F}}(j)$. 
Firm $j \in \mathcal{F}$ has a cost function related to its strategy denoted by $c_j(\vec{s_j})$. 
The profit that firm $j$ makes is equal to the total money that it obtains 
by selling its production minus its cost of production. 
More formally, the profit of firm $j$, denoted by $\pi_j$, is 
\begin{equation}
\pi_{j} = \sum_{i \in N_\mathcal{F}(j)} {P_i(D_i) q_{ij}}-c_{j}(\vec{s_j}).
\end{equation}

\section{Cournot competition and potential games}\label{potential}

In this section, we design an efficient algorithm for the case where the price functions are linear. More specifically, we design an innovative \emph{potential function} that captures the changes of all the utility functions simultaneously, and therefore, show how finding the quantities at the equilibrium would be equivalent to finding the set of quantities that maximizes this function. We use the notion of \emph{potential games} as introduced in \citet{monderer}. In that paper, the authors introduce \emph{ordinal potential games} as the set of games for which there exists a \emph{potential function} $P^*$ such that the pure strategy
equilibrium set of the game coincides with the pure strategy equilibrium set 
of a game where every party's utility function is $P^*$. 

In this section, we design a function for the Network Cournot Competition and show how this function is a potenial function for the problem if the price functions are linear. Interestingly, this holds for any cost function meaning Network Cournot Competition with arbitrary cost functions is an ordinal potential game as long as the price functions are linear.
Furthermore, we show when the cost functions are convex, our designed potential function is concave, and hence any convex optimization method can find the equilibrium of such a Network Cournot Competition. In case cost functions are strictly convex, the potential function is strictly concave. We restate a well known theorem in this section to conclude that the convex optimization in this case has a unique solution, and therefore the equilibria that we find in this case is the one and only equilibria of the game.

\begin{definition}
 A game is said to be an \emph{ordinal potential game} 
if the incentive of all players to change their strategy 
can be expressed using a single global function called 
the potential function. More formally, a game with $n$ players 
and utility function $u_i$ for player $i \in \{1,\dots,n\}$ 
is called ordinal potential with potential function $P^{*}$ 
if for all the strategy profiles $q \in \mathbb{R}^n$ and 
every strategy $x_i$ of player $i$ the following holds:
\begin{equation*}
u_{i}(x_i,q_{-i}) - u_{i}(q_i,q_{-i}) > 0 \;\;\mbox{iff} \;\;P^{*}(x_i,q_{-i}) - P^{*}(q_i,q_{-i}) > 0.
\end{equation*}

An equivalent definition of an ordinal potential game 
is a game for which a potential funciton $P^*$ exists 
such that the following holds for all strategy profiles 
$q \in \mathbb{R}^n$ and for each player $i$. 
\begin{align*}
\frac{\partial u_i}{\partial q_i} = \frac{\partial P^*}{\partial q_i}.
\end{align*}
In other words, for each strategy profile $q$, 
any change in the strategy of player $i$ has the same impact 
on its utility function as on the game's potential function.
\end{definition}

The pure strategy equilibrium set of any ordinal potential game coincides with the pure strategy
equilibrium set of a game with the potential function $P^*$ as all parties' utility function.

\begin{theorem}
\label{thm:pot:1}
The Network Cournot Competition with linear price functions is an ordinal potential game.
\end{theorem}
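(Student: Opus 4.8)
The plan is to exhibit an explicit potential function and then verify the derivative characterization of an ordinal potential game stated in the second half of the definition above, namely that $\partial \pi_j / \partial q_{ij} = \partial P^* / \partial q_{ij}$ for every edge $(i,j)$. Writing the linear price in market $i$ as $P_i(D_i) = a_i - b_i D_i$ with $b_i > 0$ (decreasing price), the profit of firm $j$ reads $\pi_j = \sum_{i \in N_\mathcal{F}(j)} (a_i - b_i D_i)\, q_{ij} - c_j(\vec{s_j})$. Differentiating along edge $(i,j)$ and using $\partial D_i/\partial q_{ij} = 1$, I would first record that
\begin{equation*}
\frac{\partial \pi_j}{\partial q_{ij}} = a_i - b_i D_i - b_i q_{ij} - \frac{\partial c_j}{\partial q_{ij}},
\end{equation*}
where the only delicate term is $-b_i q_{ij}$, coming from the firm internalizing the price impact of its own supply, whereas the supplies of competitors in market $i$ enter only through the aggregate $D_i$.

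Next I would propose the candidate
\begin{equation*}
P^*(\mathbf{q}) = \sum_{i \in \mathcal{M}} \left[\, a_i D_i - \frac{b_i}{2} D_i^2 - \frac{b_i}{2} \sum_{j \in N_\mathcal{M}(i)} q_{ij}^2 \,\right] - \sum_{j \in \mathcal{F}} c_j(\vec{s_j}),
\end{equation*}
which decomposes into one market-local revenue term per market plus the private cost of each firm. The design principle is that the aggregate term $a_i D_i - \tfrac{b_i}{2}D_i^2$ captures the shared price externality symmetrically across all firms in $N_\mathcal{M}(i)$, while the diagonal correction $-\tfrac{b_i}{2}\sum_{j} q_{ij}^2$ supplies exactly the extra self-effect that each firm perceives but its rivals do not.

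The main verification step is to compute $\partial P^*/\partial q_{ij}$ and match. Since every market bracket for $i' \neq i$ and every cost $c_k$ for $k \neq j$ is independent of $q_{ij}$, only the market-$i$ bracket and $c_j$ survive, yielding $\partial P^*/\partial q_{ij} = a_i - b_i D_i - b_i q_{ij} - \partial c_j/\partial q_{ij}$, which coincides with the profit gradient above for every edge. This equality of gradients is precisely the derivative form of the ordinal-potential definition, so the theorem follows. Notice the computation never touches the specific shape of $c_j$, which is exactly why linearity of the price functions alone (with arbitrary, possibly non-convex costs) is enough to make NCC an ordinal potential game.

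I expect the only genuine obstacle to be pinning down the coefficients of the quadratic market terms: the weight $b_i$ attached to a firm's own quantity must appear \emph{twice} (once inside $D_i^2$ and once in the diagonal correction), while each competitor's quantity contributes it only once. Equivalently, after expanding $D_i^2 = \sum_{j} q_{ij}^2 + 2\sum_{j<k} q_{ij}q_{ik}$, the cross terms $q_{ij}q_{ik}$ should carry weight $b_i$ and the squares $q_{ij}^2$ should carry weight $b_i$ as well; an off-by-a-factor-of-two in this bookkeeping is the natural failure mode, so I would double-check the expansion explicitly before concluding.
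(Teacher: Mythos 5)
Your proof is correct and essentially identical to the paper's: expanding $\tfrac{b_i}{2}D_i^2 + \tfrac{b_i}{2}\sum_j q_{ij}^2 = b_i\sum_j q_{ij}^2 + b_i\sum_{k<j} q_{ij}q_{ik}$ shows your candidate $P^*$ is the same function the paper uses (the paper merely writes the quadratic in expanded pairwise form and distributes each cost $c_j$ over the $|N_{\mathcal{F}}(j)|$ markets of firm $j$), and your verification is the same gradient-matching computation $\partial \pi_j/\partial q_{ij} = \partial P^*/\partial q_{ij}$. Your aggregate form $a_iD_i - \tfrac{b_i}{2}D_i^2 - \tfrac{b_i}{2}\sum_j q_{ij}^2$ is arguably cleaner bookkeeping, and your coefficient check resolves exactly the off-by-two concern you flagged, but it is not a different route.
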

\begin{proof}
Let $P_i(D_i) = \alpha_i-\beta_iD_i$ be the linear price function for 
market $i \in \mathcal{M}$ where $\alpha_{i} \ge 0$ and $\beta_{i} \ge 0$ 
are constants determined by the properties of market $i$. Note that this 
function is decreasing with respect to $D_i$. Here we want to introduce 
a potential function $P^*$, and show that 
$\frac{\partial \pi_j}{\partial q_{ij}} = \frac{\partial P^*}{\partial q_{ij}}$ 
holds $\forall (i,j) \in \mathcal{E}$.
The utility function of firm $j$ is
\begin{align*}
\pi_{j} 
&= \sum_{i \in N_{\mathcal{F}}(j)} {\bigg(\alpha_i-\beta_i\sum_{k\in N_{\mathcal{F}}(j)}{q_{kj}}\bigg)q_{ij}}-c_{j}(\vec{s_j}),
\\\intertext{and taking partial derivative with respect to $q_{ij}$ yields}
\frac{\partial \pi_{j}}{\partial q_{ij}} 
&= \alpha_i-\beta_i\sum_{k\in N_{\mathcal{F}}(j)}{q_{kj}}-\beta_{i}q_{ij}-\frac{\partial c_{j}(\vec{s_j})}{\partial q_{ij}}.
\end{align*}

We define $P^{*}$ to be 
\begin{align*}
P^{*} 
&= \sum_{i\in \mathcal{M}}  {\bigg[\alpha_i\sum_{j\in N_{\mathcal{M}}(i)}{q_{ij}}   -  \beta_i\sum_{j\in N_{\mathcal{M}}(i)}{q_{ij}^2} - \beta_i\sum_{k\le j \atop k,j \in N_{\mathcal{M}}(i)}{q_{ij}q_{ik}} - \sum_{j\in N_{\mathcal{M}}(i)}{\frac{c_{j}(\vec{s_j})} {|N_{\mathcal{F}}(j)|}}\bigg]},
\\\intertext{whose partial derivative with respect to $q_{ij}$ is}
\frac{\partial P^{*}}{\partial q_{ij}} 
&= \alpha_i - 2\beta_i q_{ij} - \frac{\partial}{\partial q_{ij}} \left( \beta_i\sum_{l\le m \atop l,m \in N_{\mathcal{M}}(i)}{q_{il}q_{im}} \right) - \frac{\partial c_{j}(\vec{s_j})}{\partial q_{ij}}
\\ 
&= \alpha_i - 2\beta_i q_{ij} -  \beta_i(D_i-q_{ij} ) - \frac{\partial c_{j}(\vec{s_j})}{\partial q_{ij}} 
\\
&= \frac{\partial \pi_{j}}{\partial q_{ij}}.
\end{align*}

Since this holds for each $i \in \mathcal{M}$ and each $j \in \mathcal{F}$, 
the Network Cournot Competition is an ordinal potential game.
\end{proof}

We can efficiently compute the equilibrium of the game
if the potential function $P^*$ is easy to optimize.
Below we prove that this function is concave.

\begin{theorem}\label{convex-potential}
The potential function $P^{*}$ from the previous theorem is concave provided that the cost functions of the firms are convex. Moreover, if the cost functions are strictly convex then the potential function is strictly concave. 
\end{theorem}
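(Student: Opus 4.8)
The plan is to write $P^*$ as a sum of three groups of terms, argue concavity of each group separately, and then invoke the fact that a sum of concave functions is concave. First I would simplify the cost contribution: since each firm $j$ is incident to exactly $|N_{\mathcal{F}}(j)|$ markets, the double sum $\sum_{i\in\mathcal{M}}\sum_{j\in N_{\mathcal{M}}(i)} c_j(\vec{s_j})/|N_{\mathcal{F}}(j)|$ collapses to $\sum_{j\in\mathcal{F}} c_j(\vec{s_j})$. Thus $P^*$ splits into an affine part $\sum_{i\in\mathcal{M}}\alpha_i\sum_{j\in N_{\mathcal{M}}(i)} q_{ij}$, a per-market quadratic part $\sum_{i\in\mathcal{M}} Q_i$ with $Q_i = -\beta_i\sum_{j}q_{ij}^2-\beta_i\sum_{k<j}q_{ij}q_{ik}$ (consistent with the derivative computed in the previous theorem), and the negative cost $-\sum_{j\in\mathcal{F}} c_j(\vec{s_j})$.

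The affine part is concave, and $-\sum_j c_j$ is concave because each $c_j$ is convex; these require no work. The heart of the argument is the quadratic part. Fix a market $i$ and set $x=(q_{ij})_{j\in N_{\mathcal{M}}(i)}$. I would rewrite $Q_i$ using the identity $\sum_{k<j}x_jx_k=\tfrac12\big((\textstyle\sum_j x_j)^2-\sum_j x_j^2\big)$, which gives $Q_i=-\tfrac{\beta_i}{2}\big(\sum_j x_j^2+(\sum_j x_j)^2\big)$, a manifestly concave function since $\beta_i\ge 0$. Equivalently, the Hessian of $Q_i$ in the variables of market $i$ is $-\beta_i(I+\mathbf{1}\mathbf{1}^{\top})$, which is negative semidefinite because $I+\mathbf{1}\mathbf{1}^{\top}$ has all positive eigenvalues.

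The structural observation that lets me assemble these pieces is that each edge variable $q_{ij}$ belongs to exactly one market, so the Hessian of $\sum_i Q_i$ is block diagonal with one negative-semidefinite block per market; hence $\sum_i Q_i$ is concave over the full edge vector $\mathbf{q}$. Adding the affine and concave cost terms yields concavity of $P^*$. For strict concavity under strictly convex costs, I would exploit the dual block structure: the cost terms decouple by firm (the variable sets $\vec{s_j}$ partition the edges), so $\sum_j c_j$ is strictly convex on all of $\mathbb{R}^E$, making $-\sum_j c_j$ strictly concave; since the remaining terms are concave, $P^*$ is strictly concave.

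I expect the main obstacle to be the quadratic market term: recognizing that the cross terms $\sum_{k<j}q_{ij}q_{ik}$ combine with the diagonal $\sum_j q_{ij}^2$ into $-\tfrac{\beta_i}{2}(\|x\|^2+(\mathbf{1}^{\top}x)^2)$ (equivalently, identifying the Hessian $-\beta_i(I+\mathbf{1}\mathbf{1}^{\top})$) is what makes concavity transparent. The only remaining subtlety is bookkeeping: confirming that each $q_{ij}$ appears in a single market's quadratic form and a single firm's cost, so that both the concavity argument (block diagonal over markets) and the strict-concavity upgrade (block diagonal over firms) go through on the full variable space rather than only within a market or within a firm.
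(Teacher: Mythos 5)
Your proof is correct, and although it uses the same overall decomposition as the paper---affine revenue terms, a per-market quadratic form, and the negated costs (with the cost contribution correctly collapsed to $\sum_{j\in\mathcal{F}}c_j(\vec{s_j})$, since firm $j$ appears once per incident market)---the crucial concavity step is handled by a genuinely different device. The paper proves convexity of the per-market quadratic by writing it as $\mathbf{q}^T M\mathbf{q}$ for a matrix $M$ with unit diagonal and constant off-diagonal entries and then exhibiting an explicit Gram factorization $M=R^TR$; you instead use the identity $\sum_{k<j}x_jx_k=\tfrac12\bigl((\mathbf{1}^{\top}x)^2-\sum_j x_j^2\bigr)$ to rewrite the market-$i$ term as $-\tfrac{\beta_i}{2}\bigl(\Vert x\Vert^2+(\mathbf{1}^{\top}x)^2\bigr)$, a negated sum of squares, equivalently identifying the Hessian block as $-\beta_i(I+\mathbf{1}\mathbf{1}^{\top})$ with eigenvalues $-\beta_i$ and $-\beta_i(1+|N_{\mathcal{M}}(i)|)$. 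Your route buys three things the paper's does not. First, exact bookkeeping: the paper's $M$ with off-diagonal $\tfrac{1}{\sqrt2}$ does not actually reproduce the cross-term coefficient of $P^*$ (and its factorization requires $a=\sqrt{1-c^2}$ rather than the stated $\sqrt{1-c}$), whereas your identity pins the coefficients down; relatedly, you resolve the $k\le j$ versus $k<j$ ambiguity in the definition of $P^*$ in the way consistent with the derivative computation in Theorem~\ref{thm:pot:1}. Second, you make explicit the block-diagonal structure over markets that justifies summing the per-market Hessians on the full space $\mathbb{R}^E$, a point the paper leaves implicit. Third, your strict-concavity upgrade is actually argued: the paper merely asserts that adding a strictly concave piece suffices, while you verify that $\sum_j c_j$ is strictly convex on all of $\mathbb{R}^E$ because the blocks $\vec{s_j}$ partition the edge coordinates (every edge belongs to exactly one firm), so the strict inequality survives in the joint variable. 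The only thing the paper's factorization approach offers in exchange is a reusable proof that any matrix with unit diagonal and constant off-diagonal entry in $[0,1]$ is positive semidefinite, which is more generality than the theorem needs; as a proof of this statement, your argument is both more elementary and more airtight.
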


\begin{proof}
%
%
The proof goes by decomposing $P^*$ into pieces that are concave.
We first define $f$ for one specific market $i$ as
\begin{align*}
f = \sum_{j\in N_{\mathcal{M}}(i)}{q_{ij}^2} +\sum_{k\le j \atop k,j \in N_{\mathcal{M}}(i)}{q_{ij}q_{ik}},
\end{align*}
and prove that it is convex.

Recall that $\mathbf{q}$ is an $E \times 1$ vector of all the quantities 
of good produced over the existing edges of the graph. 
We can write $f = \mathbf{q}^T M\mathbf{q}$ 
where $M$ is an $E\times E$ matrix with all elements 
on its diagonal equal to $1$ and all other elements equal to $\frac{1}{\sqrt{2}}$:
\[M = \begin{bmatrix}
1&\frac{1}{\sqrt{2}}&\cdots &\frac{1}{\sqrt{2}} \\
\frac{1}{\sqrt{2}}&1&\cdots &\frac{1}{\sqrt{2}}\\
\vdots & \vdots & \ddots & \vdots\\
\frac{1}{\sqrt{2}}&\frac{1}{\sqrt{2}}&\cdots &1
\end{bmatrix}.\]

To show that $f$ is convex, it suffices to prove that $M$ is positive semidefinite,
by finding a matrix $R$ such that $M = R^T R$. 
Consider the following $(E+1)\times E$ matrix:
\[R = \begin{bmatrix}
c&c&\cdots &c \\
a&0&\cdots &0 \\
0&a&\cdots &0 \\
\vdots & \vdots & \ddots & \vdots\\
0&0&\cdots &a
\end{bmatrix},\]
where $a, c$ are set below.
Let $R_i$ be the $i$-th column of $R$.
We have $R_i \cdot R_i = a^2+c^2$ and $R_i \cdot R_j = c^2$ for $i \neq j$.
%

Setting $c = 2^{-\frac{1}{4}}$ and $a = \sqrt{1-c}$ yields $M = R^TR$,
showing that $M$ is positive semidefinite, hence the convexity of $f$.
%

The following expression for a fixed market $i \in \mathcal{M}$,
sum of three concave functions, is also concave.
\begin{align*}
\alpha_i\sum_{j\in N_{\mathcal{M}}(i)}{q_{ij}}   -  \beta_i\bigg(\sum_{j\in N_{\mathcal{M}}(i)}{q_{ij}^2} +\sum_{k\le j \atop k,j \in N_{\mathcal{M}}(i)}{q_{ij}q_{ik}}\bigg) - \sum_{j\in N_{\mathcal{M}}(i)}{\frac{c_{j}(\vec{s_j})} {|N_{\mathcal{F}}(j)|}}.
\end{align*}

%
Summing over all markets proves concavity of $P^*$.
Note that if a function is the sum of a concave function and a strictly concave function, then it is strictly concave itself. Therefore, since $f$ is concave, we can conclude strictly concavity of $P^*$ under the assumption that the cost functions are strictly convex.
\end{proof}

The following well-known theorem discusses the uniqueness of the solution to a convex optimization problem.

\begin{theorem}\label{unique}
Let $F : \mathcal{K} \rightarrow \mathbb{R}^n$ be a strictly concave and continuous function for some finite convex space $\mathcal{K} \in \mathbb{R}^n$. Then the following convex optimization problem has a unique solution.

\begin{equation}
\max f(x) ~~s.t.~~ x\in \mathcal{K}.
\end{equation}
\end{theorem}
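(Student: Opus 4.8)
The plan is to split the argument into existence and uniqueness, the latter being the only part where strict concavity does any work. Since the statement writes the codomain as $\mathbb{R}^n$ but the objective $f(x)$ is maximized, I read $f$ as a real-valued function (the ``$\mathbb{R}^n$'' is the ambient space of the domain) and I read ``finite convex space $\mathcal{K}$'' as a nonempty compact convex subset of $\mathbb{R}^n$; these are the natural readings under which the statement is true, and I would add one clarifying sentence fixing this interpretation before starting.

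For existence, I would invoke the Weierstrass extreme value theorem: $f$ is continuous on $\mathcal{K}$, and $\mathcal{K}$ is compact (closed and bounded), so $f$ attains its supremum at some point $x^\ast \in \mathcal{K}$. Let $M = f(x^\ast)$ denote this maximum value. This step is routine and I would state it in a single line.

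For uniqueness, I would argue by contradiction. Suppose two distinct points $x_1 \neq x_2$ in $\mathcal{K}$ both attain the maximum, so $f(x_1) = f(x_2) = M$. By convexity of $\mathcal{K}$ the midpoint $z = \tfrac{1}{2}x_1 + \tfrac{1}{2}x_2$ also lies in $\mathcal{K}$, and by strict concavity of $f$ together with $x_1 \neq x_2$ we get
\[
f(z) \;>\; \tfrac{1}{2}f(x_1) + \tfrac{1}{2}f(x_2) \;=\; M,
\]
contradicting that $M$ is the maximum of $f$ over $\mathcal{K}$. Hence the maximizer is unique.

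The argument has no genuine obstacle; it is a textbook consequence of strict concavity plus compactness, and the main thing to be careful about is pinning down the definitions (``finite'' = compact, codomain real-valued) so that the statement is correctly formulated. I would therefore keep the proof to a few lines and resist over-formalizing. The role of this theorem in the paper is only to certify that the strictly concave potential function $P^\ast$ of Theorem~\ref{convex-potential} has a unique maximizer, which in turn yields uniqueness of the equilibrium; so after the proof I would remark explicitly that applying it to $f = P^\ast$ over the feasible quantity region gives the claimed uniqueness of the Network Cournot Competition equilibrium when the cost functions are strictly convex.
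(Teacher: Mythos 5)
Your proof is correct, but there is nothing in the paper to compare it against: the paper states Theorem~\ref{unique} as ``well known'' and supplies no proof, using it only to conclude, via Theorem~\ref{convex-potential}, that strict convexity of the cost functions yields a unique equilibrium. Your argument---Weierstrass for existence on a nonempty compact set, then the midpoint argument $f\bigl(\tfrac{1}{2}x_1+\tfrac{1}{2}x_2\bigr) > \tfrac{1}{2}f(x_1)+\tfrac{1}{2}f(x_2) = M$ for uniqueness---is exactly the standard proof the paper implicitly invokes, and your preliminary repairs of the statement (reading the objective as real-valued despite the codomain being written as $\mathbb{R}^n$, identifying $F$ with $f$, and reading ``finite convex space'' as a nonempty compact convex subset of $\mathbb{R}^n$) are all necessary and correctly made; without the compactness reading, existence genuinely fails (an open interval already gives a counterexample), so these are not cosmetic fixes.

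One caveat on your closing remark about the application: the feasible region of the paper's program $ConvexP$ is the orthant $q_{ij} \ge 0$, which is closed but unbounded, so your existence step does not apply to it verbatim; one needs the paper's standing assumption that production beyond some finite quantity ceases to be profitable (equivalently, coercivity of $P^*$) to restrict attention to a compact set first. Uniqueness---which is the only part the paper actually uses---requires no compactness at all: your midpoint argument shows directly that a strictly concave function has at most one maximizer on any convex set, so the paper's conclusion survives even on the unbounded domain once existence is settled separately.
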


By Theorem \ref{convex-potential}, if the cost functions are strictly convex then the potential function is strictly concave and hence, by Theorem \ref{unique} the equilibrium of the game is unique.

Let $ConvexP(\mathcal{E},(\alpha_1,\dots,\alpha_m),(\beta_1,\dots,\beta_m),(c_1,\dots,c_n))$ 
be the following convex optimization program:
\begin{align}
\min ~~-\hspace{-1mm}\sum_{i\in \mathcal{M}}  \bigg[\alpha_i\hspace{-4mm}\sum_{j\in N_{\mathcal{M}}(i)}\hspace{-3mm}{q_{ij}}   -  \beta_i\hspace{-4mm}\sum_{j\in N_{\mathcal{M}}(i)}\hspace{-3mm}{q_{ij}^2} &- \beta_i\hspace{-4mm}\sum_{k\le j \atop k,j \in N_{\mathcal{M}}(i)}\hspace{-3mm}{q_{ij}q_{ik}} - \hspace{-4mm}\sum_{j\in N_{\mathcal{M}}(i)}{\frac{c_{j}(\vec{s_j})} {|N_{\mathcal{F}}(j)|}}\bigg]\\
\nonumber \text{subject to}\qquad\qquad
q_{ij} &\ge 0 ~~\forall (i,j) \in \mathcal{E}.
\end{align}

Note that in this optimization program we are trying to maximize $P^*$ 
for a bipartite graph with set of edges $\mathcal{E}$, linear price functions 
characterized by the pair $(\alpha_i,\beta_i)$ for each 
market $i \in \mathcal{M}$, and cost functions $c_j$ for each firm $j \in \mathcal{F}$. 
\begin{algorithm}[H]\label{potentialalg}
\caption{Compute quantities at equilibrium for the Network Cournot Competition.}
\begin{algorithmic}
  \Procedure{COURNOT-POTENTIAL}{$\mathcal{E}$, $c_j$, $(\alpha_i,\beta_i)$} \Comment Set of edges, cost functions and price functions
  \State Use a convex optimization algorithm to solve $$ConvexP(\mathcal{E},(\alpha_1,\dots,\alpha_m),(\beta_1,\dots,\beta_m),(c_1,\dots,c_n)).$$ and return the vector $\mathbf{q}$ of equilibrium quantities.
  \EndProcedure
\end{algorithmic}
\label{alg:potential}
\end{algorithm}

The above algorithm has a time complexity equal to the time complexity 
of a convex optimization algorithm with $E$ variables. The best such algorithm 
has a running time $O(E^3)$\cite{boydbook}.



\section{Finding equilibrium for cournot game with general cost and inverse demand functions}\label{general}

In this section, we formulate an algorithm for a much more general class of price and cost functions.
Our algorithm is based on reduction of Network Cournot Competition (NCC) to a polynomial time solvable class of
Non-linear Complementarity Problem (NLCP). First in Subsection \ref{avval}, we introduce our marginal profit function as the vector of partial derivatives of all firms with respect to the quantities that they produce. Then in Subsection \ref{dovvom}, we show how this marginal profit function can help us to reduce NCC to a general NLCP. We also discuss uniqueness of equilibrium in this situation which yields the fact that solving NLCP would give us the one and only equilibrium of this problem. Unfortunately, in 
its most general form, NLCP is computationally intractable. However, for a large class of functions,
these problems are polynomial time solvable. Most of the rest of this section is dedicated to proving the fact that NCC is polytime solvable on vast and important array of price and cost functions. In Subsection \ref{sevvom}, we rigorously define the conditions under which NLCP is polynomial time solvable. We present our algorithm in this subsection along with a theorem which shows it converges in polynomial number of steps. To show the conditions that we introduce for convergence of our algorithm in polynomial time are not restrictive, we give a discussion in Subsection \ref{charom} on the functions satisfying these conditions, and show they hold for a wide range of price functions.

\paragraph{Assumptions} 
\label{par:assumptions}
Throughout the rest of this section we assume that the price functions are decreasing and concave and the cost functions are strongly convex (\textit{The notion of strongly convex is to be defined later}). We also assume that for each firm there is a finite quantity at which extra production ceases to be profitable even if that is the only firm operating in the market. Thus, all production quantities and consequently quantities supplied to markets by firms are finite. In addition, we assume Lipschitz continuity and finiteness of the first and the second derivatives of price and cost functions. We note that these Lipschitz continuity assumptions are very common for convergence analysis in convex optimization~\cite{boydbook} and finiteness assumptions are implied by Lipschitz continuity. In addition, they are not very restrictive as we don't expect unbounded fluctuation in costs and prices with change in supply.
For sake of brevity, we use the terms inverse demand function and price function interchangeably.

\subsection{Marginal profit function}\label{avval}
For the rest of this section, we assume that $P_i$ and $c_i$ are twice differentiable functions of quantities.
We define $f_{ij}$ for a firm $j$ and a market $i$ such that $(i,j)\in \mathcal{E}$  as follows.
\begin{align}\label{fij}
f_{ij} &= -\frac{\partial \pi_{j}}{\partial q_{ij}} = -P_i(D_i)-\frac{\partial P_i(D_i)}{\partial q_{ij}}q_{ij} + \frac{\partial c_j}{\partial q_{ij}}.
\\ \intertext{%
Recall that the price function of a market is only 
a function of the total production in that market and 
not the individual quantities produced by individual firms. 
Thus $\frac{\partial P_i(D_i)}{\partial q_{ij}} = \frac{\partial P_i(D_i)}{\partial q_{ik}}~~\forall j,k\in N_{\mathcal{M}}(i)$.	
Therefore, we replace these terms by $P_i'(D_i)$.
}
&f_{ij} = -P_i(D_i)-P_i'(D_i)q_{ij} + \frac{\partial c_j}{\partial q_{ij}}.
\end{align}

Let   vector $F$ be the vector of all $f_{ij}$'s corresponding to 
the edges of the graph in the same format that we defined the vector ${q}$. 
That is $f_{ij}$ corresponding to $(i,j)\in \mathcal{E}$ appears above $f_{lk}$ 
corresponding to edge $(l,k)\in \mathcal{E}$ iff $i < l$ or $i = l$ and $j < k$. 
Note that $F$ is a function of ${q}$.

Moreover, we separate the part representing marginal revenue from the part representing 
marginal cost in function $F$. More formally, we split $F$ into two functions $R$ and $S$ 
such that $F = R+S$, and the element corresponding to the edge $(i,j) \in \mathcal{E}$ in the \emph{marginal revenue fuction} $R({q})$ is:
\begin{align*}
r_{ij} &=  -\frac{\partial \pi_{j}}{\partial q_{ij}} = -P_i(D_i)-P'(D_i)q_{ij},
\\\intertext{%
whereas for the \emph{marginal cost function} $S({q})$, it is:
}
s_{ij} &= \frac{\partial c_j}{\partial q_{ij}}.
\end{align*}

\subsection{Non-linear complementarity problem}\label{dovvom}
In this subsection we formally define the non-linear complementarity problem (NLCP), and prove our problem is a NLCP.
\begin{definition}
Let $F : \mathbb{R}^n \rightarrow \mathbb{R}^n$ be a continuously differentiable function 
on $\mathbb{R}_{+}^n$. The complementarity problem seeks a vector $x \in \mathbb{R}^n$ 
that satisfies the following constraints:
\begin{equation}\label{CP}
\begin{split}
x , F(x) \ge 0, \\
x^T F(x) = 0.
\end{split}
\end{equation}
\end{definition}

\begin{theorem}
The problem of finding the vector ${q}$ at equilibrium in the Cournot game 
is a complementarity problem.
\end{theorem}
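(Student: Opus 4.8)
The plan is to characterize the Nash equilibrium firm by firm through the Karush--Kuhn--Tucker (KKT) conditions of each firm's profit-maximization problem, and then to observe that the collection of these conditions across all edges is precisely the complementarity system~\eqref{CP} built from the marginal-profit vector $F$. At a Nash equilibrium, each firm $j$ solves
\[
\max_{\vec{s_j}} \ \pi_j(\vec{s_j}) \quad\text{subject to}\quad q_{ij}\ge 0 \ \ \forall i\in N_{\mathcal{F}}(j),
\]
with the other firms' strategies held fixed. Since the only constraints are the non-negativity bounds, which are linear and therefore satisfy a constraint qualification automatically, the KKT conditions are necessary at any best response.

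First I would write out these conditions explicitly. Introducing a multiplier $\lambda_{ij}\ge 0$ for each bound $q_{ij}\ge 0$, stationarity of the Lagrangian gives $\lambda_{ij} = -\,\partial \pi_j/\partial q_{ij} = f_{ij}$, so that dual feasibility reads $f_{ij}\ge 0$, complementary slackness reads $q_{ij}f_{ij}=0$, and primal feasibility reads $q_{ij}\ge 0$. These three statements for the edge $(i,j)$ are exactly the edge-wise requirements of the complementarity problem, namely that $q_{ij}$ and the corresponding component $f_{ij}$ of $F(q)$ are non-negative and their product vanishes.

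To upgrade these necessary conditions to a genuine equivalence with the equilibrium notion, I would show that each $\pi_j$ is concave in firm $j$'s own strategy $\vec{s_j}$, so that the KKT conditions are also \emph{sufficient} for $\vec{s_j}$ to be a global best response; this is where I expect the main work to lie. The term $-c_j$ is concave because $c_j$ is convex, and the revenue $\sum_{i\in N_{\mathcal{F}}(j)} P_i(D_i)\,q_{ij}$ is separable over the markets incident to $j$ (each summand involves only $q_{ij}$ among firm $j$'s variables), so it suffices to check concavity of a single summand $g(q_{ij}) = P_i(q_{ij}+A)\,q_{ij}$ in $q_{ij}$ with the rivals' total $A$ fixed. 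A direct computation gives $g''(q_{ij}) = P_i''(D_i)\,q_{ij} + 2P_i'(D_i)$, which is non-positive precisely because $P_i$ is non-increasing ($P_i'\le 0$) and concave ($P_i''\le 0$) and $q_{ij}\ge 0$; this is exactly the point at which the standing assumptions on the price function are invoked. Hence each $\pi_j$ is concave in $\vec{s_j}$, and the per-firm KKT conditions characterize best responses exactly.

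Finally I would assemble the per-firm systems into one global statement: $q$ is a Nash equilibrium if and only if, for every edge $(i,j)\in\mathcal{E}$, we have $q_{ij}\ge 0$, $f_{ij}\ge 0$, and $q_{ij}f_{ij}=0$, i.e. $q\ge 0$ and $F(q)\ge 0$ componentwise with edge-wise complementarity. Because each product $q_{ij}f_{ij}$ is a product of two non-negative quantities, the edge-wise slackness conditions are equivalent to the single scalar identity $q^T F(q)=0$. This reproduces exactly the constraints of~\eqref{CP}, establishing that computing an equilibrium of the Cournot game is a nonlinear complementarity problem.
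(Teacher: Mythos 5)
Your proof is correct, and it is both more formal and strictly stronger than the paper's own argument. The paper proves only the necessity direction: it takes an equilibrium $q^*$ and argues by elementary local perturbation (if $q_{ij}^*>0$ it is an interior maximizer so $f_{ij}(q^*)=0$; if $q_{ij}^*=0$ then $\partial \pi_j/\partial q_{ij}$ cannot be positive, else increasing $q_{ij}$ by a small $\epsilon$ would be profitable), then sums the edge-wise slackness conditions to get $(q^*)^TF(q^*)=0$. Your KKT derivation is the same first-order analysis dressed in multiplier language, so that half is essentially the paper's proof. Where you genuinely diverge is the sufficiency direction: by checking that each $\pi_j$ is concave in $\vec{s_j}$ --- via the separability of revenue across markets incident to $j$ and the computation $g''(q_{ij}) = P_i''(D_i)q_{ij} + 2P_i'(D_i) \le 0$, which correctly invokes the standing assumptions that $P_i$ is decreasing and concave and $c_j$ is convex --- you upgrade the edge-wise conditions from necessary to necessary-and-sufficient, so that solutions of \eqref{CP} are exactly the Nash equilibria. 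This buys something the paper leaves implicit: its Algorithm \ref{generalalg} solves the complementarity problem and returns the result as an equilibrium, which is only justified if CP solutions are equilibria, i.e., by your converse direction (the paper instead leans on uniqueness of the CP solution together with the unproven existence of an equilibrium). The one caveat is scope: the theorem as stated in the paper does not itself assume convex costs or concave prices (those are standing assumptions of Section \ref{general} as a whole), so your equivalence holds under those assumptions while the paper's weaker one-directional claim holds for any differentiable profits; this is a fair trade, but worth flagging that your statement is conditional where the paper's is not.
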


\begin{proof}
Let ${q}^*$ be the vector of the quantities at equilibrium. 
All quantities must be nonnegative at all times; i.e., ${q}^* \ge 0$. 
It suffices to show $F({q}^*) \ge 0$ and ${{q}^*}^T F({q}^*) = 0$. 
At equilibrium, no party benefits from changing its strategy, 
in particular, its production quantities. 
For each edge $(i,j) \in \mathcal{E}$, if the corresponding quantity $q_{ij}^*$ is positive, 
then $q_{ij}^*$ is a local maxima for $\pi_j$; i.e., 
$f_{ij}({q}^*) = -\left.\frac{\partial \pi_j}{\partial q_{ij}} \right|_{{q}^*} = 0$. 
On the other hand, if $q_{ij}^* = 0$, then 
$\left.\frac{\partial \pi_j}{\partial q_{ij}} \right|_{{q}^*} $ cannot be positive, 
since, if it is, firm $j$ would benefit by increasing the quantity $q_{ij}$ 
to a small amount $\epsilon$. 
Therefore, $\left.\frac{\partial \pi}{\partial q_{ij}} \right|_{{q}^*}$ 
is always nonpositive or equivalently $f_{ij}({q}^*) \ge 0$, 
i.e.,  $F({q}^*) \ge 0$. 
Also, as we mentioned above, a nonzero $q_{ij}^*$ is a local maximum for $\pi_j$; i.e., 
$f_{ij}({q}^*) = -\left.\frac{\partial \pi}{\partial q_{ij}} \right|_{{q}^*} = 0$. 
Hence, either $q_{ij}^* = 0$ or $f_{ij}({q}^*) = 0$; 
thus, $q_{ij}^* f_{ij}({q}^*) = 0$. 
This yields $\sum_{(i,j) \in \mathcal{E}}{q_{ij}^* f_{ij}({q}^*)} = {{q}^*}^T F({q}^*) = 0$.
\end{proof}

\begin{definition}
$F : \mathcal{K} \rightarrow \mathbb{R}^n$ is said to be strictly monotone at $x^*$ if

\begin{equation}\label{strictlymonotone}
\langle (F(x)-F(x^*))^T, x-x^* \rangle \ge 0, \forall x \in \mathcal{K}.
\end{equation}

$F$ is said to be strictly monotone if it is monotone at any $x^* \in \mathcal{K}$. Equivalently, $F$ is strictly monotone if the jacobian matrix is positive definite.
\end{definition}

The following theorem is a well known theorem for Complementarity Problems.

\begin{theorem}
Let $F : \mathcal{K} \rightarrow \mathbb{R}^n$ be a continuous and strictly monotone function with a point $x \in \mathcal{K}$ such that $F(x) \ge 0$ (i.e. there exists a potential solution to the CP). Then the Complementarity Problem introduced in (\ref{CP}) characterized by function $F$ has a unique solution.
\end{theorem}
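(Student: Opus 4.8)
The plan is to prove the two assertions—uniqueness and existence—separately, since they rely on quite different ingredients. Uniqueness follows purely algebraically from strict monotonicity combined with the complementarity conditions, and is short and self-contained; existence requires reformulating the problem as a variational inequality and then a compactness argument, and is where the real work lies.

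For uniqueness, I would argue by contradiction. Suppose $x_1$ and $x_2$ both solve (\ref{CP}), so $x_1,x_2\ge 0$, $F(x_1),F(x_2)\ge 0$, and $x_1^T F(x_1)=x_2^T F(x_2)=0$. Expanding gives $\langle F(x_1)-F(x_2),\,x_1-x_2\rangle = \langle F(x_1),x_1\rangle - \langle F(x_1),x_2\rangle - \langle F(x_2),x_1\rangle + \langle F(x_2),x_2\rangle$. The two diagonal terms vanish by complementarity, while the two cross terms $\langle F(x_1),x_2\rangle$ and $\langle F(x_2),x_1\rangle$ are each nonnegative because every factor is nonnegative. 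Hence $\langle F(x_1)-F(x_2),\,x_1-x_2\rangle\le 0$. Strict monotonicity forces this quantity to be strictly positive whenever $x_1\neq x_2$, a contradiction, so $x_1=x_2$.

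For existence, the plan is to pass to the equivalent variational inequality over the cone $\mathbb{R}^n_+$: find $x^*\ge 0$ with $\langle F(x^*),\,y-x^*\rangle\ge 0$ for all $y\ge 0$. I would first verify the equivalence, namely that testing a variational-inequality solution with $y=x^*+e_i$ (where $e_i$ is the $i$-th standard basis vector) yields $F_i(x^*)\ge 0$, and testing with $y=0$ yields $\langle F(x^*),x^*\rangle\le 0$, which together with $x^*,F(x^*)\ge 0$ gives complementarity; the converse direction is immediate. Next, for each radius $r$ larger than the norm of the given feasible point $x^0$, I would restrict the variational inequality to the compact convex set $\mathbb{R}^n_+\cap\{\,x:\|x\|\le r\,\}$; continuity of $F$ and the standard Brouwer-fixed-point-based existence theorem for variational inequalities on compact convex sets produce a solution $x_r$ of this truncated problem. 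It then remains to show that for $r$ large enough $x_r$ lies strictly inside the ball, so that it in fact solves the full variational inequality and hence (\ref{CP}).

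The main obstacle is precisely this last step: controlling the unboundedness of $\mathbb{R}^n_+$ by an a priori bound on $\|x_r\|$. Here I would exploit the feasible point: testing the truncated inequality with $y=x^0$ gives $\langle F(x_r),\,x_r-x^0\rangle\le 0$, and monotonicity gives $\langle F(x^0),\,x_r-x^0\rangle\le\langle F(x_r),\,x_r-x^0\rangle\le 0$, whence $\langle F(x^0),x_r\rangle\le\langle F(x^0),x^0\rangle$. This caps a nonnegative weighted sum of the coordinates of $x_r$. To turn this into a genuine bound on $\|x_r\|$ independent of $r$, I would invoke the coercivity furnished by monotonicity—the growth of $\langle F(x)-F(x^0),\,x-x^0\rangle$—which makes the relevant level set bounded; once $\|x_r\|$ is bounded uniformly, any $r$ beyond that bound forces $x_r$ to be interior. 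I expect the delicate point to be stating this coercivity precisely: with only $F(x^0)\ge 0$ and plain strict monotonicity the weighted-sum estimate need not bound the coordinates where $F_i(x^0)=0$, so the clean argument uses either strict feasibility $F(x^0)>0$ or the strong-monotonicity bound that is in any case available in our application. I would therefore make the coercivity hypothesis explicit before applying the compactness result.
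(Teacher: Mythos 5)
The paper offers no proof of this statement at all---it is invoked as a ``well known theorem'' for complementarity problems---so there is no internal argument to compare yours against; your proposal has to stand on its own, and it mostly does. The uniqueness half is complete and correct: the identity
\begin{equation*}
\langle F(x_1)-F(x_2),\,x_1-x_2\rangle \;=\; -\,x_2^T F(x_1)\;-\;x_1^T F(x_2)\;\le\;0
\end{equation*}
after killing the diagonal terms by complementarity is exactly the standard argument, and strict monotonicity then forces $x_1=x_2$. One caveat worth making explicit: the paper's displayed definition of ``strictly monotone'' only requires $\langle F(x)-F(x^*),x-x^*\rangle \ge 0$, which is plain monotonicity and would \emph{not} support your contradiction; your proof implicitly (and correctly) uses the strict inequality for $x\neq x^*$, which is what the paper's parenthetical ``Jacobian positive definite'' characterization actually delivers.

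The existence half is the right skeleton (equivalence with the variational inequality over $\mathbb{R}^n_+$, Hartman--Stampacchia on truncations $\mathbb{R}^n_+\cap\{\|x\|\le r\}$, then an a priori bound to free the truncated solution from the boundary), but the gap you flag at the end is genuine and cannot be patched under the theorem's hypotheses as literally stated: from mere feasibility $F(x^0)\ge 0$ your test-point estimate $\langle F(x^0),x_r\rangle\le\langle F(x^0),x^0\rangle$ leaves the coordinates with $F_i(x^0)=0$ uncontrolled, and strict monotonicity by itself supplies no coercivity---the classical result (Mor\'e) needs \emph{strict} feasibility $F(x^0)>0$ for monotone maps, and plain feasibility is known to be insufficient for nonlinear monotone NCPs in general. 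So the deficiency lies as much in the paper's statement as in your sketch, and you diagnosed the correct repairs. In the paper's actual application the marginal revenue function is assumed strongly monotone, and under that hypothesis your own chain closes the bound cleanly: $\alpha\|x_r-x^0\|^2\le\langle F(x_r)-F(x^0),\,x_r-x^0\rangle\le-\langle F(x^0),\,x_r-x^0\rangle\le\|F(x^0)\|\,\|x_r-x^0\|$, giving $\|x_r-x^0\|\le\|F(x^0)\|/\alpha$ uniformly in $r$ (feasibility of $x^0$ is then not even needed). Stating that strengthened hypothesis up front, as you propose, turns the sketch into a complete proof of the version of the theorem the paper actually uses.
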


Hence, the Complementarity Problem characterized by function $F$ introduced element by element in (\ref{fij}) has a unique solution under the assumption that the revenue function is strongly monotone (special case of strictly monotone). Note that the marginal profit function or $F$ in our case is non-negative in at least one point. Otherwise, no firm has any incentive to produce in any market and the equilibrium is when all production quantities are equal to zero. In the next subsection, we aim to find this unique equilibrium of the NCC problem.

\subsection{Designing a polynomial-time algorithm}\label{sevvom}
In this subsection, we introduce Algorithm \ref{generalalg} for finding equilibrium of NCC, and show it converges in polynomial time by Theorem \ref{zhaotheorem}. This theorem requires the marginal profit function to satisfy Scaled Lipschitz Condition(SLC) and monotonicity. We first introduce SLC, and show how the marginal profit function satisfies SLC and montonicty by Lemmas \ref{separation} to \ref{SLC}. We argue the conditions that the cost and price functions should have in order for the marginal profit function to satisfy SLC and monotonicity in Lemma \ref{SLC}. Finally, in Theorem \ref{zhaotheorem}, we show convergence of our algorithm in polynomial time. 

Before introducing the next theorem, 
we explain what the Jacobians $\nabla R$, $\nabla S$, and $\nabla F$ are for the Cournot game. 
First note that these are $E \times E$ matrices. 
Let $(i,j) \in \mathcal{E}$ and $(l,k) \in \mathcal{E}$ be two edges of the graph. 
Let $e_1$ denote the index of edge $(i,j)$, and $e_2$ denote the index of edge $(l,k)$ 
in the graph as we discussed in the first section. 
Then the element in row $e_1$ and column $e_2$ of matrix $\nabla R$, 
denoted $\nabla R_{e_1e_2}$, is equal to $\frac{\partial r_{ij}}{\partial q_{lk}}$. 
We name the corresponding elements in $\nabla F$ and $\nabla S$ similarly.
We have $\nabla F = \nabla R + \nabla S$ as $F = R+S$.

\begin{definition}[Scaled Lipschitz Condition (SLC)] 
A function $G :D \mapsto \mathbb{R}^n$, $D \subseteq \mathbb{R}^n$ is said to satisfy \emph{Scaled Lipschitz Condition (SLC)} if there exists a scalar $\lambda >0$ such that $\forall~h \in \mathbb{R}^n, \forall~x \in D$, such that $\Vert{X^{-1}h}\Vert \le 1$, we have:
\begin{equation}
\Vert X[G(x+h)-G(x)-\nabla G(x)h]\Vert_\infty \le \lambda \vert h^T \nabla G(x) h \vert,
\end{equation}
where $X$ is a diagonal matrix with diagonal entries equal to elements of the vector $x$ 
in the same order, i.e., $X_{ii} = x_i$  for all $i \in \mathcal{M}$.
\end{definition}

Satisfying SLC and monotonicity are essential for marginal profit function in Theorem \ref{zhaotheorem}. In Lemma \ref{SLC} we discuss the assumptions for cost and revenue function under which these conditions hold for our marginal profit function. We use Lemmas \ref{separation} to \ref{SLC} to show $F$ satisfies SLC. More specifically, we demonstrate in Lemma \ref{separation}, if we can derive an upperbound for LHS of SLC for $R$ and $S$, then we can derive an upperbound for LHS of SLC for $F = R+S$ too. Then in Lemma \ref{SLCLHS_S} and Lemma \ref{SLCLHS_R} we show LHS of $S$ and $R$ in SLC definition can be upperbounded. Afterwards, we show monotonicity of $S$ in Lemma \ref{lem:smono}. In Lemma \ref{SLC} we aim to prove $F$ satifies SLC under some assumptions for cost and revenue functions. We use the fact that LHS of SLC for $F$ can be upperbounded using Lemma \ref{SLCLHS_R} and Lemma \ref{SLCLHS_S} combined with Lemma \ref{separation}. Then we use the fact that RHS of SLC can be upperbounded using strong monotonicity of $R$ and Lemma \ref{lem:smono}. Using these two facts, we conclude $F$ satisfies SLC in Lemma \ref{SLC}.

\begin{lemma}\label{separation}
Let $F,R,S$ be three $\mathbb{R}^n \rightarrow \mathbb{R}^n$ functions 
such that $F(q) = R(q) + S(q),\quad \forall q \in \mathbb{R}^n$.
Let $R$ and $S$ satisfy the following inequalities for some $C > 0$ and $\forall~h$ such that $\Vert X^{-1}h \Vert \leq 1$:
\begin{align*}
  \Vert X[R(q+h)-R(q)-\nabla R(q)h]\Vert_\infty &\le C \Vert h \Vert^2, \\
  \Vert X[S(q+h)-S(q)-\nabla S(q)h]\Vert_\infty &\le C \Vert h \Vert^2,
\end{align*}
where $X$ is the diagonal matrix with $X_{ii} = q_i$. 
Then we have:
\begin{align*}
  \Vert X[F(q+h)-F(q)-\nabla F(q)h]\Vert_\infty &\le 2C \Vert h \Vert^2.
\end{align*}
\end{lemma}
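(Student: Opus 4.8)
The plan is to reduce the statement to a single application of the triangle inequality, exploiting the fact that the entire expression inside $X[\cdot]$ is linear in the function. First I would observe that since $F = R + S$ holds identically as a function of $q$, differentiating yields $\nabla F(q) = \nabla R(q) + \nabla S(q)$ at every point $q$. This means the ``second-order residual'' splits additively: writing $\Delta_G(q,h) := G(q+h) - G(q) - \nabla G(q)h$ for a generic $G$, linearity of evaluation and of the gradient gives $\Delta_F(q,h) = \Delta_R(q,h) + \Delta_S(q,h)$. Thus the only structural input needed is that the residual operator $\Delta$ respects the decomposition $F = R + S$.

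Next I would left-multiply by the diagonal matrix $X$. Since matrix multiplication is linear, $X\Delta_F(q,h) = X\Delta_R(q,h) + X\Delta_S(q,h)$, so the quantity whose $\infty$-norm we must bound is literally the sum of the two quantities appearing in the hypotheses. Applying the triangle inequality for $\Vert\cdot\Vert_\infty$ then gives
\begin{equation*}
\Vert X\Delta_F(q,h)\Vert_\infty \le \Vert X\Delta_R(q,h)\Vert_\infty + \Vert X\Delta_S(q,h)\Vert_\infty,
\end{equation*}
after which I would invoke the two assumed bounds $\Vert X\Delta_R(q,h)\Vert_\infty \le C\Vert h\Vert^2$ and $\Vert X\Delta_S(q,h)\Vert_\infty \le C\Vert h\Vert^2$, valid under the common hypothesis $\Vert X^{-1}h\Vert \le 1$, and simply add them to conclude the desired $2C\Vert h\Vert^2$. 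The same side condition $\Vert X^{-1}h\Vert \le 1$ carries over to $F$ verbatim, so no restriction on the domain of $h$ is lost.

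Honestly, I do not expect a genuine obstacle here: the lemma is a bookkeeping statement asserting that the scaled second-order residual is additive and subadditive in norm, and both properties are immediate from linearity and the triangle inequality. The only points requiring a moment of care are confirming that $\nabla F = \nabla R + \nabla S$ pointwise (so that the linear term cancels correctly in $\Delta_F$) and that the hypothesis region $\{h : \Vert X^{-1}h\Vert \le 1\}$ is identical for $R$, $S$, and $F$, which it is since $X$ depends only on $q$. The value of the lemma is not its difficulty but its role downstream: it lets us verify the Scaled Lipschitz Condition for the combined marginal profit function $F$ by checking it separately for the marginal revenue part $R$ and the marginal cost part $S$, which is exactly how Lemmas \ref{SLCLHS_S} and \ref{SLCLHS_R} will be combined in the proof of Lemma \ref{SLC}.
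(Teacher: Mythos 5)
Your proof is correct and matches the paper's own argument exactly: the paper likewise splits $X[F(q+h)-F(q)-\nabla F(q)h]$ into the corresponding residuals for $R$ and $S$ (using $\nabla F = \nabla R + \nabla S$), applies the triangle inequality for $\Vert\cdot\Vert_\infty$, and adds the two hypothesized bounds to obtain $2C\Vert h\Vert^2$. Your additional remarks on the common region $\Vert X^{-1}h\Vert \le 1$ and the lemma's downstream role are accurate and consistent with how the paper uses it in Lemma~\ref{SLC}.
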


The following lemmas give upper bounds for LHS of the SLC for $S$ and $R$ respectively.
\begin{lemma}\label{SLCLHS_S}
Assume $X$ is the diagonal matrix with $X_{ii} = q_i$. 
  $\forall~h$ such that $\Vert X^{-1}h \Vert \leq 1$, there exists a constant $C > 0$ satisfying:
    $\Vert X[S(q+h)-S(q)-\nabla S(q)h]\Vert_\infty \le C \Vert h \Vert^2$.
\end{lemma}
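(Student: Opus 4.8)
The plan is to recognize $S$ as (a reordering of) the gradient of the cost functions and then read the SLC left-hand side for $S$ as nothing more than a second-order Taylor remainder. Concretely, the entry of $S$ indexed by edge $(i,j)$ is $s_{ij}=\frac{\partial c_j}{\partial q_{ij}}$, so the Jacobian $\nabla S$ collects the second derivatives $\frac{\partial^2 c_j}{\partial q_{ij}\,\partial q_{lk}}$ of the cost functions. By the standing assumptions of this section, the first and second derivatives of the cost functions are finite and Lipschitz continuous, so $\nabla S$ is Lipschitz continuous on the (bounded) feasible region: there is a constant $L>0$ with $\Vert \nabla S(q+u)-\nabla S(q)\Vert \le L\Vert u\Vert$ for all admissible $u$. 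This Lipschitz property of $\nabla S$ is the one genuine input the argument needs.

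First I would write the remainder exactly, using the integral form of Taylor's theorem for the vector field $S$:
\[
S(q+h)-S(q)-\nabla S(q)h=\int_0^1\bigl(\nabla S(q+th)-\nabla S(q)\bigr)h\,dt .
\]
Taking norms and inserting the Lipschitz bound $\Vert \nabla S(q+th)-\nabla S(q)\Vert\le L\,t\,\Vert h\Vert$ inside the integral gives
\[
\Vert S(q+h)-S(q)-\nabla S(q)h\Vert\le \Vert h\Vert\int_0^1 L\,t\,\Vert h\Vert\,dt=\tfrac{L}{2}\Vert h\Vert^2 .
\]
This already produces the required quadratic dependence on $\Vert h\Vert$ for the \emph{unscaled} remainder, and it holds for every $h$, so in particular for those satisfying $\Vert X^{-1}h\Vert\le 1$; the SLC constraint on $h$ is therefore not actually needed for the $S$-part and can simply be noted as harmless.

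Second I would absorb the diagonal scaling matrix $X$ into the constant. Because each firm ceases to produce beyond a finite quantity, every coordinate of $q$ is bounded by $Q_{\max}$, so the diagonal entries of $X$ satisfy $|X_{ii}|\le Q_{\max}$. Passing from the Euclidean norm to the $\infty$-norm costs only a fixed equivalence constant in finite dimension. Hence
\[
\Vert X[S(q+h)-S(q)-\nabla S(q)h]\Vert_\infty\le Q_{\max}\,\Vert S(q+h)-S(q)-\nabla S(q)h\Vert_\infty\le \tfrac{L\,Q_{\max}}{2}\Vert h\Vert^2 ,
\]
and setting $C:=\tfrac{L\,Q_{\max}}{2}$ (up to the norm-equivalence factor) completes the proof.

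The step I expect to be the crux is the very first one: the quadratic bound on the Taylor remainder requires the \emph{Jacobian} $\nabla S$ to be Lipschitz, equivalently the second derivatives of the cost functions to be Lipschitz (a bounded-but-not-Lipschitz Hessian would only yield an $O(\Vert h\Vert)$ remainder, which is too weak). Everything else—the integral representation, the norm estimates, and the treatment of $X$—is routine once the boundedness of the production quantities and the Lipschitz continuity of $\nabla S$ are in hand, both of which are furnished by the assumptions stated at the start of this section.
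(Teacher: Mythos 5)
Your proof is correct and follows essentially the same route as the paper's: the paper bounds each component $q_{ij}\bigl(m_{ij}(q+h)-m_{ij}(q)-\nabla m_{ij}(q)^{T}h\bigr)$, where $m_{ij}=\partial c_j/\partial q_{ij}$, via the quadratic-remainder bound of Theorem~\ref{thm:quad} together with the bound $q_{ij}\le Q$ on production quantities, which is exactly your Taylor-remainder-plus-absorbing-$X$ argument written componentwise rather than in integral form. Your identification of the crux is also on target: the argument genuinely needs the \emph{second} derivatives of the cost functions to be Lipschitz (equivalently, $\nabla S$ Lipschitz), which is covered by the standing Assumptions paragraph of Section~\ref{general}, even though the condition lists in Theorems~\ref{thm:result:1} and~\ref{zhaotheorem} mention only Lipschitz first derivatives of the costs.
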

\begin{lemma}\label{SLCLHS_R}
Assume $X$ is the diagonal matrix with $X_{ii} = q_i$. 
  $\forall~h$ such that $\Vert X^{-1}h \Vert \leq 1$, $\exists C > 0$ such that
    $\Vert X[R(q+h)-R(q)-\nabla R(q)h]\Vert_\infty \le C \Vert h \Vert^2$.
\end{lemma}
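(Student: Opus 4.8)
The plan is to bound the scaled first-order Taylor remainder of $R$ by combining Lipschitz continuity of the Jacobian $\nabla R$ with the a priori boundedness of all production quantities guaranteed by the Assumptions of this section. The whole estimate reduces to the standard fact that a function with a Lipschitz gradient has a first-order remainder that is quadratically small, after accounting for the extra diagonal factor $X$.

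First I would record the explicit form of $\nabla R$. Since $r_{ij} = -P_i(D_i)-P_i'(D_i)q_{ij}$ depends on $\mathbf{q}$ only through the within-market total $D_i = \sum_{k\in N_\mathcal{M}(i)} q_{ik}$ and through $q_{ij}$ itself, the only nonzero entries of row $(i,j)$ of $\nabla R$ couple edges lying in the \emph{same} market $i$, namely
\begin{equation*}
\frac{\partial r_{ij}}{\partial q_{il}} = -P_i'(D_i) - P_i''(D_i)q_{ij} - P_i'(D_i)\delta_{jl}, \qquad l \in N_\mathcal{M}(i),
\end{equation*}
while entries coupling edges of distinct markets vanish. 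Thus every entry of $\nabla R$ is a finite sum of terms of the two shapes $P_i'(D_i)$ and $P_i''(D_i)\,q_{ij}$.

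Second I would argue that $\nabla R$ is Lipschitz on the relevant domain. By the Assumptions the quantities lie in a bounded box $0 \le q_{ij} \le Q$, and $P_i', P_i''$ are Lipschitz and bounded. As $D_i$ is a linear (hence Lipschitz) function of $\mathbf{q}$, the compositions $P_i'(D_i)$ and $P_i''(D_i)$ are Lipschitz in $\mathbf{q}$; and on the bounded box both $P_i''(D_i)$ and $q_{ij}$ are bounded Lipschitz functions, so their product $P_i''(D_i)q_{ij}$ is Lipschitz as well. Summing finitely many such entries shows $\nabla R$ is Lipschitz with some uniform constant $L_R$. Then I would invoke the integral remainder identity
\begin{equation*}
R(\mathbf{q}+h) - R(\mathbf{q}) - \nabla R(\mathbf{q})h = \int_0^1 \big[\nabla R(\mathbf{q}+th) - \nabla R(\mathbf{q})\big]h\,dt,
\end{equation*}
which applies because $\Vert X^{-1}h\Vert \le 1$ forces $|h_{ij}| \le q_{ij}$ componentwise, so the entire segment $\{\mathbf{q}+th : t\in[0,1]\}$ stays inside the nonnegative bounded box ($0 \le q_{ij}+th_{ij} \le 2q_{ij} \le 2Q$) on which the Lipschitz estimate for $\nabla R$ holds. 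Using $\Vert \nabla R(\mathbf{q}+th)-\nabla R(\mathbf{q})\Vert \le L_R t \Vert h\Vert$ and integrating gives $\Vert R(\mathbf{q}+h)-R(\mathbf{q})-\nabla R(\mathbf{q})h\Vert \le \tfrac{L_R}{2}\Vert h\Vert^2$. Finally, since $\Vert Xv\Vert_\infty \le (\max_{ij} q_{ij})\Vert v\Vert_\infty \le Q\Vert v\Vert$, multiplying by $X$ only costs a factor $Q$, yielding the claim with $C = QL_R/2$, a constant independent of $\mathbf{q}$ and $h$.

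The hard part will be the second step: carefully verifying that each entry of $\nabla R$ is genuinely Lipschitz in $\mathbf{q}$. The delicate term is the product $P_i''(D_i)\,q_{ij}$, whose Lipschitzness relies crucially on restricting to the bounded box, since a product of Lipschitz functions need not be Lipschitz on an unbounded domain. This is exactly where the finiteness-of-quantities assumption and the constraint $\Vert X^{-1}h\Vert \le 1$ (which confines the Taylor segment to that box) are used.
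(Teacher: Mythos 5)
Your proof is correct, but it takes a genuinely different route from the paper's. The paper argues entry by entry: it expands the $(i,j)$ component of $R(q+h)-R(q)-\nabla R(q)h$ explicitly and regroups it into three scalar pieces --- the first-order Taylor remainder of $P_i$ at $D_i$, the Taylor remainder of $P_i'$ scaled by $(q_{ij}+h_{ij})$, and the cross term $h_{ij}H_iP_i''(D_i)$ with $H_i=\sum_{k\in N_{\mathcal{M}}(i)}h_{ik}$ --- then bounds each piece using the scalar descent lemma (Theorem~\ref{thm:quad}) applied to $P_i$ and $P_i'$, the subset-sum inequality $|H_i|\le\sqrt{E}\,\Vert h\Vert$ (Lemma~\ref{lem:minor}), and the finiteness bound $|P_i''|\le M_2$, arriving at the explicit constant $C=QE(L_1+2QL_2+M_2)$. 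You instead establish Lipschitz continuity of the full Jacobian $\nabla R$ on the bounded box and apply the vector-valued integral form of the Taylor remainder once, paying a factor $Q$ for the diagonal scaling by $X$. The two arguments consume exactly the same hypotheses ($P_i'$ and $P_i''$ Lipschitz, $P_i''$ bounded, quantities bounded), and your key domain observation --- that $\Vert X^{-1}h\Vert\le 1$ forces $|h_{ij}|\le q_{ij}$ componentwise, so the segment $q+th$ stays in $[0,2Q]^E$ --- is implicitly used by the paper as well, since its constant relies on $q_{ij}+h_{ij}\le 2Q$. What your version buys is brevity and modularity: the whole estimate reduces to the textbook fact about Lipschitz Jacobians, and you correctly isolate the one delicate point, namely that the product entry $P_i''(D_i)\,q_{ij}$ of $\nabla R$ is Lipschitz only because the domain is bounded. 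What the paper's version buys is a fully explicit constant with transparent dependence on $E$, $Q$, $L_1$, $L_2$, $M_2$, and it sidesteps any reasoning about operator norms or products of Lipschitz functions --- the term that makes the product delicate in your approach surfaces there as the separate, trivially bounded cross term $h_{ij}H_iP_i''(D_i)$. Both yield the lemma as stated, since the lemma only asserts existence of some $C>0$.
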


If $R$ is assumed to be strongly monotone, we immediately have a lower bound on RHS of the SLC for $R$.
The following lemma gives a lower bound on RHS of the SLC for $S$.
\begin{lemma}
  \label{lem:smono}
  If cost functions are (strongly) convex $S$ is (strongly) monotone\footnote{A matrix $M \in \mathbb{R}^{n \times n}$ is \emph{strongly positive definite} iff $\forall~x \in \mathbb{R}^n$ and some $\alpha > 0$ $x^TMx \geq \alpha\Vert x \Vert^2$.}\footnote{  A differentiable function $f:D \mapsto \mathbb{R}^n$ is \emph{monotone} iff its Jacobian $\nabla f$ is positive semidefinite over its domain $D$.}\footnote{  A differentiable function $f:D \mapsto \mathbb{R}^n$ is \emph{strongly monotone} iff its Jacobian $\nabla f$ is strongly positive definite over its domain, $D$.}\footnote{A twice differentiable function $f:D \mapsto \mathbb{R}$ is \emph{strongly convex} iff its Hessian $\nabla^2 f$ is strongly positive definite over its domain, $D$.}.
\end{lemma}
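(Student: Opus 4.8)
The plan is to establish that the Jacobian $\nabla S$ is positive semidefinite in the convex case (respectively strongly positive definite in the strongly convex case) and then invoke the definitions of monotonicity and strong monotonicity from the footnotes. First I would write out the entries of $\nabla S$. Since the component of $S$ associated with edge $(i,j)$ is $s_{ij} = \partial c_j / \partial q_{ij}$, the entry of $\nabla S$ in the row indexed by $(i,j)$ and column indexed by $(l,k)$ equals $\partial^2 c_j / (\partial q_{ij}\,\partial q_{lk})$. The crucial structural observation is that each cost function $c_j$ depends only on firm $j$'s own strategy vector $\vec{s_j}$, i.e.\ on the quantities $q_{ij}$ with $i \in N_{\mathcal{F}}(j)$. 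Hence this second derivative vanishes whenever the two edges belong to different firms ($j \neq k$), so $\nabla S$ is, up to the fixed reordering of edges by firm, block diagonal, with the block for firm $j$ being exactly the Hessian $\nabla^2 c_j$.

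With this structure in hand, the monotonicity claim reduces to a clean decomposition of the associated quadratic form, which I would work with directly to sidestep any permutation of the matrix. For an arbitrary $x \in \mathbb{R}^E$, write $x^{(j)}$ for the subvector indexed by the edges incident to firm $j$. Because all cross-firm second derivatives vanish, the cross terms in $x^T \nabla S\, x$ drop out and
\[
x^T \nabla S\, x = \sum_{j \in \mathcal{F}} \big(x^{(j)}\big)^T \nabla^2 c_j\, x^{(j)}.
\]
If every $c_j$ is convex, each Hessian $\nabla^2 c_j$ is positive semidefinite, so every summand is nonnegative and $x^T \nabla S\, x \ge 0$; thus $\nabla S$ is positive semidefinite and $S$ is monotone.

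For the strong version, if every $c_j$ is strongly convex then by definition each $\nabla^2 c_j$ satisfies $(x^{(j)})^T \nabla^2 c_j\, x^{(j)} \ge \alpha_j \Vert x^{(j)} \Vert^2$ for some $\alpha_j > 0$. Setting $\alpha = \min_{j \in \mathcal{F}} \alpha_j > 0$ (a finite minimum over the finitely many firms) and summing the per-firm bounds gives
\[
x^T \nabla S\, x \ge \sum_{j \in \mathcal{F}} \alpha_j \Vert x^{(j)} \Vert^2 \ge \alpha \sum_{j \in \mathcal{F}} \Vert x^{(j)} \Vert^2 = \alpha \Vert x \Vert^2,
\]
where the last equality uses that the subvectors $x^{(j)}$ partition the coordinates of $x$. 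Hence $\nabla S$ is strongly positive definite and $S$ is strongly monotone. This argument is largely routine; the only point demanding care is the edge-ordering bookkeeping, since edges are numbered market-first and so the firm-blocks are interleaved rather than contiguous. That interleaving is just a simultaneous permutation of the rows and columns of $\nabla S$ and leaves the quadratic form—and therefore positive (semi)definiteness—unchanged, which is precisely why reasoning about $x^T \nabla S\, x$ rather than the matrix itself is the cleanest route.
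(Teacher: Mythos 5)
Your proof is correct and takes essentially the same route as the paper's: the paper sets $S' = \sum_{j \in \mathcal{F}} c_j$, observes $S = \nabla S'$, and invokes (strong) convexity of the sum, which is exactly your block-diagonal decomposition of $\nabla S$ into the Hessians $\nabla^2 c_j$ compressed into one line. If anything, your version is more careful: the paper's assertion that a sum of strongly convex functions is strongly convex silently relies on the fact that each $c_j$ is strongly convex only in firm $j$'s own coordinates while the firm blocks partition all $E$ coordinates, and your per-firm quadratic-form decomposition together with the choice $\alpha = \min_{j \in \mathcal{F}} \alpha_j$ makes that step explicit.
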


The following lemma combines the results of Lemma \ref{SLCLHS_S} and Lemma \ref{SLCLHS_R} using Lemma \ref{separation} to derive an upper bound for LHS of the SLC for $F$. We bound RHS of the SLC from below by using strong montonicity of $R$ and Lemma \ref{lem:smono}.
\begin{lemma} \label{SLC}
  $F$ satisfies SLC and is monotone if:
  \begin{enumerate}
  \item  Cost functions are convex. \item Marginal revenue function is monotone. \item Cost functions are strongly convex or marginal revenue function is strongly monotone.
  \end{enumerate}
\end{lemma}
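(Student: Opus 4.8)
The plan is to verify the two claims of the lemma separately, both working from the Jacobian decomposition $\nabla F = \nabla R + \nabla S$ noted just before the statement. Monotonicity of $F$ is the easy half: conditions~1 and~2, together with Lemma~\ref{lem:smono}, make both $\nabla S$ and $\nabla R$ positive semidefinite, so their sum $\nabla F$ is positive semidefinite as well, which is exactly monotonicity of $F$. The substantive work is the SLC claim, and for that the strategy is to bound the left-hand side of the SLC inequality from above and its right-hand side from below, each by a constant multiple of $\|h\|^2$, so that their ratio is a finite constant $\lambda$.

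For the upper bound I would invoke Lemma~\ref{SLCLHS_R} and Lemma~\ref{SLCLHS_S}, which supply constants $C_R$ and $C_S$ with
\[
\|X[R(q+h)-R(q)-\nabla R(q)h]\|_\infty \le C_R\|h\|^2, \qquad
\|X[S(q+h)-S(q)-\nabla S(q)h]\|_\infty \le C_S\|h\|^2,
\]
valid whenever $\|X^{-1}h\|\le 1$. Taking $C=\max(C_R,C_S)$ so that both bounds hold with this common constant, Lemma~\ref{separation} immediately yields
\[
\|X[F(q+h)-F(q)-\nabla F(q)h]\|_\infty \le 2C\|h\|^2,
\]
the desired upper bound on the SLC left-hand side for $F$.

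For the lower bound on the right-hand side I would write $h^T\nabla F(q)h = h^T\nabla R(q)h + h^T\nabla S(q)h$ and argue that both summands are nonnegative (monotonicity of $R$ from condition~2, monotonicity of $S$ from condition~1 via Lemma~\ref{lem:smono}), while condition~3 forces one of them to be bounded below by $\alpha\|h\|^2$ for some $\alpha>0$: either $h^T\nabla S(q)h\ge\alpha\|h\|^2$ from strong convexity of the cost functions through Lemma~\ref{lem:smono}, or $h^T\nabla R(q)h\ge\alpha\|h\|^2$ from strong monotonicity of $R$. Since the other term is at least zero, this gives $h^T\nabla F(q)h\ge\alpha\|h\|^2\ge 0$, so in particular $|h^T\nabla F(q)h|\ge\alpha\|h\|^2$. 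Combining with the upper bound, the SLC inequality holds with $\lambda = 2C/\alpha$.

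The calculations behind the two bounds are packaged into the earlier lemmas, so the only real obstacle is conceptual: recognizing that condition~3 is not an afterthought but precisely what makes the right-hand side strictly positive. If $\nabla F$ were merely positive semidefinite, $h^T\nabla F(q)h$ could vanish for some $h\ne 0$, forcing the SLC right-hand side to zero and leaving no finite $\lambda$; the strong version demanded in condition~3 is exactly what rules out this degeneracy, so that the matching $\|h\|^2$ scaling on both sides produces a usable constant.
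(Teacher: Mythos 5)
Your proposal is correct and follows essentially the same route as the paper's own proof: the paper likewise combines Lemmas \ref{SLCLHS_R}, \ref{SLCLHS_S} and \ref{separation} to bound the left-hand side of the SLC by $O(\Vert h \Vert^2)$, uses Lemma \ref{lem:smono} together with strong monotonicity of $R$ to bound the right-hand side below by $\Omega(\Vert h \Vert^2)$ under condition~3, and concludes monotonicity of $F$ from its being a sum of two monotone functions. Your version is merely a more careful write-up of the same argument (e.g., taking $C=\max(C_R,C_S)$ so that Lemma \ref{separation} applies with a common constant, and noting that nonnegativity of both quadratic forms justifies dropping the absolute value), details the paper's terse proof leaves implicit.
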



We wrap up with the following theorem, which summarizes the main result of this section.
Lemma \ref{SLC} guarantees that our problem satisfies the two conditions 
mentioned in \citeauthor{zhao1999two}~\citeyear{zhao1999two}. Therefore, we can prove the following theorem.
\begin{theorem}\label{zhaotheorem}
Algorithm \ref{generalalg} converges to an equilibrium of Network Cournot Competition in time $O\big(E^2\log(\mu_0/\epsilon)\big)$ under the following assumptions:
\begin{enumerate}\setlength{\itemindent}{.2in}
\item The cost functions are strongly convex.
\item The marginal revenue function is strongly monotone.
\item The first derivative of cost functions and price functions and the second derivative of price functions are Lipschitz continuous.
\end{enumerate}
This algorithm outputs an approximate solution $(F(q^*),q^*)$ satisfying $(q^*)^T F(q^*)/n \le \epsilon$ where $\mu_0 = (q_0)^T F(q_0)/n$, and $(F(q_0),q_0)$ is the initial feasible point \footnote{Initial feasible solution can be trivially found. E.g., it can be the same production quantity along each edge, large enough to ensure losses for all firms. Such quantity can easily be found by binary search between [0, Q].}. 
\end{theorem}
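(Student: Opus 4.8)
The plan is to obtain the statement as a direct consequence of the convergence theorem for monotone nonlinear complementarity problems satisfying the Scaled Lipschitz Condition, due to \citeauthor{zhao1999two}~\citeyear{zhao1999two}. Recall that we have already reduced the search for an equilibrium of NCC to solving the complementarity problem (\ref{CP}) for the marginal profit function $F$, and that Algorithm \ref{generalalg} is precisely the interior-point iteration of that paper instantiated on this $F$. Thus it suffices to certify that $F$ meets the two structural hypotheses required there --- monotonicity and SLC --- and then to read off the promised iteration count and per-iteration cost.

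First I would verify the hypotheses. Conditions (1) and (2) of the theorem assert that the cost functions are strongly convex and the marginal revenue function is strongly monotone; in particular they imply the weaker premises of Lemma \ref{SLC} (convex costs, monotone marginal revenue, and the strong version of at least one of the two). Lemma \ref{SLC} then yields exactly that $F = R + S$ is monotone and satisfies SLC. The role of condition (3), the Lipschitz continuity of $c_j'$, $P_i'$ and $P_i''$, is to render the SLC constant $\lambda$ finite: it is through these Lipschitz bounds that Lemmas \ref{SLCLHS_R} and \ref{SLCLHS_S} produce the quadratic upper bounds $C\Vert h\Vert^2$ on the left-hand side of the SLC for $R$ and $S$, which Lemma \ref{separation} combines into the bound for $F$, while strong monotonicity of $R$ together with Lemma \ref{lem:smono} bounds the right-hand side $\vert h^T \nabla F(q) h\vert$ from below; the ratio of these two bounds is the scalar $\lambda$.

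With both hypotheses in hand I would invoke the theorem of \citeauthor{zhao1999two}. Its algorithm, started from any strictly feasible point, drives the normalized complementarity gap $\mu = q^T F(q)/n$ down by a fixed factor at each step, so after $O(\log(\mu_0/\epsilon))$ iterations the gap satisfies $(q^*)^T F(q^*)/n \le \epsilon$; a strictly feasible starting point $(F(q_0),q_0)$ exists by the footnoted binary-search construction (equal production along every edge, large enough that every firm loses money, so that $F(q_0) > 0$). Each iteration forms and applies the Jacobian $\nabla F$ and solves the associated linear system; since $\nabla F$ is an $E \times E$ matrix whose only nonzero off-diagonal entries couple edges sharing a common market, this work is $O(E^2)$. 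Multiplying the iteration count by the per-iteration cost gives the claimed $O(E^2\log(\mu_0/\epsilon))$ running time, and because the unique solution of (\ref{CP}) coincides with the unique equilibrium of NCC, the returned $q^*$ is an approximate equilibrium with the stated gap.

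The main obstacle, I expect, is not the invocation of Zhao's theorem itself but the bookkeeping needed to certify its hypotheses uniformly over the feasible region: one must ensure that the SLC constant $\lambda$ delivered by Lemma \ref{SLC} is genuinely independent of the current iterate $q$, so that the contraction factor, and hence the iteration count, does not degrade as iterates approach the boundary, and that the per-iteration linear-algebra cost is correctly charged to the sparsity pattern of $\nabla F$ rather than to a dense $E \times E$ solve. Both points hinge on the finiteness and Lipschitz assumptions of condition (3) and on the bounded feasible region guaranteed by the assumption that production ceases to be profitable beyond a finite quantity.
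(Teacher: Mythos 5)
Your verification of the hypotheses is exactly the paper's route: conditions (1)--(3) feed Lemmas \ref{SLCLHS_R}, \ref{SLCLHS_S}, \ref{separation} and \ref{lem:smono}, Lemma \ref{SLC} certifies monotonicity and the SLC for $F$, and the theorem follows by invoking the convergence result of \cite{zhao1999two} (Theorem \ref{interior}). Up to that point your proposal and the paper coincide, including the observation that condition (3) is what makes the SLC constant $\lambda$ finite and the footnoted construction of the initial feasible point.

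The genuine gap is in your accounting for the factor $E^2$. You assert that the normalized gap $\mu = q^T F(q)/n$ falls by a fixed constant factor per iteration, giving $O(\log(\mu_0/\epsilon))$ iterations, and that each iteration costs $O(E^2)$ in linear algebra. Neither half holds. Zhao and Han's guarantee, as quoted in Theorem \ref{interior}, is $O\big(n\max(1,\lambda)\log(\mu_0/\epsilon)\big)$: the dimension and the SLC constant sit \emph{inside} the iteration count, because the per-step contraction degrades with $n\max(1,\lambda)$ rather than being a dimension-free constant --- if your constant-factor claim were true, Theorem \ref{interior} would carry no dependence on $n$ or $\lambda$ at all. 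The paper instead obtains $E^2$ as (dimension $E$) times ($\lambda = O(E)$): the proof of Lemma \ref{SLCLHS_R} yields the explicit SLC numerator constant $C = QE(L_1 + 2QL_2 + M_2)$, linear in $E$, while strong convexity of costs or strong monotonicity of $R$ bounds the denominator $\vert h^T \nabla F(q) h\vert$ below by $\Omega(\Vert h \Vert^2)$, so $\lambda = O(E)$ and substituting $n = E$ into Theorem \ref{interior} gives $O\big(E^2\log(\mu_0/\epsilon)\big)$. Your per-iteration sparsity argument is also wrong on its own terms: $\nabla F = \nabla R + \nabla S$ has nonzero off-diagonal entries not only for pairs of edges sharing a market (from $R$) but also for pairs of edges sharing a firm (from $\nabla S$, since $c_j$ couples all of firm $j$'s quantities), and solving the resulting Newton system is not $O(E^2)$ in general. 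To repair the argument, drop the constant-contraction and per-iteration-cost claims and instead track $\lambda$ quantitatively through Lemmas \ref{SLCLHS_R}--\ref{SLC} before substituting into Theorem \ref{interior}.
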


\begin{algorithm}
\caption{Compute quantities at equilibrium for the Cournot game.}
\begin{algorithmic}[1]
\Procedure{NETWORK-COURNOT}{$P_i, c_j, \epsilon$}\Comment{The price function $P_i$ for each market $i \in \mathcal{M}$, the cost function $c_j$ for each firm $j\in \mathcal{F}$, and $\epsilon$ as the desired tolerance}
  \State Calculate vector $F$ of length $E$ as defined in \eqref{fij}.
  \State Find the initial feasible\footnotemark solution $(F(x_0),x_0)$ for the complementarity problem. This solution should satisfy $x_0 \ge 0$ and $F(x_0) \ge 0$.
  \State Run Algorithm $3.1$ from \cite{zhao1999two} to find the solution $(F(x^*),x^*)$ to the CP characterized by $F$.
  \State \Return $x^*$ \Comment{The vector $q$ of quantities produced by firms at equilibrium}
\EndProcedure
\end{algorithmic}
\label{generalalg}
\end{algorithm}

\subsection{Price Functions for Monotone Marginal Revenue Function}\label{charom}
This section will be incomplete without a discussion of price functions that satisfy the convergence conditions for Algorithm \ref{generalalg}. We will prove that a wide
variety of price functions preserve monotonicity of the marginal revenue function. To this end, we prove the following lemma.


\begin{lemma}\label{PSD_R}
$\nabla R({q})$ is a positive semidefinite matrix $\forall~{q} \geq 0$, i.e., $R$ is monotone, provided that for all markets $|P'_i(D_i)| \geq \frac{|P''_i(D_i)|D_i}{2}$.
\end{lemma}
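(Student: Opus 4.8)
The plan is to exploit the fact that the marginal revenue along an edge $(i,j)$ depends only on the quantities supplied to market $i$, so that $\nabla R$ decomposes into independent blocks, one per market, and then to certify nonnegativity of each block's quadratic form by a direct estimate. First I would observe that since $r_{ij} = -P_i(D_i) - P_i'(D_i) q_{ij}$ and $D_i = \sum_{k \in N_\mathcal{M}(i)} q_{ik}$, the entry $\partial r_{ij}/\partial q_{lk}$ vanishes whenever $l \neq i$. Hence $\nabla R$ is block diagonal with one block $B_i$ of size $|N_\mathcal{M}(i)|$ per market $i$, and it suffices to show every $B_i$ has nonnegative quadratic form. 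Since these blocks are not symmetric, positive semidefiniteness here must be read as $x^T B_i x \ge 0$ for all $x$, equivalently that the symmetric part of $B_i$ is PSD.

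Next I would compute the block explicitly. Differentiating gives, for $j,k \in N_\mathcal{M}(i)$,
\[
\frac{\partial r_{ij}}{\partial q_{ik}} = -P_i'(D_i) - P_i''(D_i) q_{ij} - P_i'(D_i)\delta_{jk},
\]
so in matrix form $B_i = -P_i'(D_i)(J + I) - P_i''(D_i)\,\mathbf{q}_i \mathbf{1}^T$, where $J$ is the all-ones matrix, $I$ the identity, $\mathbf{q}_i = (q_{ij})_j$, and $\mathbf{1}$ the all-ones vector. Writing $s = \sum_j x_j$, $u = \sum_j x_j^2$, and $t = \sum_j x_j q_{ij}$, the quadratic form becomes
\[
x^T B_i x = |P_i'(D_i)|\,(s^2 + u) - P_i''(D_i)\, t\, s,
\]
where I used $-P_i' = |P_i'| \ge 0$ because the price function is decreasing.

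The heart of the argument is to dominate the (possibly negative) cross term by the first term using the hypothesis $|P_i'(D_i)| \ge |P_i''(D_i)| D_i / 2$. Here I would bound the cross term by two elementary inequalities: Cauchy--Schwarz together with $q_{ij} \ge 0$ gives $|t| \le \|x\|_2 \|\mathbf{q}_i\|_2 \le \sqrt{u}\, D_i$ (the last step because $\|\mathbf{q}_i\|_2 \le \|\mathbf{q}_i\|_1 = D_i$ for nonnegative entries), and AM--GM gives $s^2 + u \ge 2\sqrt{u}\,|s|$. Chaining these,
\[
|P_i'|(s^2 + u) \ge \tfrac{|P_i''| D_i}{2}(s^2 + u) \ge |P_i''| D_i \sqrt{u}\,|s| \ge |P_i''|\,|t|\,|s| \ge P_i''\, t\, s,
\]
so $x^T B_i x \ge 0$ for every $x$. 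Since every block is PSD and $\nabla R$ is block diagonal, $\nabla R(\mathbf{q})$ is PSD for all $\mathbf{q} \ge 0$, which by the definition in the excerpt is exactly monotonicity of $R$.

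I expect the main obstacle to be the non-symmetry of the blocks introduced by the $P_i''(D_i) q_{ij}$ term (which depends on the row index only) together with the fact that $P_i''$ may have either sign; the decisive idea that resolves both is the bound $|t| \le \sqrt{u}\,D_i$, which converts the quantity constraint $D_i = \sum_j q_{ij}$ into precisely the factor appearing in the hypothesis, after which the AM--GM step closes the estimate. A minor point to handle carefully is the degenerate case $P_i''(D_i) = 0$, where the cross term vanishes and nonnegativity is immediate.
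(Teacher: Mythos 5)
Your proof is correct and takes essentially the same route as the paper's: a per-market block decomposition of $\nabla R$, the same entrywise computation of the block, Cauchy--Schwarz together with $\Vert \mathbf{q}_i\Vert_2 \le \Vert \mathbf{q}_i\Vert_1 = D_i$ to bound the cross term, and AM--GM combined with the hypothesis $2|P_i'(D_i)| \ge |P_i''(D_i)|D_i$ to close the estimate. If anything, your write-up is slightly more careful than the paper's, since you state explicitly the AM--GM step (left implicit in the paper's final inequality) and the point that the non-symmetric block must be read as positive semidefinite in the quadratic-form sense.
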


While the above condition may seem somewhat restrictive, they allow the problem to be solved on a wide range of price functions. Intuitively, the condition implies that linear and quadratic terms dominate higher order terms. We present the following corollaries as examples of classes of functions that satisfy the above condition.

\begin{corollary}
  All decreasing concave quadratic price functions satisfy Lemma \ref{PSD_R}.
\end{corollary}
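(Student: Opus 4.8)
The plan is to reduce the corollary to a direct, market-by-market verification of the inequality $|P_i'(D_i)| \ge \tfrac{1}{2}|P_i''(D_i)|D_i$ required by Lemma~\ref{PSD_R}. First I would write an arbitrary quadratic price function for market $i$ as $P_i(D_i) = a_i + b_i D_i + c_i D_i^2$ and translate the two hypotheses into sign conditions on the coefficients. Concavity of $P_i$ is equivalent to $P_i''(D_i) = 2c_i \le 0$, i.e.\ $c_i \le 0$. Since $c_i \le 0$, the derivative $P_i'(D_i) = b_i + 2c_i D_i$ is itself nonincreasing in $D_i$, so its largest value on the feasible domain $D_i \ge 0$ is attained at $D_i = 0$; hence the requirement that $P_i$ be decreasing on $[0,\infty)$ is equivalent to $P_i'(0) = b_i \le 0$.

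Next I would compute both sides of the Lemma~\ref{PSD_R} inequality using this sign information. Because $P_i$ is decreasing we have $|P_i'(D_i)| = -b_i - 2c_i D_i$, and because $P_i$ is concave we have $|P_i''(D_i)| = -2c_i$. Substituting these into the target inequality and simplifying reduces it to the elementary claim $-b_i \ge c_i D_i$ for every $D_i \ge 0$.

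Finally I would close the argument by observing that this reduced inequality holds trivially from the established signs: $b_i \le 0$ gives $-b_i \ge 0$, while $c_i \le 0$ and $D_i \ge 0$ give $c_i D_i \le 0$, so $-b_i \ge 0 \ge c_i D_i$. Since $i$ was arbitrary, the hypothesis of Lemma~\ref{PSD_R} holds for every market, and the lemma then yields positive semidefiniteness of $\nabla R(q)$, i.e.\ monotonicity of $R$. I do not expect a serious obstacle here; the only point needing care is the domain argument that turns ``decreasing'' into the coefficient bound $b_i \le 0$, since one must verify $P_i' \le 0$ over the whole nonnegative half-line rather than at a single point, and it is precisely the concavity of $P_i$ that guarantees checking the endpoint $D_i = 0$ suffices.
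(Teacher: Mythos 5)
Your proof is correct, and since the paper states this corollary without supplying any proof, your direct coefficient verification---translating concavity and monotonicity into $c_i \le 0$ and $b_i \le 0$, then reducing the condition of Lemma~\ref{PSD_R} to $-b_i \ge c_i D_i$---is exactly the straightforward argument the authors evidently intended. The one point you flag, that decreasingness on all of $[0,\infty)$ follows from checking $P_i'(0) \le 0$ because concavity makes $P_i'$ nonincreasing, is handled properly.
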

\begin{corollary}
  All decreasing concave cubic price functions satisfy Lemma \ref{PSD_R}.
\end{corollary}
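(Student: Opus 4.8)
The plan is to verify the inequality $|P_i'(D_i)| \ge \frac{|P_i''(D_i)|\,D_i}{2}$ of Lemma~\ref{PSD_R} directly for an arbitrary decreasing concave cubic. First I would parametrize such a price function as $P_i(D_i) = a_i D_i^3 + b_i D_i^2 + c_i D_i + d_i$ and record $P_i'(D_i) = 3a_i D_i^2 + 2b_i D_i + c_i$ and $P_i''(D_i) = 6a_i D_i + 2b_i$. Since the function is decreasing on the (nonnegative, finite) range of demands we have $P_i'(D_i) \le 0$, and since it is concave we have $P_i''(D_i) \le 0$; together with $D_i \ge 0$ this lets me resolve both absolute values by a sign flip, so that the target inequality is equivalent to $-P_i'(D_i) \ge -\frac{P_i''(D_i)\,D_i}{2}$, i.e.\ to $P_i'(D_i) - \frac{D_i}{2}P_i''(D_i) \le 0$.

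The key step is then a one-line algebraic identity: substituting the two derivatives, the cubic-in-$D_i$ terms cancel and $P_i'(D_i) - \frac{D_i}{2}P_i''(D_i) = b_i D_i + c_i$. Thus the whole condition collapses to $b_i D_i + c_i \le 0$ for all admissible $D_i \ge 0$. To close the argument I would extract the signs of the two low-order coefficients by evaluating the hypotheses at $D_i = 0$: decreasingness gives $P_i'(0) = c_i \le 0$, and concavity gives $P_i''(0) = 2b_i \le 0$, so $b_i \le 0$. Since $D_i \ge 0$, both summands of $b_i D_i + c_i$ are nonpositive, giving $b_i D_i + c_i \le 0$ and hence the Lemma~\ref{PSD_R} condition on every market. (The same computation specialized to $a_i = 0$ recovers the preceding quadratic corollary.)

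There is no genuinely hard obstacle here; the work is entirely in the bookkeeping. The one point that deserves care is the resolution of the absolute values, which is only valid because $P_i'$ and $P_i''$ keep constant (nonpositive) sign throughout the relevant demand range — this is exactly what ``decreasing'' and ``concave'' supply. The pleasant surprise that makes the statement hold over the \emph{entire} range rather than just near a point is that the reduced inequality $b_i D_i + c_i \le 0$ needs only the signs $b_i \le 0$ and $c_i \le 0$, and these in turn follow from the hypotheses evaluated at the single point $D_i = 0$; the leading coefficient $a_i$ drops out of the condition completely. I would therefore emphasize the identity $P_i' - \frac{D_i}{2}P_i'' = b_i D_i + c_i$ as the heart of the proof, with the sign extraction as the only remaining ingredient.
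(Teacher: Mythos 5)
Your proof is correct: resolving the absolute values via $P_i' \le 0$ (decreasing) and $P_i'' \le 0$ (concave) on $D_i \ge 0$ is valid, the identity $P_i'(D_i) - \frac{D_i}{2}P_i''(D_i) = b_iD_i + c_i$ checks out (the cubic coefficient $a_i$ cancels exactly as you say), and $c_i = P_i'(0) \le 0$ together with $b_i = \frac{1}{2}P_i''(0) \le 0$ indeed forces $b_iD_i + c_i \le 0$ for all $D_i \ge 0$, which is the condition of Lemma~\ref{PSD_R}. The paper states this corollary without any proof (it appears only as an example class following Lemma~\ref{PSD_R}, with no argument in the appendix), so there is nothing to diverge from; your direct verification is evidently the routine check the authors intended, and as you note it specializes at $a_i = 0$ to the preceding quadratic corollary as well.
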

\begin{corollary}
  Let $a_i \in \mathbb{R}_{\geq 0}^n$ for $i \in \{1\ldots k\}$ be arbitrary positive vectors. Let $f:\mathbb{R}^n_{\geq 0} \mapsto R$ be the following function: $f(x) = \sum_{i \in \{1\ldots k\}}(a_i^Tx)\log (a_i^Tx)$. Then $f$ (and $-f$) satisfies Lemma \ref{PSD_R}.
\end{corollary}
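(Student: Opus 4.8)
The plan is to read this corollary not as a fresh spectral computation but as a direct instance-check of the hypothesis of Lemma~\ref{PSD_R}, namely the per-market inequality $|P'_i(D_i)| \ge \frac{|P''_i(D_i)|D_i}{2}$. First I would strip $f$ down to its separable pieces, writing $f(x) = \sum_{i} \phi(a_i^T x)$ with the single scalar profile $\phi(t) = t\log t$, so that each summand is controlled entirely by the aggregate quantity $t = a_i^T x$ carried by the corresponding market. This isolates the one-dimensional function whose derivatives must be tested against the condition of Lemma~\ref{PSD_R}, reducing the multivariate statement to a scalar one.

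Next I would identify the price (inverse demand) function induced by each term. Reading $\phi(t) = t\log t$ as the per-market revenue, the associated price is its average, $P_i(t) = \phi(t)/t = \log t$, up to an additive constant and a scale $c_1$; in every case $P_i$ is logarithmic, so $P'_i(t) = c_1/t$ and $P''_i(t) = -c_1/t^2$. As a consistency check, the multivariate Hessian is $\nabla^2 f(x) = \sum_i (a_i^T x)^{-1} a_i a_i^T \succeq 0$, confirming that $f$ is convex (and $-f$ concave) on the positive orthant, which is the regime in which the entropy profile is well-behaved.

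The verification is then a two-line computation: $|P'_i(t)| = |c_1|/t$, while $\frac{|P''_i(t)|\,t}{2} = \frac{|c_1|}{2t}$, so $|P'_i(t)| = 2\cdot\frac{|P''_i(t)|\,t}{2} \ge \frac{|P''_i(t)|\,t}{2}$ for every $t>0$, with a uniform factor-two of slack. Crucially, the inequality of Lemma~\ref{PSD_R} is phrased entirely through the absolute values $|P'_i|$ and $|P''_i|$, hence it is invariant under $P_i \mapsto -P_i$; this is precisely what lets the statement assert that both $f$ and $-f$ satisfy the hypothesis. Applying Lemma~\ref{PSD_R} market by market then yields $\nabla R(q) \succeq 0$, i.e.\ $R$ is monotone, which is the claim.

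I expect the only genuine obstacle to be conceptual rather than computational: pinning down exactly which scalar price function the multivariate entropy $f$ induces, and justifying that the general nonnegative weights $a_i$ collapse cleanly onto the per-market scalar test of Lemma~\ref{PSD_R}. Once that identification is fixed the calculus is immediate, since the ratio $|P'_i(t)|/(|P''_i(t)|\,t)$ equals the constant $1$ independently of $t$; there is thus no boundary case or asymptotic regime as $t \to 0^+$ or $t \to \infty$ to control, and the bound holds uniformly on $t>0$.
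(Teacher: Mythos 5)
The paper states this corollary without proof, treating it as an immediate instance-check of the hypothesis of Lemma~\ref{PSD_R}, and your verification is exactly that intended check: decompose $f(x)=\sum_i \phi(a_i^Tx)$ with $\phi(t)=t\log t$, read each term as per-market revenue so that the induced price is $P_i(t)=c_0+c_1\log t$, and compute $|P_i'(t)|=|c_1|/t=|P_i''(t)|\,t\ \geq\ \frac{|P_i''(t)|\,t}{2}$ for every $t>0$. Your observation that the inequality involves only the absolute values $|P_i'|$ and $|P_i''|$, and is therefore invariant under $P_i\mapsto -P_i$, is also the right explanation of why the corollary can name both $f$ and $-f$. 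So in substance your proposal is correct and coincides with the (implicit) argument the authors had in mind, including your honest flagging of the one conceptual step --- identifying which scalar price function the multivariate entropy induces --- as the only real content.

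One caveat about your closing sentence: concluding $\nabla R(q)\succeq 0$ for \emph{both} branches overreaches for $f$ itself. The proof of Lemma~\ref{PSD_R} uses the section's standing assumption that prices are decreasing ($P_i'\le 0$), which $P_i(t)=\log t$ violates; indeed, with a single firm in a single market and $P=\log$, one gets $\nabla R=-2P'(q)-P''(q)\,q=-2/q+1/q=-1/q<0$, so $R$ is not monotone there. Only the hypothesis-inequality is sign-invariant; the lemma's conclusion (monotone marginal revenue) actually obtains for the decreasing branch, i.e., for prices $-\log(a_i^Tx)$ arising from $-f$. This looseness is arguably inherited from the corollary's own phrase ``satisfies Lemma~\ref{PSD_R},'' which is best read as ``satisfies the condition of Lemma~\ref{PSD_R}''; under that reading, which you yourself articulate in your sign-invariance remark, your argument is complete.
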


\section{Algorithm for Cournot Oligopoly} 
\label{sec-1}
In this section we present a new algorithm for computing equilibrium quantities produced by firms in a Cournot oligopoly, i.e., when the firms compete over a single market. Cournot Oligopoly is a standard model in Economics and computation of Cournot Equilibrium is an important problem in its own right. A considerable body of literature has been dedicated to this problem \cite{thorlund1990iterative, kolstad1991computing, okuguchi1985existence, mathiesen1985computation, campos2010solving}. All of the earlier works that compute Cournot equilibrium for a general class of price and cost functions rely on solving a Linear Complementarity Problem or a Variational Inequality which in turn are set up as convex optimization problems of size $O(n)$ where $n$ is the number of firms in oligopoly. Thus, the runtime guarantee of the earlier works is $O(n^3)$ at best. We give a novel combinatorial algorithm for this important problem when the quantities produced are integral. Our algorithm runs in time $n\log^2 (Q_{max})$ where $Q_{max}$ is an upper bound on total quantity produced at equilibrium. We note that, for two reasons, the restriction to integral quantities is practically no restriction at all. Firstly, in real-world all commodities and products are traded in integral units. Secondly, this algorithm can easily be adapted to compute approximate Cournot-Nash equilibrium for the continuous case and since the quantities at equilibrium may not be rational numbers, this is the best we can hope for.

As we have only a single market rather than a set of markets, we make a few changes to the notation. 
Let $[n] = \{1,\ldots,n\}$ be the set of firms competing over the single market. 
Let $\mathbf{q} = (q_1,q_2,\dots,q_n)$ be the set of all quantities produced by the firms. 
Note that in this case, each firm is associated with only one quantity. 
Let $Q = \sum_{i \in [n]}{q_i}$ be the sum of the total quantity of good produced in the market. 
In this case, there is only a single inverse demand function 
$P:\mathbb{Z} \mapsto \mathbb{R}_{\geq 0}$, which maps total supply, $Q$, to market price.
We assume that price decreases as the total quantity produced by the firms increases, i.e., $P$ is a decreasing function of $Q$. 
For each firm $i \in [n]$, the function $c_i:\mathbb{Z} \mapsto \mathbb{R}_{\geq 0}$ 
denotes the cost to this firm when it produces quantity $q_i$ of the good in the market. 
The profit of firm $i \in [n]$ as a function of $q_i$ and $Q$, denoted $\pi_i(q_i,Q)$, is $P(Q)q_i - c_i(q_i)$. 
Also let $f_i(q_i,Q) = \pi_i(q_i+1,Q+1)-\pi_i(q_i,Q)$ be the marginal profit for 
firm $i \in [n]$ of producing one extra unit of product. 
Although the quantities are nonnegative integers, 
for simplicity we assume the functions $c_i$, $P$, $\pi_i$ and $f_i$ 
are zero whenever any of their inputs are negative. Also, we refer to the forward difference $P(Q+1)-P(Q)$ by $P'(Q)$.

\subsection{Polynomial time algorithm}
We leverage the supermodularity of price functions and Topkis' Monotonicity Theorem \cite{topkis1978minimizing} (Theorem \ref{thm:topkis}) to design a nested binary search algorithm which finds the equilibrium 
quantity vector $\mathbf{q}$ when the price function is a decreasing function of $Q$ and the cost functions of the firms are convex. Intuitively the algorithm works as follows. 
At each point we guess $Q'$ to be the total quantity of good produced by all the firms. 
Then we check how good this guess is by computing for each firm the set of quantites 
that it can produce at equilibrium if we assume the total quantity is the fixed integer $Q'$. 
We prove that the set of possible quantities for each firm at equilibrium,
assuming a fixed total production, is a consecutive set of integers. 
Let $I_i=\{q_i^l, q_i^l+1,\ldots, q_i^u-1,q_i^u\}$ be the range 
of all possible quantities for firm $i \in [n]$ assuming $Q'$ is the 
total quantity produced in the market. 
We can conclude $Q'$ was too low a guess if $\sum_{i \in [n]}{q_i^l} > Q'$. 
This implies our search should continue among total quantities above $Q'$. 
Similarly, if $\sum_{i \in [n]}{q_i^u} < Q'$, we can conclude our guess was too high, 
and the search should  continues among total quantities below $Q'$. 
If neither case happens, then for each firm $i \in [n]$, 
there exists a $q'_i \in I_i$ such that $Q' = \sum_{i \in [n]}{q'_i}$ 
and firm $i$ has no incentive to change this quantity 
if the total quantity is $Q'$. 
This means that the set $\mathbf{q'} = \{q_1',\dots,q_n'\}$ 
forms an equilibrium of the game and the search is over.

\begin{algorithm}
  \begin{algorithmic}[1]
    \Procedure{COURNOT-OLIGOPOLY}{$P, c_i$} \Comment{The market price function $P$, the cost functions $c_i$ for each firm $i\in [n]$}
    \State Let $Q_{\min} := 1$
    \State Let $Q_i^*$ be the optimal quantity that is produced by a firm when it is the only firm in the market
    \State Let $Q_{\max} := \sum_{i \in [n]}Q_i^*$
    \While{$Q_{\min} \leq Q_{\max}$}
      \State $Q' := \lfloor{\frac{Q_{\min} + Q_{\max}}{2}}\rfloor$
      \ForAll{$i \in [n]$}
        \State Binary search to find the minimum nonnegative integer $q_i^l$ satisfying 
        \State $f_i(q_i^l,Q') = \pi_i(q_i^l+1, Q'+1) - \pi_i(q_i^l, Q') \leq 0$ \label{eq:alg1}
        \State Binary search to find the maximum integer $q_i^u \leq Q'+1$ satisfying  
        \State $f_i(q_i^u-1,Q'-1) = \pi_i(q_i^u, Q') - \pi_i(q_i^u-1, Q'-1) \geq 0$ \label{eq:alg2}
        \State Let $I_i = \{q_i^l,\dots,q_i^u\}$ be the set of all integers between $q_i^l$ and $q_i^u$.
      \EndFor
      \If{$\Sigma_{i \in [n]}q_i^l > Q' $}
        \State $Q_{\min} := Q'+1$
      \ElsIf{$\Sigma_{i \in [n]}q_i^u < Q'$}
        \State $Q_{\max} := Q'-1$
      \Else
        \State Find a vector of quantities $\mathbf{q}=(q_1,\ldots,q_n)$ such that $q_i \in I_i$ and $\sum_{i \in [n]} q_i = Q'$
        \State \textbf{return} $\mathbf{q}$
      \EndIf
    \EndWhile
    \EndProcedure
  \end{algorithmic}
    \caption{Compute quantities produced by firms in a Cournot oligopoly.}
  \label{alg:bin}
\end{algorithm}

The pseudocode for the algorithm is provided in Algorithm~\ref{alg:bin}, whose correctness we prove next. 
The rest of this section is dedicated to proving Theorem~\ref{thm:bin:main}. Here, we present a brief outline of the proof. To help with the proof we define the functions $F_i$ and $G_i$ as follows. Let $F_i(q_i, Q) = P(Q+1)q_i + \frac{P'(Q)}{2}(q_i-\frac{1}{2})^2 - c(q_i)$. We note that the first difference of $F(q_i, Q)$ is the marginal profit for firm $i$ for producing one more quantity given that the total production quantity is $Q$ and firm $i$ is producing $q_i$. Let $G_i(q_i, Q) = F_i(q_i, Q-1)$. The first difference of $G_i(q_i, Q)$ is the marginal loss for firm $i$ for producing one less quantity given that the total production quantity is $Q$ and firm $i$ is producing $q_i$. Maximizers of these functions are closely related to equilibrium quantities a firm can produce given that the total quantity in market is $Q$. We make this connection precise and prove the validity of binary search in Lines 8-12 of Algorithm~\ref{alg:bin} in Lemma~\ref{lem:bin:1}. In Lemma~\ref{lem:bin:2}, we prove that $F_i$ and $G_i$ are supermodular functions of $q_i$ and $-Q$. In lemmas \ref{lem:bin:3} and \ref{lem:bin:3a}, we use Topkis' Monotonicity Theorem to prove the monotonicity of maximizers of $F_i$ and $G_i$. This, along with lemmas \ref{lem:bin:4} and \ref{lem:bin:5} leads to the conclusion that the outer loop for finding total quatity at equilibrium is valid as well and hence the algorithm is correct.
\subsection{Proof of correctness}
Throughout this section \emph{we assume that the price function is decreasing and concave and the cost functions are convex}.
\begin{lemma}
  \label{lem:bin:1}
  Let $q_i^*(Q) = \{q_i^l\ldots q_i^u\}$ , where $q_i^l=\min argmax_{q_i \in \{0\ldots Q_{max}\}}F_i(q_i,Q)$ and $q_i^u=\max argmax_{q_i \in \{0\ldots Q_{max}\}}G_i(q_i,Q)$. Then $q_i^*(Q)$ is the set of consecutive integers $I_i$ given by binary search in lines 8-12 of Algorithm \ref{alg:bin}. This is the set of quantities firm $i$ can produce at equilibrium given that the total quantity produced is $Q$.
\end{lemma}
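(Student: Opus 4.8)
The plan is to translate the two binary-search predicates in lines 9 and 11 into first differences (in $q_i$, with $Q$ held fixed) of the auxiliary functions $F_i$ and $G_i$, and then to exploit the discrete concavity of these functions to accomplish three things at once: validate the two binary searches, identify $q_i^l$ and $q_i^u$ with the stated $\arg\max$ endpoints, and show that the resulting block of integers is exactly the set of best responses of firm $i$ when the total output is pinned at $Q$.

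First I would compute the forward difference of $F_i$ in its first argument and verify
\[ F_i(q_i+1,Q)-F_i(q_i,Q)=P(Q+1)+P'(Q)q_i-\big(c_i(q_i+1)-c_i(q_i)\big)=f_i(q_i,Q), \]
using $P'(Q)=P(Q+1)-P(Q)$ and the identity $(q_i+\tfrac12)^2-(q_i-\tfrac12)^2=2q_i$; the analogous backward difference of $G_i(q_i,Q)=F_i(q_i,Q-1)$ equals $f_i(q_i-1,Q-1)$, matching line 11. Next I would establish discrete concavity: the second forward difference of $F_i(\cdot,Q)$ equals $P'(Q)-\big(c_i(q_i+2)-2c_i(q_i+1)+c_i(q_i)\big)\le 0$ because $P$ is decreasing (so $P'(Q)\le 0$) and $c_i$ is convex, and the same holds for $G_i$. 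Hence $f_i(\cdot,Q)$ is nonincreasing, so the predicate $f_i(q,Q)\le 0$ in line 9 is a threshold predicate and binary search correctly returns the smallest $q$ satisfying it; by concavity this smallest index is exactly $\min\arg\max_{q}F_i(q,Q)=q_i^l$. Symmetrically, $f_i(\cdot,Q-1)$ nonincreasing makes $f_i(q-1,Q-1)\ge 0$ a reverse threshold predicate whose largest solution is $\max\arg\max_{q}G_i(q,Q)=q_i^u$.

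It then remains to identify $\{q_i^l,\dots,q_i^u\}$ with the quantities firm $i$ may produce at equilibrium for fixed total $Q$. Fixing the rivals' output at $Q_{-i}=Q-q_i$, firm $i$ maximizes $g(x)=P(Q_{-i}+x)x-c_i(x)$, whose increment $g(x+1)-g(x)=f_i(x,Q_{-i}+x)$ equals $f_i(q_i,Q)$ at $x=q_i$ and $f_i(q_i-1,Q-1)$ one step earlier. A direct second-difference computation gives
\[ g(x+2)-2g(x+1)+g(x)=\big(P(Q_{-i}+x+2)-2P(Q_{-i}+x+1)+P(Q_{-i}+x)\big)x+2P'(Q_{-i}+x+1)-\big(c_i(x+2)-2c_i(x+1)+c_i(x)\big)\le 0, \]
using concavity of $P$ (the bracketed $P$-difference is $\le 0$), monotonicity of $P$, and convexity of $c_i$. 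Thus $g$ is discretely concave, so its maximizers are exactly the $x$ with $g(x+1)\le g(x)$ and $g(x)\ge g(x-1)$, i.e.\ $f_i(q_i,Q)\le 0$ and $f_i(q_i-1,Q-1)\ge 0$ — precisely firm $i$'s no-profitable-deviation conditions. Combining with the threshold equivalences from the previous step, namely $f_i(q_i,Q)\le 0\iff q_i\ge q_i^l$ and $f_i(q_i-1,Q-1)\ge 0\iff q_i\le q_i^u$, the best-response set is the contiguous block $\{q_i^l,\dots,q_i^u\}=I_i=q_i^*(Q)$, which is what the lemma asserts.

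The main obstacle is conceptual rather than computational: one must recognize that $F_i$ and $G_i$ are engineered so that their $q_i$-increments at a \emph{fixed} $Q$ reproduce the marginal-profit quantities $f_i(q_i,Q)$ and $f_i(q_i-1,Q-1)$ governing firm $i$'s $\pm 1$ deviations when the total is pinned to $Q$, even though the firm's true objective $g$ carries a total that moves with its own output; the $\tfrac{P'(Q)}{2}(q_i-\tfrac12)^2$ term is exactly the discrete antiderivative achieving this matching. The remaining care is to keep track of which hypothesis each step consumes: discrete concavity of $F_i$ and $G_i$ (needing only $P$ decreasing and $c_i$ convex) validates the binary searches and the $\arg\max$ identifications, whereas promoting the local $\pm 1$ conditions to a genuine global best response requires concavity of $P$, which is available under the standing assumptions of this subsection.
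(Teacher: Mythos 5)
Your proof is correct and follows the same skeleton as the paper's: identify the first differences of $F_i$ and $G_i$ in $q_i$ with $f_i(q_i,Q)$ and $f_i(q_i-1,Q-1)$, use monotonicity of $f_i(\cdot,Q)$ (from $P$ decreasing and $c_i$ convex) to validate the two binary searches and the $\arg\max$ identifications, and characterize the equilibrium quantities by the two unit-deviation conditions $f_i(q,Q)\le 0$ and $f_i(q-1,Q-1)\ge 0$. Where you genuinely add something: the paper simply asserts that these two local conditions make $q$ an equilibrium quantity, implicitly relying on local-implies-global; you close that gap by computing the second difference of the firm's true objective $g(x)=P(Q_{-i}+x)x-c_i(x)$ (with $Q_{-i}=Q-q_i$ fixed) and showing it is discretely concave, which is exactly where the standing concavity assumption on $P$ is consumed --- a step the paper's proof of this lemma never spells out. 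Conversely, the paper records one small fact you omit: $f_i(q_i,Q)$ is also nonincreasing in $Q$, whence $f_i(q_i^l-1,Q-1)\ge f_i(q_i^l-1,Q)>0$ and so $q_i^l\le q_i^u$, i.e., the block $I_i$ is nonempty. Your interval characterization $\{q: q\ge q_i^l\}\cap\{q: q\le q_i^u\}$ is valid even if empty, but nonemptiness is what the outer search and the phrase ``set of consecutive integers'' implicitly use, so you should add that one line.
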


\begin{lemma}
  \label{lem:bin:2}
  Let $F_i^-(q_i, -Q) = F_i(q_i, Q)$ and $G_i^- = G_i(q_i, -Q)$. Then $F_i^-$  and $G_i^-$ are supermodular functions.
\end{lemma}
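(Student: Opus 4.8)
The plan is to verify supermodularity directly from the lattice definition. On the integer lattice $\mathbb{Z}^2$ with the coordinatewise order, a function $\phi(a,b)$ is supermodular iff it has nonnegative mixed second difference, i.e.\ $\phi(a+1,b+1) - \phi(a+1,b) - \phi(a,b+1) + \phi(a,b) \ge 0$ for all $(a,b)$. Since the second argument of $F_i^-$ is $-Q$, increasing that coordinate by one amounts to decreasing $Q$ by one, so the condition I must check is
\begin{equation*}
F_i(q_i+1, Q-1) - F_i(q_i+1, Q) - F_i(q_i, Q-1) + F_i(q_i, Q) \ge 0.
\end{equation*}

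First I would observe that the cost term $c(q_i)$ in $F_i$ depends only on $q_i$ and hence cancels in this mixed difference; only the revenue-type part $H(q_i,Q) := P(Q+1)q_i + \tfrac{P'(Q)}{2}(q_i - \tfrac12)^2$ survives. Then I would group the computation as a forward difference in $q_i$ of the quantity $\Phi(q_i,Q) := H(q_i,Q-1) - H(q_i,Q)$. Using the forward-difference conventions $P'(Q) = P(Q+1)-P(Q)$ and $P''(Q)=P'(Q+1)-P'(Q)$, together with the elementary identities $P(Q)-P(Q+1) = -P'(Q)$, $P'(Q-1)-P'(Q) = -P''(Q-1)$, and the algebraic simplification $(q_i+\tfrac12)^2 - (q_i-\tfrac12)^2 = 2q_i$, the entire mixed difference collapses to the closed form
\begin{equation*}
\Phi(q_i+1,Q) - \Phi(q_i,Q) = -P'(Q) - P''(Q-1)\,q_i.
\end{equation*}

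Nonnegativity then follows from the standing assumptions. Since $P$ is decreasing, $P'(Q) \le 0$, so $-P'(Q) \ge 0$; since $P$ is concave, $P''(Q-1) \le 0$, and as $q_i \ge 0$ the term $-P''(Q-1)q_i$ is also nonnegative. This gives supermodularity of $F_i^-$, and it is precisely the reflection to $-Q$ that fixes the signs: in the $(q_i,Q)$ order the same expression is nonpositive (so $F_i$ itself is submodular), hence the sign flip is essential. For $G_i^-$ I would note that $G_i(q_i,Q)=F_i(q_i,Q-1)$ is just $F_i$ shifted by one in its $Q$-argument, so the identical computation with $Q$ replaced by $Q-1$ throughout yields $-P'(Q-1) - P''(Q-2)q_i \ge 0$, establishing supermodularity of $G_i^-$ as well.

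The main obstacle I anticipate is purely bookkeeping: getting every sign correct under the forward-difference definitions of $P'$ and $P''$, being consistent about where the argument shifts land (the $Q-1$ inside $P''$, the $Q+1$ inside $P$), and confirming that the mixed difference genuinely telescopes to the clean two-term expression with no residual cross terms. Once that identity is in hand, the monotonicity-plus-concavity conclusion is immediate.
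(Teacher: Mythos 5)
Your proof is correct and takes essentially the same route as the paper: both verify the increasing-differences characterization of supermodularity for $F_i^-$ (and then transfer it to $G_i^-$ via $G_i(q_i,Q)=F_i(q_i,Q-1)$), using only that $P$ is decreasing and concave, i.e.\ $P'(Q)\le 0$ and $P''(Q)\le 0$ in the forward-difference convention. The only difference is presentational --- you collapse the unit-step mixed second difference to the closed form $-P'(Q)-P''(Q-1)q_i$ and sign-check it (which on $\mathbb{Z}^2$ telescopes to the global property), whereas the paper compares differences at two arbitrary values $Q_1<Q_2$ and observes that the nonnegative coefficients $P(Q_1+1)-P(Q_2+1)$ and $P'(Q_1)-P'(Q_2)$ multiply the nondecreasing quantities $q_i$ and $\left(q_i-\tfrac12\right)^2$.
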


\begin{lemma}
  \label{lem:bin:3}
  Let $I=\{q_i^l,\ldots , q_i^u\} = q_i^F(Q) = argmax_{q_i \in \{1\ldots Q_{max}\}}F_i(q_i,Q)$ and $I'=\{q^{'l}_i,\ldots, q^{'u}_i\} = q_i^F(Q') = argmax_{q_i \in \{1\ldots Q_{max}\}}F_i(q_i,Q')$. Let $Q > Q'$. Then $q^{'l}_i \geq q_i^l$ and $q^{'u}_i \geq q_i^u$.
\end{lemma}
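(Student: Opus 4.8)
The plan is to obtain this monotonicity as a direct corollary of Topkis' Monotonicity Theorem (Theorem~\ref{thm:topkis}), feeding it the supermodularity already proved in Lemma~\ref{lem:bin:2}. I would set up the parametrization with $q_i$ as the decision variable, ranging over the chain $\{1,\ldots,Q_{max}\}$, and with $-Q$ as the parameter. Lemma~\ref{lem:bin:2} states that $F_i^-(q_i,-Q) = F_i(q_i,Q)$ is supermodular, which is precisely the statement that $F_i$ has nondecreasing differences in the pair $(q_i,-Q)$. Because the domain of $q_i$ is totally ordered, supermodularity in $q_i$ alone is automatic, so the hypotheses of Topkis' theorem hold with parameter $t=-Q$.

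Applying Topkis' theorem would then show that the maximizer set $q_i^F(Q)$ is nondecreasing in $-Q$ with respect to the strong set order. Since $Q>Q'$ is the same as $-Q<-Q'$, the maximizer set at the larger parameter, $I'=q_i^F(Q')$, would dominate the one at the smaller parameter, $I=q_i^F(Q)$; that is, $I\leq_s I'$.

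The remaining step is to unpack the strong set order on a chain. For two sets of consecutive integers $A=\{a_l,\ldots,a_u\}$ and $B=\{b_l,\ldots,b_u\}$, I would show that $A\leq_s B$ is equivalent to $a_l\leq b_l$ and $a_u\leq b_u$: substituting $a=a_u$ and $b=b_l$ into the defining conditions $\min(a,b)\in A$ and $\max(a,b)\in B$ forces both endpoint inequalities, and conversely these inequalities guarantee membership for every pair since $A$ and $B$ are order-convex. The sets $I$ and $I'$ are such intervals because each $F_i(\cdot,Q)$ is concave in $q_i$ under the section's standing assumptions, so $I\leq_s I'$ would yield exactly $q_i^l\leq q_i^{'l}$ and $q_i^u\leq q_i^{'u}$, as claimed.

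This argument carries essentially no computation: the one substantive ingredient, increasing differences of $F_i$ in $(q_i,-Q)$, is already supplied by Lemma~\ref{lem:bin:2}, and everything after is the standard Topkis machinery together with the elementary characterization of the strong set order on a totally ordered set. The only delicate point will be bookkeeping the sign of the parameter, so that supermodularity in $(q_i,-Q)$ is matched to the correct direction $Q>Q'$ of the claimed monotonicity; getting this wrong would flip the inequalities.
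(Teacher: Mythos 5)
Your proposal is correct and follows essentially the same route as the paper's own proof: invoke the supermodularity of $F_i^-(q_i,-Q)$ from Lemma~\ref{lem:bin:2}, apply Topkis' Monotonicity Theorem with parameter $-Q$ to get $I \leq_s I'$, and then translate the strong set order on sets of consecutive integers into the two endpoint inequalities. Your extra care about the sign of the parameter and the interval structure of the argmax sets only makes explicit what the paper states more briefly.
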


\begin{lemma}
  \label{lem:bin:3a}
  Let $I=\{q_i^l,\ldots , q_i^u\} = q_i^G(Q) = argmax_{q_i \in \{1\ldots Q_{max}\}}G_i(q_i,Q)$ and $I'=\{q^{'l}_i,\ldots, q^{'u}_i\} = q_i^G(Q') = argmax_{q_i \in \{1\ldots Q_{max}\}}F_i(q_i,Q')$. Let $Q > Q'$. Then $q^{'l}_i \geq q_i^l$ and $q^{'u}_i \geq q_i^u$.
\end{lemma}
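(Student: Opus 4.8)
The plan is to mirror the proof of Lemma~\ref{lem:bin:3} essentially verbatim, replacing the role of $F_i$ by $G_i$ and invoking the supermodularity of $G_i^-$ rather than that of $F_i^-$. (I read the second defining equation in the statement as $q_i^G(Q') = \arg\max_{q_i} G_i(q_i,Q')$, the $F_i$ there being a typo.) First I would record that, since the maximization defining $q_i^G(Q)$ ranges over the chain $\{1,\ldots,Q_{\max}\}$ and $G_i$ is discretely concave in $q_i$ under the standing concave-price/convex-cost assumptions (its first differences in $q_i$ are nonincreasing, as in the analysis behind Lemma~\ref{lem:bin:1}), the argmax is indeed a set of consecutive integers $\{q_i^l,\ldots,q_i^u\}$. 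This justifies writing $I$ and $I'$ as intervals with well-defined lower and upper endpoints.

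The key step is to apply Topkis' Monotonicity Theorem (Theorem~\ref{thm:topkis}) to the function $G_i^-(q_i,-Q) = G_i(q_i,Q)$, which Lemma~\ref{lem:bin:2} has already shown to be supermodular in the pair $(q_i,-Q)$. Because $q_i$ ranges over a chain, supermodularity in the joint variables is equivalent to increasing differences in $(q_i,-Q)$, so Topkis yields that the maximizer set $\arg\max_{q_i} G_i^-(q_i,-Q)$ is monotone nondecreasing in the parameter $-Q$ with respect to the strong (Veinott) set order $\le_s$: if $t_1 \le t_2$ then $\arg\max_{q_i} G_i^-(q_i,t_1) \le_s \arg\max_{q_i} G_i^-(q_i,t_2)$.

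The only point requiring genuine care — and the main place one could slip — is the sign flip hidden in the $-Q$ parametrization. Since $Q > Q'$, we have $-Q < -Q'$, so taking $t_1 = -Q$ and $t_2 = -Q'$ gives
\begin{equation*}
\arg\max_{q_i} G_i(q_i,Q) \;=\; \arg\max_{q_i} G_i^-(q_i,-Q) \;\le_s\; \arg\max_{q_i} G_i^-(q_i,-Q') \;=\; \arg\max_{q_i} G_i(q_i,Q').
\end{equation*}
Finally I would unpack the strong set order for integer intervals: for $I = \{q_i^l,\ldots,q_i^u\}$ and $I' = \{q_i^{'l},\ldots,q_i^{'u}\}$ the relation $I \le_s I'$ is equivalent to $q_i^l \le q_i^{'l}$ and $q_i^u \le q_i^{'u}$, which is precisely the claimed $q_i^{'l} \ge q_i^l$ and $q_i^{'u} \ge q_i^u$. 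Beyond keeping the orientation of $-Q$ consistent with the direction of $\le_s$, there is no substantive obstacle, and the argument is a near-mechanical transcription of Lemma~\ref{lem:bin:3}.
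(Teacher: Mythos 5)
Your proof is correct and follows essentially the same route as the paper's: invoke the supermodularity of $G_i^-$ from Lemma~\ref{lem:bin:2}, apply Topkis' theorem (Theorem~\ref{thm:topkis}) to get monotonicity of the argmax in $-Q$ under the strong set order, and unpack that order on integer intervals — you are also right that the $F_i$ in the statement's definition of $q_i^G(Q')$ is a typo for $G_i$. Your explicit handling of the sign flip and the interval structure of the argmax is slightly more careful than the paper's terse version, but it is the same argument.
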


\begin{lemma}
  \label{lem:bin:4}
 Let $Q$ be total production quantity guessed by Algorithm~\ref{alg:bin} at a step of outer binary search. Let $I=(I_1,\ldots,I_n)$, where $I_i = \{q_i^{l},\ldots,q_i^{u}\}$, be the set of best reponse ranges of all firms if the total quantity is a fixed integer $Q$. Then, if $\sum_{i=1}^n q_i^{u} < Q$, there does not exist any equilibrium for which the total produced quantity is greater than or equal to $Q$.
\end{lemma}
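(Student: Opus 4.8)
The plan is to argue by contradiction: assume there is an equilibrium whose total production $Q^*$ satisfies $Q^* \ge Q$, and derive the impossible inequality $Q^* < Q^*$. The engine of the argument is the monotonicity already established for the upper endpoints of the per-firm best-response ranges, which encodes the Cournot strategic-substitutes phenomenon: when the total quantity circulating in the market is larger, every firm's maximal equilibrium quantity can only shrink.

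First I would recall, via Lemma~\ref{lem:bin:1}, that the set of quantities firm $i$ can produce at an equilibrium with the total quantity fixed to some integer $T$ is exactly the interval with upper endpoint $q_i^u(T) = \max\arg\max_{q_i} G_i(q_i, T)$. In particular, if $(q_1^*,\ldots,q_n^*)$ is any equilibrium with total $Q^* = \sum_i q_i^*$, then each $q_i^*$ satisfies the local optimality conditions $f_i(q_i^*, Q^*) \le 0$ and $f_i(q_i^*-1, Q^*-1) \ge 0$ that define this interval (these are precisely the binary-search conditions in lines 8--12), so $q_i^* \le q_i^u(Q^*)$ for every $i$.

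Next I would invoke the monotonicity from Lemma~\ref{lem:bin:3a}: since $Q^* \ge Q$, the upper endpoint of the $\arg\max$ of $G_i$ is nonincreasing in the total, giving $q_i^u(Q^*) \le q_i^u(Q) = q_i^u$ for each $i$. Chaining the two inequalities and summing over $i$ yields
\[
Q^* = \sum_{i=1}^n q_i^* \;\le\; \sum_{i=1}^n q_i^u(Q^*) \;\le\; \sum_{i=1}^n q_i^u \;<\; Q \;\le\; Q^*,
\]
where the strict inequality is the hypothesis $\sum_i q_i^u < Q$. This contradiction shows no equilibrium can have total quantity at least $Q$.

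The main obstacle is making sure the monotonicity is applied in the correct direction: one must verify that Lemma~\ref{lem:bin:3a} indeed pushes the upper endpoint $q_i^u$ \emph{downward} as the total $Q$ grows (strategic substitutes), rather than upward, since reversing this would collapse the argument. The boundary case $Q^* = Q$ relies on the non-strict form of the same monotonicity and must be folded into the identical chain of inequalities above, and one should also confirm the equilibrium conditions for a firm at $q_i^* = 0$ are consistent with the lower endpoint being a nonnegative integer. Everything else is a routine combination of the two prior lemmas, so no heavy computation remains.
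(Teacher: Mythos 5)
Your proof is correct and follows essentially the same route as the paper's: assume an equilibrium with total $Q^* \ge Q$, use the monotonicity of the upper endpoints from Lemma~\ref{lem:bin:3a} together with the fact that equilibrium quantities lie in the best-response ranges (Lemma~\ref{lem:bin:1}), and chain the inequalities $Q^* \le \sum_i q_i^u(Q^*) \le \sum_i q_i^u < Q \le Q^*$ to reach a contradiction. Your explicit treatment of the boundary case $Q^* = Q$ and of the role of Lemma~\ref{lem:bin:1} is in fact slightly more careful than the paper's version, which states the assumption as $Q' > Q$ and leaves the step $\sum_i q_i^{'u} \ge Q'$ implicit.
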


\begin{lemma}
  \label{lem:bin:5}
 Let $Q$ be total production quantity guessed by Algorithm~\ref{alg:bin} at a step of outer binary search. Let $I=(I_1,\ldots,I_n)$, where $I_i = \{q_i^{l},\ldots,q_i^{u}\}$, be the set of best reponse ranges of all firms if the total quantity is a fixed integer $Q$. Then, if $\sum_{i=1}^n q_i^{l} > Q$, there does not exist any equilibrium for which the total produced quantity is less than or equal to $Q$.
\end{lemma}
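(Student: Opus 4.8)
The plan is to argue by contradiction, leaning entirely on the monotonicity of best responses already established in Lemmas~\ref{lem:bin:1} and~\ref{lem:bin:3}. Suppose, contrary to the claim, that there exists a pure Nash equilibrium $\tilde{\mathbf{q}} = (\tilde q_1,\ldots,\tilde q_n)$ whose total production $\tilde Q = \sum_{i=1}^n \tilde q_i$ satisfies $\tilde Q \le Q$. The aim is to show that each firm's equilibrium quantity $\tilde q_i$ is forced to be at least the lower endpoint $q_i^{l}$ computed at the larger guess $Q$, so that their sum exceeds $Q$, contradicting $\tilde Q \le Q$.

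The first step uses Lemma~\ref{lem:bin:1} to place each equilibrium quantity in the correct best-response range. Since no firm has a profitable unilateral deviation when the total is $\tilde Q$, firm $i$'s quantity lies in its best-response range $I_i(\tilde Q) = \{q_i^{l}(\tilde Q),\ldots,q_i^{u}(\tilde Q)\}$ at total $\tilde Q$; in particular $\tilde q_i \ge q_i^{l}(\tilde Q)$, where by Lemma~\ref{lem:bin:1} the quantity $q_i^{l}(\tilde Q)$ is the smallest maximizer of $F_i(\cdot,\tilde Q)$. The second step transports this bound from $\tilde Q$ to $Q$ using the comparative statics of the $F_i$-maximizer: because $\tilde Q \le Q$ and the smallest maximizer of $F_i$ is nonincreasing in the assumed total output (Lemma~\ref{lem:bin:3}, instantiated with the larger total $Q$ and the smaller total $\tilde Q$), we obtain $q_i^{l}(\tilde Q) \ge q_i^{l}(Q)$. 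Chaining these inequalities and summing over $i \in [n]$ yields
\begin{equation*}
\tilde Q = \sum_{i=1}^n \tilde q_i \;\ge\; \sum_{i=1}^n q_i^{l}(\tilde Q) \;\ge\; \sum_{i=1}^n q_i^{l}(Q) \;>\; Q,
\end{equation*}
where the final strict inequality is precisely the hypothesis $\sum_i q_i^{l} > Q$. This contradicts $\tilde Q \le Q$ and establishes the lemma.

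The substantive content is entirely packaged in the monotonicity claim that a firm's smallest best response cannot grow as aggregate output rises, namely Lemma~\ref{lem:bin:3} (itself a consequence of the supermodularity in Lemma~\ref{lem:bin:2} and Topkis' theorem). The only point demanding care is the bookkeeping around Lemma~\ref{lem:bin:1}: verifying that an equilibrium quantity $\tilde q_i$ genuinely lies in $I_i(\tilde Q)$, and that the lower endpoint $q_i^{l}$ is identified with the $F_i$-maximizer (so that Lemma~\ref{lem:bin:3}, and not the $G_i$-based Lemma~\ref{lem:bin:3a}, is the applicable monotonicity statement). Once these identifications are pinned down, the summation is immediate; this lemma is the exact mirror image of Lemma~\ref{lem:bin:4}, with the lower maximizers $q_i^{l}$ and a lower bound replacing the upper maximizers $q_i^{u}$ and an upper bound, so no genuine obstacle remains.
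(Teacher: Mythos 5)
Your proposal is correct and follows essentially the same route as the paper's proof: both argue by contradiction via the chain $\tilde Q = \sum_i \tilde q_i \ge \sum_i q_i^{l}(\tilde Q) \ge \sum_i q_i^{l}(Q) > Q$, using Lemma~\ref{lem:bin:1} to place equilibrium quantities in the best-response ranges and the $F_i$-based monotonicity of Lemma~\ref{lem:bin:3} (correctly, not the $G_i$-based Lemma~\ref{lem:bin:3a}). If anything, your write-up is slightly more careful than the paper's, which states the contradiction hypothesis as $Q' < Q$ rather than $Q' \le Q$; your version handles the equality case explicitly.
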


Finally, the results of this section culiminate in the following theorem.

\begin{theorem}
\label{thm:bin:main}
Algorithm~\ref{alg:bin} successfully computes the vector $\mathbf{q} = (q_1,q_2,\dots,q_n)$ of quantities at one equilibrium of the Cournot oligopoly if the price function is decreasing and concave and the cost function is convex. In addition, the algorithm runs in time $O(n \log^2(Q_{\max}))$ where $Q_{\max}$ is the maximum possible total quantity in the oligopoly network at any equilibrium.
\end{theorem}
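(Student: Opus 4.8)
The plan is to reduce the theorem to two assertions: (a) \emph{every} profile the algorithm returns is a Nash equilibrium, and (b) the outer loop is a valid binary search that is guaranteed to reach the return statement. The common ingredient is the equilibrium characterization supplied by Lemma~\ref{lem:bin:1}: once the equilibrium total $Q$ is fixed, a quantity $q_i$ is a best response for firm $i$ exactly when it lies in the contiguous integer range $I_i(Q)=\{q_i^l(Q),\dots,q_i^u(Q)\}$ produced by the inner binary searches in lines 8--12. Consequently a profile $\mathbf{q}$ with total $Q=\sum_k q_k$ is an equilibrium if and only if $q_i\in I_i(Q)$ for every $i$. I would state this characterization first, since it turns the search for an equilibrium into the purely combinatorial question of whether the guessed total $Q'$ can be split among the contiguous ranges $I_1(Q'),\dots,I_n(Q')$.

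For assertion (a), note that the Minkowski sum of contiguous integer ranges is again a contiguous integer range, so $\{\sum_i q_i : q_i\in I_i(Q')\}$ is exactly the integer interval $[\,\sum_i q_i^l,\ \sum_i q_i^u\,]$. The algorithm enters the \texttt{Else} branch precisely when $\sum_i q_i^l\le Q'\le \sum_i q_i^u$, so a profile with $\sum_i q_i=Q'$ and $q_i\in I_i(Q')$ exists and can be found greedily in $O(n)$ time; by the characterization above it is an equilibrium whose total is $Q'$. This settles correctness of whatever the algorithm outputs.

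For assertion (b), set $L(Q')=\sum_i q_i^l(Q')$ and $U(Q')=\sum_i q_i^u(Q')$. Lemmas~\ref{lem:bin:3} and~\ref{lem:bin:3a}, which are the Topkis consequences of the supermodularity established in Lemma~\ref{lem:bin:2}, show that each $q_i^l(\cdot)$ and $q_i^u(\cdot)$ is nonincreasing in the total, hence $L$ and $U$ are nonincreasing; together with the trivial bound $L\le U$ this makes $L(Q')-Q'$ and $U(Q')-Q'$ strictly decreasing. Therefore the ``too low'' region $\{Q':L(Q')>Q'\}$ is an initial segment, the ``too high'' region $\{Q':U(Q')<Q'\}$ is a final segment, and because $L\le U$ the two regions are disjoint, leaving the ``found'' region $\{Q':L(Q')\le Q'\le U(Q')\}$ as the interval between them. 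These three ordered regions are exactly the pattern a three-way binary search resolves, and Lemmas~\ref{lem:bin:4} and~\ref{lem:bin:5} certify that the updates $Q_{\min}:=Q'+1$ and $Q_{\max}:=Q'-1$ never discard an equilibrium total. Since every firm's best response is capped by its monopoly optimum $Q_i^*$ (competition only lowers the price, and the best response is monotone by the same Topkis lemmas), every equilibrium total lies in the initial bracket $[1,\sum_i Q_i^*]$, so the search cannot overshoot the found region.

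Finally I would assemble the running time and address existence. The outer loop halves $[Q_{\min},Q_{\max}]\subseteq[1,Q_{\max}]$ each round, so it runs $O(\log Q_{\max})$ times; each round performs $2n$ inner binary searches over ranges of size $O(Q_{\max})$ at $O(\log Q_{\max})$ each, giving $O(n\log Q_{\max})$ per round and $O(n\log^2 Q_{\max})$ overall, with the precomputation of the $Q_i^*$ and the final greedy assignment both dominated. The step I expect to require the most care is showing that the found region is nonempty---that is, that a pure equilibrium exists---because the monotonicity of $L$ and $U$ alone only forces that region to be an interval, not a nonempty one; here I would lean on the supermodular-game structure of Lemma~\ref{lem:bin:2} (or a direct crossing argument on $U(Q')-Q'$, which is positive for small totals and negative once $Q'$ exceeds $\sum_i Q_i^*$) to guarantee existence, after which the binary-search invariant delivers the equilibrium.
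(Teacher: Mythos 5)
Your proof follows essentially the same route as the paper's: correctness of the inner searches via Lemma~\ref{lem:bin:1}, validity of the outer binary search via the Topkis-based Lemmas~\ref{lem:bin:3}--\ref{lem:bin:5} (through Lemmas~\ref{lem:bin:4} and~\ref{lem:bin:5}), the bound $Q_{\max}\le\sum_{i}Q_i^*$ from monopoly quantities, and the same $O(n\log^2 Q_{\max})$ accounting. The one place you go beyond the paper is existence: the paper deliberately sidesteps it (its proof only claims the algorithm outputs an equilibrium \emph{if one exists}, and otherwise exhausts all totals in $\{1,\dots,Q_{\max}\}$), so your Minkowski-sum characterization of the \texttt{Else} branch and your crossing argument for nonemptiness are sound elaborations of the same argument rather than a different approach.
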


\begin{proof}
Lemma \ref{lem:bin:1} guarantees that the inner binary search successfully finds 
the best response range for all firms. 
Lemmas~\ref{lem:bin:4} and \ref{lem:bin:5} ensure that 
the algorithm always continues its search for the total quantity 
at equilibrium in the segment where all the equilibria are. 
Thus, when the search is over, it must be at an equilibrium of the game 
if one exists. If an equilibrium does not exist, then the algorithm will stop when it has eliminated all quantities in $\{1 \ldots Q_{max}\}$ as possible total equilibrium production quantities.
Let $Q_{\max}$ be the maximum total quantity possible 
at any equilibrium of the oligopoly network. 
Our algorithm performs a binary search over all possible quantities in $[1,Q_{\max}]$, 
and at each step finds a range of quantities for each firm $i \in [n]$ 
using another binary search. 
This means the algorithm runs in time $O(n\log^2(Q_{\max}))$. 
We can find an upper bound for $Q_{\max}$, 
noting that $Q_{\max}$ is at most the sum of the production quantites 
of the firms when they are the only producer in the market; 
i.e, $Q_{\max} \le \sum_{i\in [n]}{Q^*_i}$ 
where $Q^*_i = q_i^*(q_i)$ is the optimal quantity to be produced by firm $i$ 
when there is no other firms to compete with.
\end{proof}

\bibliographystyle{plainnat}
\bibliography{cournot}

\medskip

\section{Example}
Here we write the set of equations of our example in more details.
\begin{description}
\item[Scenario 1]
The set of 
equations $\frac{\partial \pi_A}{\partial q_{A1}}=\frac{\partial \pi_B}{\partial q_{B1}}=0$ can be written as:
\begin{eqnarray*}
\frac{\partial \pi_A}{\partial q_{A1}}& = &1-3q_{A1}-q_{B1}=0 \\
\frac{\partial \pi_B}{\partial q_{B1}}& = &1-3q_{B1}-q_{A1}=0
\end{eqnarray*}

\item[Scenario 2]
The set of equations 
$\frac{\partial \pi_A}{\partial q_{A1}}=\frac{\partial \pi_A}{\partial q_{A2}}=\frac{\partial \pi_B}{\partial q_{B1}}=\frac{\partial \pi_B}{\partial q_{B2}}=0$ can be written as:
\begin{eqnarray*}
\frac{\partial \pi_A}{\partial q_{A1}} & = & 1-5q_{A1}-q_{A2}-2q_{B1} = 0 \\
\frac{\partial \pi_A}{\partial q_{A2}} & = & 1-5q_{A2}-q_{A1}-2q_{B2} = 0 \\
\frac{\partial \pi_B}{\partial q_{B1}} & = & 1-5q_{B1}-q_{B2}-2q_{A1} = 0 \\
\frac{\partial \pi_B}{\partial q_{B2}} & = & 1-5q_{B2}-q_{B1}-2q_{A2} = 0.
\end{eqnarray*}

\item[Scenario 3]
The set of equations 
$\frac{\partial \pi_A}{\partial q_{A1}}=\frac{\partial \pi_A}{\partial q_{A2}}=\frac{\partial \pi_B}{\partial q_{B2}}=0$ can be written as:
\begin{eqnarray*}
\frac{\partial \pi_A}{\partial q_{A1}} & = & 1-5q_{A1}-q_{A2} = 0 \\
\frac{\partial \pi_A}{\partial q_{A2}} & = & 1-5q_{A2}-q_{A1}-2q_{B2} = 0 \\
\frac{\partial \pi_B}{\partial q_{B2}} & = & 1-5q_{B2}-2q_{A2} = 0. 
\end{eqnarray*}

\end{description}

\section{Zhao and Han Convergence Theorem}
The following theorem states the performance guarantee of 
the algorithm proposed by \citeauthor{zhao1999two}~\citeyear{zhao1999two}.
\begin{theorem}[Zhao Han Convergence Theorem]\label{interior}
Let $F : \mathbb{R}^n \rightarrow \mathbb{R}^n$ be the function associated 
with a complementarity problem satisfying the two following conditions:
\begin{itemize}
\setlength{\itemindent}{.2in}
\item $\nabla F$ is a positive semidefinite matrix for a constant scalar.
\item {$F$ satisfies SLC; i.e., for some scalar $\lambda > 0$,
\begin{align*}
\Vert X[F(x+h)-F(x)-\nabla F(x) h]\Vert_\infty \le \lambda \vert h^T \nabla F(x) h \vert
\end{align*}
holds $\forall x > 0$ and $\forall h$ satisfying $\Vert X^{-1} h \Vert \le 1$}.
\end{itemize}
Then the algorithm converges in time 
$O\bigg(n\max(1,\lambda)~\log(\mu_0/\epsilon)\bigg)$ 
and outputs an approximate solution $(F(x^*),x^*)$ satisfying 
$(x^*)^T F(x^*)/n \le \epsilon$ where $\mu_0 = (x_0)^T F(x_0)/n$, 
and $(F(x_0),x_0)$ is the initial feasible point.
\end{theorem}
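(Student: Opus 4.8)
The plan is to prove the bound by treating the method of \citeauthor{zhao1999two} as a damped Newton iteration on the perturbed complementarity conditions and to show that the two hypotheses—positive semidefiniteness of $\nabla F$ and the scaled Lipschitz condition (SLC)—are exactly what is needed to guarantee, respectively, that each step is well defined and that it contracts the duality measure by a controlled multiplicative factor while keeping the iterate in a neighborhood of the central path. Since the statement is quoted from \citeauthor{zhao1999two}, the goal is to reduce the convergence claim to these two checkable properties and sketch the path-following analysis, deferring the exact constant-chasing to that reference.

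First I would set up the central-path machinery. For an interior point $x>0$ with $s:=F(x)>0$, write $X=\mathrm{diag}(x)$, $S=\mathrm{diag}(s)$, and define the duality measure $\mu(x)=x^TF(x)/n$. The central path is $\{x:XF(x)=\mu(x)e\}$ with $e$ the all-ones vector, and I would work inside a neighborhood $\mathcal N(\beta)=\{x>0:\ \|XF(x)-\mu(x)e\|\le\beta\mu(x)\}$ for a fixed $\beta\in(0,1)$. The iteration solves the linearized system $S\,\Delta x+X\,\nabla F(x)\,\Delta x=-XF(x)+\sigma\mu e$ for a centering parameter $\sigma\in(0,1)$ and updates $x^+=x+t\,\Delta x$ for a step length $t\in(0,1]$. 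The first hypothesis enters here: because $\nabla F(x)$ is positive semidefinite, the coefficient matrix $S+X\nabla F(x)$ is nonsingular whenever $x,s>0$, so the Newton direction exists and is unique; monotonicity also yields $\Delta x^T\nabla F(x)\,\Delta x\ge 0$, a nonnegative curvature quantity I would use throughout to control error terms.

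Next I would carry out the single-iteration progress analysis. The linear part of the step drives $XF$ toward the target $\sigma\mu e$; the only obstruction to staying near the central path is the nonlinear residual $\rho:=F(x+\Delta x)-F(x)-\nabla F(x)\Delta x$, and this is precisely where SLC is decisive. Provided the step obeys $\|X^{-1}\Delta x\|\le 1$ (ensured by damping, i.e.\ by choosing $t$ and staying in $\mathcal N(\beta)$), taking $h=\Delta x$ in the SLC inequality gives $\|X\rho\|_\infty\le\lambda\,\Delta x^T\nabla F(x)\,\Delta x$, bounding the deviation of the updated products $x^+_iF_i(x^+)$ from their linearized values by $\lambda$ times the nonnegative curvature term. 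Combining this with the standard estimate that $\Delta x^T\nabla F(x)\,\Delta x$ is itself $O(\mu)$ inside $\mathcal N(\beta)$, I would show that (i) the step preserves strict feasibility $x^+>0,\ F(x^+)>0$; (ii) $x^+$ remains in $\mathcal N(\beta)$; and (iii) $\mu(x^+)\le(1-\delta)\mu(x)$ with a guaranteed decrease $\delta=\Theta\!\big(1/(n\max(1,\lambda))\big)$, where the factor $n$ comes from the $\ell_\infty$-to-$\ell_2$ passage in the neighborhood bound and $\max(1,\lambda)$ from the size of the tolerable nonlinear residual.

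Finally, the iteration count follows by compounding the contraction: $\mu_k\le(1-\delta)^k\mu_0$, so $\mu_k\le\epsilon$ after $k=O\big(\delta^{-1}\log(\mu_0/\epsilon)\big)=O\big(n\max(1,\lambda)\log(\mu_0/\epsilon)\big)$ iterations, at which point $(x^*)^TF(x^*)/n=\mu_k\le\epsilon$, which is the claimed approximate solution. The main obstacle I expect is the interaction in steps (ii)–(iii): choosing the step length $t$ (and centering $\sigma$) so that the SLC residual bound $\|X\rho\|_\infty\le\lambda\,\Delta x^T\nabla F\,\Delta x$ is dominated by the progress made on the centering term, thereby simultaneously keeping $x^+$ in $\mathcal N(\beta)$ and forcing a $\big(1-\Theta(1/(n\max(1,\lambda)))\big)$ reduction in $\mu$. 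Pinning down this balance is the technical heart of the argument and is exactly where the $\max(1,\lambda)$ factor is determined; for the precise constants I would cite the corresponding lemmas of \citeauthor{zhao1999two}.
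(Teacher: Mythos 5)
You should first note what the paper itself does with this statement: it offers \emph{no proof at all}. Theorem~\ref{interior} is quoted as the performance guarantee of Algorithm~3.1 of \cite{zhao1999two}, and the paper's own contribution (Lemmas~\ref{separation}--\ref{SLC} and Theorem~\ref{zhaotheorem}) consists solely of verifying its two hypotheses for the marginal profit function of NCC. So there is no internal proof to compare against; your proposal is instead a reconstruction of the argument inside the cited reference. As a reconstruction it is faithful in structure: Zhao and Han's method is indeed a central-path-following interior-point scheme for the monotone complementarity problem, positive semidefiniteness of $\nabla F$ is used exactly where you use it (nonsingularity of $S + X\nabla F(x)$, hence a well-defined Newton direction, plus nonnegativity of the curvature term $\Delta x^{T}\nabla F(x)\Delta x$), and the SLC is invoked with $h=\Delta x$ --- legitimate only after damping enforces $\Vert X^{-1}\Delta x\Vert \le 1$, which you correctly flag --- to bound the second-order residual $\rho = F(x+\Delta x)-F(x)-\nabla F(x)\Delta x$ by $\lambda\,\Delta x^{T}\nabla F(x)\Delta x$. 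You also correctly read the paper's garbled first hypothesis (``positive semidefinite matrix for a constant scalar'') as monotonicity of $F$, which is what the reference actually requires.

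Judged as a proof rather than a roadmap, however, your attempt has a genuine gap at exactly the point you yourself flag: steps (ii)--(iii). The claims that $\Delta x^{T}\nabla F(x)\Delta x = O(\mu)$ inside $\mathcal{N}(\beta)$, and that $\sigma$ and $t$ can be chosen so that the SLC residual is dominated by the centering progress, yielding $\mu(x^{+}) \le \bigl(1-\Theta\bigl(1/(n\max(1,\lambda))\bigr)\bigr)\mu(x)$ while preserving $x^{+}\in\mathcal{N}(\beta)$ and strict feasibility, are asserted rather than derived --- yet they \emph{are} the quantitative content of the theorem; everything preceding them is boilerplate common to all path-following analyses, and in particular the provenance of the factor $n\max(1,\lambda)$ (as opposed to, say, $\sqrt{n}$ as in linear monotone problems) is never actually established. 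Deferring these estimates to the lemmas of \cite{zhao1999two} makes your write-up a citation with commentary, which is defensible here precisely because the paper under review does the same, but it is not an independent proof. One further loose end: interior-point iterations need strictly interior iterates $x>0$, $F(x)>0$, whereas the theorem statement and the paper's initialization footnote only exhibit a point with $F(x_0)\ge 0$; your sketch silently assumes strict interiority, and a complete argument would have to address this (e.g., by perturbing the starting point or noting that the losses-inducing initialization gives $F(x_0)>0$ componentwise).
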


\section{Missing proofs}
\subsection{Missing proofs of Section \ref{general}}
\begin{proofof}{Lemma \ref{separation}}
Definition of function $F$ implies
\begin{align*}
\Vert X[F(q+h)-F(q)-\nabla F(q)h]\Vert_\infty = &
\Vert X[R(q+h)-R(q)-\nabla R(q)h] \\& + X[S(q+h)-S(q)-\nabla S(q)h] \Vert_\infty
\\\intertext{applying triangle inequality gives
}
\Vert X[F(q+h)-F(q)-\nabla F(q)h]\Vert_\infty\leq&\Vert X[R(q+h)-R(q)-\nabla R(q)h]\Vert_\infty \\
&+ \Vert X[S(q+h)-S(q)-\nabla S(q)h] \Vert_\infty
\end{align*}
Combining with assumptions of the lemma, we have the required inequality.
\end{proofof}


\begin{proofof}{Lemma \ref{SLCLHS_R}}
Before we proceed, we state the following theorem from analysis and Lemma \ref{lem:minor}.
\begin{theorem}\cite{boydbook}
  \label{thm:quad}
  Let $f:\mathbb{R}^n \mapsto \mathbb{R}$ be a continuously differentiable function with Lipschitz gradient, i.e., for some scaler $c > 0$,
  \begin{align*}
    \Vert \nabla f(x) -  \nabla f(y) \Vert \leq c \Vert x - y \Vert ~~\forall~x,y \in \mathbb{R}^n.
  \end{align*}
  Then, we have $\forall~x,y \in \mathbb{R}^n$,
  \begin{align}
    f(y) \leq f(x) + \nabla f(x)^T(x-y) + \frac{c}{2}\Vert y - x\Vert^2\label{eq:quad}
  \end{align}
\end{theorem}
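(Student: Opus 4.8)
The plan is to establish this quadratic upper bound by reducing the multivariate inequality to a one-dimensional integration along the segment joining $x$ and $y$, and then absorbing the deviation of the gradient into the quadratic term using the Lipschitz hypothesis on $\nabla f$. This is the classical ``descent lemma'' argument, so the work is in the bookkeeping rather than in any single hard idea.

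First I would introduce the auxiliary scalar function $\phi(t) = f\big(x + t(y-x)\big)$ for $t \in [0,1]$. Because $f$ is continuously differentiable, $\phi$ is $C^1$ with $\phi'(t) = \nabla f\big(x+t(y-x)\big)^T (y-x)$, and $\phi(0)=f(x)$, $\phi(1)=f(y)$, so the fundamental theorem of calculus gives $f(y)-f(x)=\int_0^1 \phi'(t)\,dt$. Writing the first-order term as $\nabla f(x)^T(y-x)=\int_0^1 \nabla f(x)^T(y-x)\,dt$ and subtracting, I obtain
\begin{equation*}
f(y) - f(x) - \nabla f(x)^T(y-x) = \int_0^1 \big[\nabla f\big(x+t(y-x)\big) - \nabla f(x)\big]^T (y-x)\,dt.
\end{equation*}
Here the linear correction carries the natural orientation $\nabla f(x)^T(y-x)$, the direction pointing from $x$ toward $y$.

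Next I would bound the integrand pointwise. By the Cauchy--Schwarz inequality followed by the Lipschitz estimate applied to the pair of points $x+t(y-x)$ and $x$, whose separation is $t\|y-x\|$, the integrand is controlled by
\begin{equation*}
\big\|\nabla f\big(x+t(y-x)\big)-\nabla f(x)\big\|\,\|y-x\| \le c\,t\,\|y-x\|^2 .
\end{equation*}
Integrating $c\,t\,\|y-x\|^2$ over $t\in[0,1]$ yields $\tfrac{c}{2}\|y-x\|^2$, and rearranging gives the desired inequality.

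There is no deep obstacle in this proof; the step that requires the most care is the one-dimensional reduction via $\phi$, since it is precisely what makes the Lipschitz constant enter \emph{linearly} in $t$ and thereby produce the factor $\tfrac{1}{2}$. The remaining subtleties are purely notational: keeping the orientation of the first-order term consistent (it should read $\nabla f(x)^T(y-x)$), and observing that the estimate is one-sided, so no absolute value is needed after Cauchy--Schwarz. Once this bound is in hand, it is exactly the tool invoked to control the second-order remainder of the marginal revenue map $R$ in the proof of Lemma~\ref{SLCLHS_R}.
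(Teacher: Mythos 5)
Your proof is correct and is the classical descent-lemma argument: reduce to the scalar function $\phi(t)=f(x+t(y-x))$, apply the fundamental theorem of calculus, and bound the gradient deviation via Cauchy--Schwarz and the Lipschitz hypothesis to get the factor $\tfrac{c}{2}$ from $\int_0^1 ct\,dt$; the paper itself gives no proof of this statement, citing \cite{boydbook}, where essentially this same argument appears. You are also right to flag the orientation: as printed, the paper's inequality has the first-order term $\nabla f(x)^T(x-y)$, which is a sign typo for $\nabla f(x)^T(y-x)$ --- the form you prove, and the form actually needed to bound the second-order remainders in Lemmas~\ref{SLCLHS_R} and~\ref{SLCLHS_S}.
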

\begin{lemma}
\label{lem:minor}
  For any vector $x \in \mathbb{R}^n$ and an arbitrary $S \subseteq [n]$, let $X = \sum_{i \in S}x_i$. Then we have $\sqrt{n}\Vert x \Vert \geq X$
\end{lemma}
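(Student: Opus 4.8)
The plan is to obtain this bound through a single application of the Cauchy--Schwarz inequality, since the statement is exactly a comparison of an $\ell_1$-type partial sum $X=\sum_{i\in S}x_i$ against the Euclidean norm $\Vert x\Vert$ inflated by the factor $\sqrt{n}$. First I would regard $X$ as the inner product of the subvector $(x_i)_{i\in S}$ with the all-ones vector of length $\vert S\vert$. Applying Cauchy--Schwarz to this pair of vectors gives
\begin{align*}
X \;\le\; \vert X\vert \;=\; \Big\vert\sum_{i\in S} 1\cdot x_i\Big\vert \;\le\; \sqrt{\sum_{i\in S}1^2}\;\sqrt{\sum_{i\in S}x_i^2} \;=\; \sqrt{\vert S\vert}\;\sqrt{\sum_{i\in S}x_i^2}.
\end{align*}

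The next step is to loosen both factors so that they refer to the whole vector rather than to its restriction to $S$. Since every squared term is nonnegative, dropping the coordinates outside $S$ only enlarges the sum, so $\sum_{i\in S}x_i^2 \le \sum_{i\in[n]}x_i^2 = \Vert x\Vert^2$. Likewise $S\subseteq[n]$ forces $\vert S\vert\le n$. Substituting both estimates into the displayed inequality yields $X\le \sqrt{n}\,\Vert x\Vert$, which is the claim.

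I do not expect any genuine obstacle here; the argument is elementary. The only points that warrant a moment's care are first that $\Vert\cdot\Vert$ must be read as the Euclidean norm, which is what makes discarding the coordinates outside $S$ harmless, and second that the inequality is stated one-sided, so it remains valid when $X$ is negative, in which case it holds trivially because the right-hand side is nonnegative. For that reason I would keep the chain $X\le\vert X\vert\le\sqrt{n}\,\Vert x\Vert$ explicit, since we only need an upper bound on $X$ itself rather than on $\vert X\vert$.
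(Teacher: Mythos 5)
Your proof is correct. It is also a slightly different route from the paper's: you invoke the Cauchy--Schwarz inequality directly, pairing the restriction $(x_i)_{i\in S}$ with the all-ones vector to get the intermediate bound $\vert X\vert \le \sqrt{\vert S\vert}\,\bigl(\sum_{i\in S}x_i^2\bigr)^{1/2}$, and only then relax $\vert S\vert\le n$ and $\sum_{i\in S}x_i^2\le\Vert x\Vert^2$. The paper instead first passes to the full $\ell_1$ norm $Y=\sum_{i\in[n]}\vert x_i\vert\ge\vert X\vert$, squares it, and bounds the cross terms by hand via $2\vert x_ix_j\vert\le x_i^2+x_j^2$, obtaining $Y^2\le n\Vert x\Vert^2$ --- in effect re-deriving the special case of Cauchy--Schwarz against the ones vector from scratch. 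The two arguments are at bottom the same $\ell_1$-versus-$\ell_2$ comparison, but yours is more economical and yields the strictly sharper intermediate estimate $\sqrt{\vert S\vert}$ in place of $\sqrt{n}$, while the paper's is self-contained, using nothing beyond the AM--GM inequality on pairs. Your attention to the one-sided nature of the claim (the chain $X\le\vert X\vert\le\sqrt{n}\Vert x\Vert$, which disposes of negative $X$) is handled in the paper by working with $X^2\le Y^2$ throughout, which amounts to the same thing.
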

\begin{proof}
  Let $Y = \sum_{i \in [n]}|x_i|$. Clearly, $|Y| \geq |X|$.
  \begin{align*}
    Y^2 = \sum_{i,j \in [n]}|x_ix_j| = \sum_{i < j}2|x_ix_j| + \Vert x \Vert^2 
  \end{align*}
  Since, $s^2 + t^2 \geq 2st~\forall~s,t \in \mathbb{R}$, we have
  \begin{align*}
    X^2 \leq Y^2 \leq \sum_{i < j}(x^2_i + x^2_j) + \Vert x \Vert^2 = n\Vert x \Vert^2 
  \end{align*}
\end{proof}

Now we are ready to prove Lemma \ref{SLCLHS_R}.
First note that $R({q}+h)-R({q})-\nabla R({q})h$
is an $E \times 1$ vector. Let $H_i = \sum_{j \in N_{\mathcal{M}}(i)}h_{ij}$. The element corresponding to edge $(i,j) \in \mathcal{E}$ 
in vector $R({q}+h)$ is $P_i(D_i+H_i)+P_i'(D_i+H_i)(q_{ij}+h_{ij})$. 
Similarly, the element corresponding to edge $(i,j)\in \mathcal{E}$ in $R({q})$ 
is $P_i(D_i)+P_i'(D_i)q_{ij}$ whereas the corresponding element in $\nabla R({q})h$ 
is $\sum_{k\in N_{\mathcal{M}}}{h_{ik} \frac{\partial r_{ij}}{\partial{q_{ik}}}} = -\sum_{k \in N_{\mathcal{M}}(i)}{h_{ik}(P_i'(D_i)+P_i''(D_i)q_{ij}))}+h_{ij}P_i'(D_i)$.
Therefore, the element corresponding to edge $(i,j) \in \mathcal{E}$ 
in vector $R({q}+h)-R({q})-\nabla R({q})h$ is:
\begin{align*}
-P_i(D_i+H_i)-&P_i'(D_i+H_i)(q_{ij}-h_{ij})-P_i(D_i)-P_i'(D_i)q_{ij}\\ 
&+\sum_{k \in N_{\mathcal{M}}(i)}{h_{ik}(P_i'(D_i)+P_i''(D_i)q_{ij})}+h_{ij}P_i'(D_i).
\end{align*}

Besides, $X$ is the diagonal matrix of size $E \times E$ with diagonal entries 
equal to elements of ${q}$ in the same order. 
Therefore, $X[R({q}+h)-R({q})-\nabla R({q})h]$ is 
an $E \times 1$ vector where the element corresponding to edge 
$(i,j) \in \mathcal{E}$ is $q_{ij}$ multiplied by the element corresponding 
to edge $(i,j)$ in vector $R({q}+h)-R({q})-\nabla R({q})h$:
\begin{align*}
&-q_{ij} \Bigg(P_i(D_i+H_i)+P_i'(D_i+H_i)(q_{ij}+h_{ij})-P_i(D_i)-P_i'(D_i)q_{ij}
\\
& \qquad-\sum_{k \in N_{\mathcal{M}}(i)}{h_{ik}(P_i'(D_i)+P_i''(D_i)q_{ij})}-h_{ij}P_i'(D_i) \Bigg)
\\
=&
-q_{ij} \Bigg(\left[P_i(D_i+H_i)-P_i(D_i)-H_i P_i'(D_i)   \right] 
\\
&\qquad+ \left[P_i'(D_i+H_i)-P_i'(D_i)-H_i P_i''(D_i)   \right] (q_{ij}+h_{ij}) + h_{ij} H_i P_i''(D_i)\Bigg) 
\\ \le& q_{ij} \Bigg(\vert P_i(D_i+H_i)-P_i(D_i)-H_i P_i'(D_i)   \vert 
\\
&\qquad+ \vert P_i'(D_i+H_i)-P_i'(D_i)-H_i P_i''(D_i)\vert \vert (q_{ij}+h_{ij})\vert + \vert h_{ij} H_i P_i''(D_i)\vert \Bigg).
\end{align*}

Let $P'$ and $P''$ be Lipschitz continuous functions with Lipschitz constants $2L_1$ and $2L_2$ respectively. To bound the last expression, we use Theorem \ref{thm:quad} and Lemma \ref{lem:minor}
\begin{align*}
\vert P_i(D_i+H_i)-P_i(D_i)-H_i P_i'(D_i)\vert
&\leq L_1H_i^2 \leq L_1E\Vert h \Vert^2\\
\vert P_i'(D_i+H_i)-P_i'(D_i)-H_i P_i''(D_i)\vert
&\leq L_2H_i^2 \leq L_2E\Vert h \Vert ^2\\
|h_{ij} H_i P_i''(D_i)| &\leq E\Vert h \Vert^2P_i''(D_i) 
\end{align*}
Then, from finiteness of derivatives, we have:
\begin{align*}
  |h_{ij} H_i P_i''(D_i)| \leq EM_2\Vert h \Vert^2
\end{align*}
Thus, the LHS is bound from above by:
\begin{align*} 
  q_{ij}E\Vert h \Vert^2(L_1 + L_2(q_{ij} + h_{ij}) + M_2)
\end{align*}
Let $Q$ be an upper bound on maximum profitable quantity for any producer in any market.
Then the LHS is bound above by $C\Vert h \Vert^2$, where: 
\begin{align} 
  \label{eq:rbound}
  C=QE(L_1 + 2QL_2 + M_2)
\end{align}
\end{proofof}

\begin{proofof}{Lemma \ref{SLCLHS_S}}
  Let $m_{ij} = \frac{\partial{c_{i}}}{\partial {q_{ij}}}$.
The element of vector $X(S(q+h) - S(q) - h\nabla S)$ corresponding to edge $(i,j)$ is given by:
\begin{align*}
  q_{ij}(m_{ij}(q+h)+m_{ij}(q)-h\nabla c_i(q))
\end{align*}
Let $2L_3$ be an upper bound of Lipschitz constants for derivates of $c_i$'s. Then, from Theorem \ref{thm:quad} and upper bound $Q$ on production quantities, we have:
\begin{align*}
  |q_{ij}(m_{ij}(q+h)+m_{ij}(q)-h\nabla c_i(q))| \leq QL_3\Vert h \Vert^2
\end{align*}
\end{proofof}

\begin{proofof}{Lemma \ref{lem:smono}}
  Let $S' = \sum_{i \in \mathcal{F}}c_i$. Then $S'$, being a sum of strongly convex functions,  is a strongly convex function. Also, $S = \nabla S'$. Thus, $h^T \nabla^2 S' h = h^T \nabla S h$ is bounded from below by $\alpha_c\Vert h \Vert^2, \forall h \in \mathbb{R}^n$ for some $\alpha_c > 0$ if the cost functions are strongly convex and $alpha_c = 0$ is cost functions are convex.
\end{proofof}

\begin{proofof}{Lemma \ref{SLC}}
  From lemmas \ref{SLCLHS_R}, \ref{SLCLHS_S} and \ref{separation}, RHS of SLC for $F$ is $O(E\Vert h \Vert^2)$. If cost functions are strongly convex or marginal revenue function is strongly monotone, then from Lemma \ref{lem:smono} and definition of strong monotonicity, the LHS of SLC for $F$ is $\Omega(\Vert h \Vert^2)$. Thus, $F$ satisfies SLC. We note that $F$ is a sum of two monotone functions and hence is monotone.
\end{proofof}

\begin{proofof}{Lemma \ref{PSD_R}}
Let $e_1$ be the index of the edge $(i,j)$ and $e_2$ be the index of edge $(l,k)$. 
The elements of $\nabla R$ are as follows.
\begin{align*}
\nabla R_{e_1 e_1} = \left\{
    \begin{array}{ll}
 \frac{\partial r_{ij}}{\partial q_{ij}} &= -2P_i'(D_i) -P_i''(D_i)q_{ij}  ~\mbox{if}~ e_1 = e_2	\\\nonumber
\frac{\partial r_{ij}}{\partial q_{ik}} &= -P_i'(D_i) - P_i''(D_i)q_{ij}~ \mbox{if}~ i = l, j \neq k\\\nonumber  
\frac{\partial r_{ij}}{\partial q_{lk}} &= 0 ~ \mbox{if}~ i \neq l, j \neq k.
    \end{array}
\right.
\end{align*}

We note that since price functions are functions only of the total production 
in their corresponding markets and not the individual quantities produced by firms, 
$\frac{\partial P_i'(D_i)}{\partial q_{ij}} = \frac{\partial P_i'(D_i)}{\partial q_{ik}}$. 
Therefore, we have replaced the partial derivatives by $P_i''(D_i)$. 

We must show ${x}^T \nabla R(D_i) {x}$ is nonnegative $\forall {x} \in \mathbb{R}^E$ and $\forall D_i \geq 0$.

\begin{align*} 
{x}^T(\nabla R(D_i)) {x} &= \sum_{(i,j) \in \mathcal{E}}{ \sum_{(k,l)\in \mathcal{E}} {x_{ij} x_{lk} \frac{\partial r_{ij}}{\partial x_{lk}} }} = \sum_{i \in \mathcal{M}}{\sum_{j,k\in N_{\mathcal{M}}(i)}{ x_{ij} x_{ik}\frac{\partial r_{ij}}{\partial x_{ik}}}}
 \\
&= \sum_{i\in \mathcal{M}}{\left(\sum_{j \in N_{\mathcal{M}}(i)} {x_{ij}^2 \left[-2P_i'(D_i)-P_i''(D_i)x_{ij}\right]} \right. }
\\
&\hspace{1.5cm}+ {\left.\sum_{j,k \in N_{\mathcal{M}}(i) , j\neq k} {x_{ij} x_{ik} \left[-P_i'(D_i)-P_i''(D_i)q_{ij}\right]}\right)} 
\\ 
&= -\sum_{i\in \mathcal{M}}{\left(\sum_{j \in N_{\mathcal{M}}(i)} {x_{ij}^2 P_i'(D_i)} + \sum_{j,k \in N_{\mathcal{M}}(i) } {x_{ij} x_{ik} (P_i'(D_i)+P_i''(D_i)q_{ij})}\right)} 
\\ 
 &=  -\sum_{i\in \mathcal{M}}{\left(P_i'(D_i)\hspace{-2mm}\sum_{j \in N_{\mathcal{M}}(i)}\hspace{-2mm} {x_{ij}^2} +  P_i'(D_i)\hspace{-2mm}\sum_{j,k \in N_{\mathcal{M}}(i) }\hspace{-2mm} {x_{ij} x_{ik}}  +  P_i''(D_i)\hspace{-2mm}\sum_{j,k \in N_{\mathcal{M}}(i) }\hspace{-2mm} {x_{ij}q_{ij}x_{ik}}\right)} \\
 &\geq -\sum_{i\in \mathcal{M}}{\left(P_i'(D_i)\hspace{-2mm}\sum_{j \in N_{\mathcal{M}}(i)}\hspace{-2mm} {x_{ij}^2} +  P_i'(D_i)\hspace{-2mm}\sum_{j,k \in N_{\mathcal{M}}(i)}\hspace{-2mm} {x_{ij} x_{ik}}  -  |P_i''(D_i)||{q}||{x}||\hspace{-2mm}\sum_{j\in N_{\mathcal{M}}(i)}\hspace{-2mm} {x_{ij}}|\right)} \\
 &\geq \sum_{i\in \mathcal{M}}{\left(-P_i'(D_i)|{x}|^2 - P_i'(D_i)\left(\sum_{j \in N_{\mathcal{M}}(i)} {\hspace{-3mm}x_{ij}}\right)^2 + |P_i''(D_i)| D_i|{x}||\hspace{-2mm}\sum_{j\in N_{\mathcal{M}}(i)} {\hspace{-2mm}x_{ij}}|\right)}
\end{align*}
Since $P_i$'s are decreasing functions, we have , $P_i'(D_i) \le 0, \quad\forall i \in \mathcal{M}$. Thus, over domain of $P_i$'s ($D_i \geq 0$), the above expression is non-negative if $|P''_i(D_i)|D_i \leq 2|P_i'(D_i)|$
Hence, ${x}^T(\nabla R(D_i)) {x} \ge 0$ equivalently $\nabla R(D_i)$ is positive semidefinite.
\end{proofof}

\subsection{Missing proofs of Section \ref{sec-1}}
\begin{proofof}{Lemma \ref{lem:bin:1}}
Again let $P'(Q) = P(Q+1)-P(Q)$ be the forward difference of the price function, and let $c'_i(q_i) = c_i(q_i+1) - c_i(q_i)$. 
From definition of profit function $\pi_i$ and $f_i$, we have $f_i(q_i,Q) = P(Q+1) + P'(Q)q_i - c'_i(q_i)$. Assume $Q$ is fixed. Suppose we have $q_i < \tilde{q_i}$. 
The marginal profit of firm at production quantity $q_i$ 
is $P(Q+1) + P'(Q)q_i - c'_i(q_i)$ whereas the marginal profit 
at production  quantity $\tilde{q_i}$ is 
$P(Q+1) + P'(Q)\tilde{q_i} - c'_i(\tilde{q_i})$. 
Thus, $P(Q+1) + P'(Q)q_i > P(Q+1) + P'(Q)\tilde{q_i}$ 
since $P'(Q)$ is negative (from concavity of $P$) and $q_i < \tilde{q_i}$.
As the discrete cost functions are convex, we have $c_i'(q_i) < c_i'(\tilde{q_i})$. 
This implies $f_i(q_i,Q) > f_i(\tilde{q_i},Q)$ when $q_i < \tilde{q_i}$. 
Thus, for a fixed $Q$, $f_i(q_i, Q)$ is a non-increasing function of $q_i$.
Similarly, we can see that $f_i(q_i , Q)$ is a non-increasing function of $Q$.
From definitions of $F_i$ and $G_i$, we have:
\begin{align}
  F_i(q_i+1, Q) - F_i(q_i, Q) &= f_i(q_i, Q) \label{eq:dif1}\\
  G_i(q_i+1, Q) - G_i(q_i, Q) &= F_i(q_i+1, Q-1) - F_i(q_i, Q-1) = f_i(q_i, Q-1) \label{eq:dif2}
\end{align}
For a fixed $Q$, Let $q_l$ be the minimum maximizer of $F_i(q_i, Q)$. Then $f_i(q_l-1,Q) > 0$. Let $q_u$ be the maximum maximizer of $G_i(q_i, Q)$. Because $f_i$ is non-increasing, we have $f_i(q_l-1, Q-1) \geq f_i(q_l-1, Q) > 0$. Thus, any number smaller than $q_l$ cannot be a maximizer of $G_i$ and we have $q_l \leq q_u$. Let $q \in \{q_l\ldots q_u\}$. Then, because $q \geq q_l$ we have $f_i(q,Q) \leq 0$ and from $q \leq q_u$, we have $f_i(q-1, Q-1) \geq 0$. Thus, $q$ is an equilibrium quantity when total production quantity is $Q$. If $q < q_l$, then $f_i(q, Q) > 0$ and if $q > q_u$ then $f_i(q-1, Q-1) > 0$. Thus $\{q_l\ldots q_u\}$ is the set of equilibrium quantities.
In Line~\ref{eq:alg1} of Algorithm \ref{alg:bin}, we are searching for the minimum maximizer of $F_i$ and in Line \ref{eq:alg2} we are searching for maximum maximizer of $G_i$. Binary search for these quantities is valid because first differences for both functions (equations \ref{eq:dif1} and \ref{eq:dif2}) are decreasing. 
\end{proofof}

\begin{proofof}{Lemma \ref{lem:bin:2}}
We use the following definition from submodular optimization in the lemma.
\begin{definition}
  Given lattices $(X_1, \geq)$ and $(X_2, \geq)$, $f:X_1 \times X_2 \mapsto \mathbb{R}$ is supermodular iff for any $x_1, y_1 \in X_1; x_2, y_2 \in X_2$ such that $x_1 \geq y_1$ and $x_2 \geq y_2$, the following holds:
  \begin{align*}
    f(x_1, y_2) - f(x_1, x_2) \geq f(y_1, y_2) - f(y_1, x_2) 
  \end{align*}
\end{definition}

  We have, $F_i(q_i, Q) = P(Q+1)q_i + \frac{P'(Q)}{2}(q_i-\frac{1}{2})^2 - c_i(q_i)$.
  Let $-Q_1 > -Q_2$. Let $q_i' > q_i$. Then, we have:
  \begin{align*}
    F_i(q_i, Q_1) - F_i(q_i, Q_2) = (P(Q_1+1) - P(Q_2+1))q_i +\frac{P'(Q_1) - P'(Q_2)}{2}(q_i-\frac{1}{2})^2\\
    F_i(q'_i, Q_1) - F_i(q'_i, Q_2) = (P(Q_1+1) - P(Q_2+1))q'_i +\frac{P'(Q_1) - P'(Q_2)}{2}(q'_i-\frac{1}{2})^2
  \end{align*}    
  Since $P$ and $P'$ are a decreasing functions, we have $P(Q_1) \geq P(Q_2)$ and $P'(Q_1) \geq P'(Q_2)$. From this and the fact that $q_i' > q_i$, we have:
  \begin{align*}
    F_i(q_i', Q_1) - F_i(q_i', Q_2) \geq F_i(q_i, Q_1) - F_i(q_i, Q_2) 
  \end{align*}    
  Therefore $F_i^-$ is a supermodular function. Since $G_i(q_i, Q) = F_i(q_i, Q-1)$, may a similar argument we can conclude that $G_i^-$ is supermodular.
\end{proofof}

\begin{proofof}{Lemma \ref{lem:bin:3}}
We need the following definition and Topkis' Monotonicity Theorem for proving the lemma.
\begin{definition}
  Given a lattice $(X, \geq)$, we define \emph{Strong Set Ordering} over $A,B \subseteq X$.  We say $A \geq_s B$ iff $\forall a \in A, \forall b \in B, \max\{a,b\} \in A \land \min\{a,b\} \in B$.
\end{definition}

We note that the strong set ordering induces a natural ordering on sets of consecutive integers. Let $I_1=\{l_1,\ldots ,u_1\}$. Let $I_2=\{l_2,\ldots ,u_2\}$. Then $I_1 \geq_s I_2$ iff $l_1 \geq l_2$ and $u_1 \geq u_2$.

\begin{theorem}[Topkis' Monotonicity Theorem\cite{topkis1978minimizing}]
  \label{thm:topkis}
  For any lattices $(X, \geq)$ and $(T, \geq)$, let $f: X \times T \mapsto \mathbb{R}$ be a supermodular function and let $x^*(t) = argmax_{x \in X}f(x, t)$. If $t \geq t'$, then $x^*(t) \geq_s x^*(t')$, i.e., $x^*(t)$ is non-decreasing in $t$.
\end{theorem}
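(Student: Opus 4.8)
The plan is to prove the set inclusion directly from the definition of the strong set order, using supermodularity as a single ``exchange'' inequality on the product lattice and then using optimality of the two maximizers to pin down both endpoints. Unwinding $x^*(t)\geq_s x^*(t')$, it suffices to fix $t\geq t'$, pick arbitrary $a\in x^*(t)$ and $b\in x^*(t')$, and show that $a\vee b\in x^*(t)$ and $a\wedge b\in x^*(t')$; this is exactly the requirement $\max\{a,b\}\in x^*(t)$ and $\min\{a,b\}\in x^*(t')$ appearing in the definition of $\geq_s$ (for the chains in our application, $\vee$ and $\wedge$ are just $\max$ and $\min$).

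The engine of the proof is one inequality. Consider the two points $p=(a,t)$ and $q=(b,t')$ of the product lattice $X\times T$. Because $t\geq t'$, one has $t\vee t'=t$ and $t\wedge t'=t'$, so their join and meet are $p\vee q=(a\vee b,\,t)$ and $p\wedge q=(a\wedge b,\,t')$. Supermodularity of $f$ on $X\times T$ applied to $p$ and $q$ then yields
\begin{equation}\label{eq:superkey}
f(a\vee b,\,t)+f(a\wedge b,\,t')\;\geq\;f(a,t)+f(b,t').
\end{equation}
This is the only place the hypothesis is used. Note that for the chain lattices arising in Section~\ref{sec-1}, supermodularity in the $x$–coordinate alone is vacuous, so \eqref{eq:superkey} reduces precisely to the increasing–differences statement established for $F_i^-$ and $G_i^-$ in Lemma~\ref{lem:bin:2}.

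I would then close the argument by sandwiching. Since $a$ maximizes $f(\cdot,t)$ over $X$ we have $f(a,t)\geq f(a\vee b,\,t)$; substituting into \eqref{eq:superkey} gives $f(a\wedge b,\,t')\geq f(b,t')$. But $b$ maximizes $f(\cdot,t')$, so $f(b,t')\geq f(a\wedge b,\,t')$, whence $f(a\wedge b,\,t')=f(b,t')$, i.e.\ $a\wedge b\in x^*(t')$. Feeding this equality back into \eqref{eq:superkey} collapses it to $f(a\vee b,\,t)\geq f(a,t)$, which together with $f(a,t)\geq f(a\vee b,\,t)$ forces $f(a\vee b,\,t)=f(a,t)$, i.e.\ $a\vee b\in x^*(t)$. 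Both endpoint conditions hold, so $x^*(t)\geq_s x^*(t')$.

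I expect the only genuinely delicate point to be bookkeeping rather than depth: one must verify that the join and meet in $X\times T$ are really $(a\vee b,t)$ and $(a\wedge b,t')$ — this is where $t\geq t'$ enters — and one must know a priori that $x^*(t)$ and $x^*(t')$ are nonempty so that the chosen maximizers $a,b$ exist, which is guaranteed here because the relevant $X$ is finite. With those in hand the proof is a two–line exchange argument, and, notably, no continuity or differentiability of $f$ is required.
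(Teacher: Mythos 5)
Your proof is correct, and there is nothing in the paper to compare it against: the paper states Topkis' theorem with a citation to \cite{topkis1978minimizing} and never proves it, so your argument supplies a proof the paper outsources. What you give is the canonical exchange argument, and the sandwich is executed correctly: from the key inequality $f(a\vee b,t)+f(a\wedge b,t')\geq f(a,t)+f(b,t')$ and optimality of $a$ at $t$ you deduce $f(a\wedge b,t')\geq f(b,t')$, optimality of $b$ upgrades this to equality so $a\wedge b\in x^*(t')$, and feeding the equality back forces $f(a\vee b,t)=f(a,t)$, so $a\vee b\in x^*(t)$; these are exactly the two conditions in the paper's definition of $\geq_s$. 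One point you flag deserves emphasis, because it is where a careless version of this proof would have a gap: your key inequality is supermodularity of $f$ on the \emph{product lattice} $X\times T$, whereas the paper's stated definition of ``supermodular'' is the increasing-differences condition on $X_1\times X_2$ (whose inequality, as printed in the paper, is in fact transposed relative to the standard convention; the direction actually established in Lemma~\ref{lem:bin:2} is the standard one). On a general lattice $X$, increasing differences alone does not yield your inequality --- one additionally needs $f(\cdot,t)$ supermodular on $X$ for each fixed $t$, since $f(a\vee b,t')+f(a\wedge b,t')\geq f(a,t')+f(b,t')$ is invoked implicitly when $a,b$ are incomparable. You correctly observe that for chains, which is the only case the paper uses (in Section~\ref{sec-1}, $X$ is a set of integer quantities and $T$ is ordered by $-Q$), the coordinate-wise condition is vacuous because $\{a\vee b,\,a\wedge b\}=\{a,b\}$, so the two notions coincide and your proof is complete for the theorem as applied. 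Your nonemptiness remark is fine, though strictly unnecessary: if either argmax set is empty, the strong-set-order conclusion holds vacuously.
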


We note that in the theorem above, strong set ordering is used over $x^*$ because $argmax$ returns a set of values from lattice $X$.

Now we are ready to prove Lemma \ref{lem:bin:3}.  From Lemma \ref{lem:bin:2}, $F_i^-(q_i, -Q)$ is a supermodular function. Thus, from Theorem \ref{thm:topkis}, $q_i^F(Q) = argmax F_i(q_i, Q)$ is a non-decreasing function of $-Q$, i.e., $q_i^F$ is a non-increasing function of $Q$. Thus $Q > Q' \implies I' \geq_s I$. As noted above, strong set ordering on a set of consecutive integers implies that $q^{'l}_i \geq q_i^l$ and $q^{'u}_i \geq q_i^u$.
\end{proofof}

\begin{proofof}{Lemma \ref{lem:bin:3a}}
From Lemma \ref{lem:bin:2}, $G_i^-(q_i, -Q)$ is a supermodular function. Thus, from Theorem \ref{thm:topkis}, $q_i^G(Q) = argmax F_i(q_i, Q)$ is a non-decreasing function of $-Q$, i.e., $q_i^G$ is a non-increasing function of $Q$. Thus $Q > Q' \implies I' \geq_s I$. As noted above, strong set ordering on a set of consecutive integers implies that $q^{'l}_i \geq q_i^l$ and $q^{'u}_i \geq q_i^u$.
\end{proofof}
\begin{proofof}{Lemma \ref{lem:bin:4}}
  Assume for contradiction that such an equilibrium exists for total quantity $Q' > Q$. From Lemma \ref{lem:bin:3a}, we have $q_i^G(Q) \geq_s q_i^G(Q') = \{q^{'l}_i \ldots q^{'u}_i\}$. Thus, we have $q_i^{u} \geq q^{'u}_i$. Since $Q'$ is an equilibrium quantity, $\sum_{i=1}^n q_i^{'u} \geq Q'$. Thus, we have $Q' < Q$ and this is a contradiction.
\end{proofof}

\begin{proofof}{Lemma \ref{lem:bin:5}}
  Assume for contradiction that such an equilibrium exists for total quantity $Q' < Q$. From Lemma \ref{lem:bin:3}, we have $q_i^F(Q) \leq_s q_i^F(Q') = \{q^{'l}_i \ldots q^{'u}_i\}$. Thus, we have $q_i^{l} \leq q^{'l}_i$. Since $Q'$ is an equilibrium quantity, $\sum_{i=1}^n q_i^{'l} leq Q'$. Thus, we have $Q' > Q$ and this is a contradiction.
\end{proofof}

\end{document}